\newcommand{\beginappendix}{              
    \renewcommand{\thesection}{\Alph{section}}
    \renewcommand{\thesubsection}{\Alph{section}.\arabic{subsection}}
    \setcounter{table}{0}
    \renewcommand{\thetable}{A.\arabic{table}}
    \setcounter{figure}{0}
    \renewcommand{\thefigure}{A.\arabic{figure}}
}
\definecolor{DarkGreen}{RGB}{1,50,32}
\newtheorem{lemma}{Lemma} 
\newtheorem{theorem}{Theorem}
\newtheorem{proposition}{Proposition}
\newtheorem{remark}{Remark}
\DeclarePairedDelimiter\abs{\lvert}{\rvert}
\DeclarePairedDelimiter\norm{\lVert}{\rVert}
\newcommand{\SUM}[2]{\sum_{#1}^{#2}}
\newcommand{\DERIV}[2]{\nabla_{#2} {\, #1}}
\newcommand{\DDERIV}[2]{\nabla_{#2}^2 {\, #1}}
\def\hat{\widehat}
\def\tilde{\widetilde}
\def\pr{\mathrm{pr}}
\def\tr{\mathrm{tr}}
\def\det{\mathrm{det}}
\def\E{\mathbb{E}}
\def\Cov{\mathrm{Cov}}
\def\sp{\mathbb{P}}
\def\bzero{\bm{0}}
\def\hpi{\hat\pi}
\def\bb{\bm{b}}
\def\bl{\bm\ell}
\def\hbeta{\hat\beta}
\def\bbeta{\bm\beta}
\def\hbbeta{\hat{\bm\beta}}
\def\tbbeta{\tilde{\bm\beta}}
\def\bpsi{\bm\psi}
\def\hbpsi{\hat{\bpsi}}
\def\tbpsi{\tilde{\bpsi}}
\def\hwm{\hat{m}}
\def\twm{\tilde{m}}
\def\bmu{\bm\mu}
\def\hmu{\hat\mu}
\def\tmu{\tilde\mu}
\def\hbmu{\hat{\bm\mu}}
\def\tbmu{\tilde{\bm\mu}}
\def\hh{\hat{h}}
\def\th{\tilde{h}}
\def\barh{\bar{h}}
\def\hdelta{\hat\delta}
\def\tdelta{\tilde\delta}
\def\hU{\hat{U}}
\def\tU{\tilde{U}}
\def\hB{\hat{B}}
\def\eB{\bar{B}}
\def\tB{\tilde{B}}
\begin{document}

\title{\bf Bias reduction in g-computation for covariate adjustment in randomized clinical trials}

\author[1]{Xin Zhang\orcidlink{0000-0001-9894-2634}\thanks{Corresponding author. Email: \href{xin.zhang6@pfizer.com}{xin.zhang6@pfizer.com}.}}
\author[2]{Lin Liu\orcidlink{0000-0002-9883-7962}\thanks{Email: \href{linliu@sjtu.edu.cn}{linliu@sjtu.edu.cn}}}
\author[1,3]{Haitao Chu\orcidlink{0000-0003-0932-598X}\thanks{Email: \href{chux0051@umn.edu}{chux0051@umn.edu}}}

\affil[1]{Data Sciences and Analytics, Pfizer Inc.}
\affil[2]{Institute of Natural Sciences, MOE--LSC, School of Mathematical Sciences, CMA--Shanghai, and SJTU--Yale Joint Center for Biostatistics and Data Science, Shanghai Jiao Tong University}
\affil[3]{Division of Biostatistics and Health Data Science, University of Minnesota Twin Cities}

\date{}

\maketitle

\begin{abstract}
G-computation is a powerful method for estimating unconditional treatment effects with covariate adjustment in randomized clinical trials. It typically relies on fitting canonical generalized linear models. However, this could be problematic when the sample size or event number is small relative to the number of covariates. Common issues include the underestimation of the variance and the potential nonexistence of maximum likelihood estimators. Bias reduction methods are commonly employed to address these issues, including Firth correction, which guarantees the existence of corresponding estimates. Yet, their application within g-computation remains underexplored. In this article, we analyze the asymptotic bias of g-computation estimators and propose a novel bias-reduction method that improves both estimation and inference. Our approach performs bias correction via generalized Oaxaca-Blinder estimators, and thus the resulting estimators are guaranteed to be bounded. The proposed debiased estimators use slightly modified versions of maximum likelihood or Firth correction estimators for nuisance parameters. We also introduce a simple small-sample bias adjustment for variance estimation to improve finite-sample inference validity. Through extensive simulations, we demonstrate that our proposed method offers superior finite-sample performance, effectively addressing the bias-efficiency tradeoff. Finally, we illustrate its practical utility by reanalyzing a completed randomized clinical trial.  In this example, our method improves precision in a small subgroup analysis for which the standard method fails to fit the regression model. 
\end{abstract}

\noindent\textbf{Keywords}: bias reduction, covariate adjustment, Firth correction, g-computation, generalized Oaxaca-Blinder estimator, small-sample bias adjustment

\newpage

\doublespacing

\section{Introduction} \label{sec:intro}

In the analysis of randomized controlled trials (RCTs), it is often desirable to adjust for baseline covariates by fitting a regression model \citep{hernandez2004covariate,hernandez2006randomized,lee2022benefits}. When covariates are strong prognostic factors of the outcome, covariate adjustment generally improves statistical efficiency, as repeatedly shown in simulation studies \citep[e.g.,][]{kahan2014risks}. Regulatory guidance has also detailed practical considerations for regression-based adjustment \citep{ema2015, fda2023}.

Among regression-based adjustment methods, g-computation provides a principled approach to estimating unconditional treatment effects in RCTs \citep{robins1986new,ge2011covariate}. G-computation has attracted growing attention among trial statisticians \citep{benkeser2021improving,van2024covariate} and is endorsed in recent regulatory guidance \citep{fda2023}. Closely related ideas have long been used in epidemiology (regression standardization), survey sampling (generalized regression) and the missing-data literature (``regression imputation''). In practice, parametric models, most commonly canonical generalized linear models (GLMs), are used as working outcome models \citep{benkeser2021improving, van2024covariate} and are usually fit by maximum likelihood estimation (MLE), so the performance of the g-computation estimator depends on the goodness of fit of the GLM working model.

However, MLE of GLMs can be unstable or biased when the sample size or number of events is small relative to the number of covariates. Such scenarios are not uncommon in RCTs. For example, early-phase studies often have limited sample sizes, or placebo response rates may be close to zero. In these settings, g-computation with MLE tends to underestimate variance, inflating Type I error and reducing interval coverage, thereby undermining the validity of statistical inference. These issues have been documented empirically \citep[e.g.,][]{tackney2023comparison}. In particular, for binary outcomes, small event numbers can lead to data separation in logistic regression \citep{albert1984existence}. For instance, this can happen when outcomes within a particular stratum are all 1s or all 0s. Under data separation, MLE does not exist, and the model fit will be highly unstable in finite samples. Similar phenomena arise for count outcomes fit by Poisson log-linear models \citep{joshi2022solutions}. These issues in outcome model fit propagate to g-computation, yielding unreliable treatment effect estimates.

In the literature, several bias reduction methods have been proposed for parametric models (including GLMs) to improve either point \citep{cox1968general,cordeiro1991bias} or variance estimation \citep{mancl2001covariance,fay2004small}. These methods address small-sample bias by removing all or part of the first-order ($n^{-1}$) bias inherent in the MLE. When MLEs do not exists, a particular method called Firth correction (FC) provides a principled approach to resolve this issue \citep{kosmidis2020jeffreys}. FC has been applied to various GLMs \citep[e.g.,][]{heinze2002solution,joshi2022solutions}. While it was initially developed to correct for the first-order bias of MLEs \citep{firth1993bias}, it also solves the data separation issue and achieves better finite-sample performance in the presence of sparse data \citep{sur2019modern, kosmidis2020jeffreys, joshi2022solutions}.

It is thus appealing to use FC, instead of the MLE, to estimate nuisance parameters in working models for g-computation.  The consistency of the resulting estimator and the validity of statistical inference can be justified within the framework of $M$-estimation (Section~\ref{sec:gc-fc}). As can be seen in a simulation experiment (Figure~\ref{fig:gc_fc}), the FC-based g-computation estimator has better interval coverage probabilities. Unfortunately, such estimators would introduce biases leading to non-negligible underestimation of treatment effects, and thus harm the power of hypothesis testing.

To address the two key challenges in g-computation discussed so far--namely, the underestimation of variability for statistical inference due to overfitting (especially with MLE), and the non-negligible bias introduced by FC--it is indispensable to gain a deeper understanding of the asymptotic bias in g-computation. In this article, we characterize the first-order biases of these estimators and propose a novel bias-reduction approach that improves both estimation accuracy and inference validity. The proposed approach is developed under simple randomization and does not rely on correctly specified working models. The resulting treatment effect estimators (e.g., risk differences [RD], risk ratios [RR]) are the plugin estimators with the debiased estimators of treatment-specific means, which are guaranteed to be bounded. Those debiased estimators take the form of generalized Oaxaca-Blinder (gOB) estimators \citep{guo2023generalized}, leveraging nuisance parameters obtained from slightly modified versions of MLE or FC. The corresponding variance estimators are adjusted for small-sample bias using leverage scores of the fitted working models.

The rest of this article is organized as follows. Section~\ref{sec:pre} introduces g-computation for covariate adjustment in RCTs and bias-reduction methods for GLMs. In Section~\ref{sec:gc-fc}, we evaluate the use of FC to estimate nuisance parameters. The asymptotic biases of g-computation estimators (with MLE or FC) are characterized in Section~\ref{sec:theory}. The proposed debiased estimators (with MLE or FC) for both point and variance estimation are provided in Section~\ref{sec:method}, and their finite-sample performance is evaluated by simulation experiments. In Section~\ref{sec:app}, we apply the proposed method to a completed phase 3 randomized trial. Section~\ref{sec:disc} concludes our article with a discussion of limitations. All technical derivations can be found in Appendix~\ref{append:proof}, and the R code for replication purposes is included in Appendix~\ref{append:code}.

\section{Preliminaries} \label{sec:pre}

We begin by introducing the notations and assumptions. Here we denote the baseline covariates used for adjustment by $W_i$, and the treatment arm indicator by $A_i$, which takes values from $1$ (the reference arm) to $k$ (for $k-1$ tested arms). The true randomization probability for each arm is $\pi_a \in (0, 1)$ and $\SUM{a=1}{k} \pi_a = 1$. Let $Y_i$ be the observed outcome under the assigned treatment arm $A_i$ and $Y_i(a)$ denote the potential outcome under the treatment arm $a$. Throughout this article, $( W_i, Y_i(1), \ldots, Y_i(k) )$ for $i = 1, \ldots, n$ are independent and identically distributed (i.i.d.) draws from a superpopulation distribution $\mathbb{P}$. We define the observed data vector as $D_i \coloneqq ( Y_i, A_i, W_i^\top )^\top$ and assume simple randomization for treatment assignment. This implies that $D_1, \dots, D_n$ are also i.i.d. samples. We let $r_{ i | a }$ denote the true outcome model of subject $i$ under treatment assignment $A_i = a$, i.e. $r_{ i | a } \coloneqq \E [ Y_{i} (a) | W_{i} ] \equiv \E [ Y_{i} | W_{i}, A_{i} = a ]$. For theoretical results in this article, we follow the assumptions used in \citet{kosmidis2024empirical}. We will also assume that $Y_i$ has bounded second moment, and $W_i$ has bounded fourth moment. We use the conventional notation for the stochastic order ($O_{\sp}, o_{\sp}$) and non-stochastic order ($O, o$), equally applicable to scalars, vectors and matrices.  
Moreover, we invoke three standard causal identification assumptions: exchangeability (no unmeasured confounding), $Y_i(a) \perp A_i \mid W_i$; positivity (overlap), $0 < P(A_i = a \mid W_i = w) < 1$; and consistency (well-defined interventions), whereby $Y_i = Y_i(a)$ whenever $A_i = a$. In a properly conducted RCT, these assumptions hold by design.

\subsection{G-computation in RCTs} \label{sec:pre-gc}

G-computation typically employs GLMs as working models to adjust for baseline covariates in RCTs. Such working models take the form $m ( X_i^\top \bbeta )$, where $\bbeta$ is a $p$-dimensional nuisance parameter and $X_i$ (of $p$ dimensions) usually comprises an intercept, the treatment assignment indicators for $A_i$ and the baseline covariates $W_i$; $X_i$ may also include interactions between $A_i$ and $W_i$. Let $g(\cdot)$ be the link function for GLMs and $m \coloneqq g^{-1}$ be its inverse. Throughout this article, we only consider GLMs with canonical link functions and treat $p$ as fixed. Moreover, we assume that $m(\cdot)$ is non-decreasing, continuously twice-differentiable (with its first and second derivatives denoted as $m'$ and $m''$, respectively). In most of GLMs encountered in practice, $m(\cdot)$ is non-decreasing.

Let $\hbbeta$ be the MLE of $\bbeta$ obtained by solving $\SUM{i=1}{n} U_i ( \bbeta ) \equiv \bzero_p$, where $U_i ( \bbeta ) \coloneqq \{ Y_i - m ( X_i^\top \bbeta ) \} X_i$ is the usual score function of canonical GLMs. Write $\hwm_i \coloneqq m ( X_i^\top \hbbeta )$ and $\hwm_{i|a} \coloneqq m ( X_{i|a}^\top \hbbeta )$, where $X_{i|a}$ is defined as $X_i$ but setting $A_i = a$. The g-computation estimator of $\mu_a \coloneqq \E [ Y ( a ) ]$ is $\hmu_a = n^{-1} \SUM{i=1}{n} \hwm_{i|a}$, and the one for the unconditional treatment effect $\delta ( \mu_a, \mu_b )$ is $\hdelta \coloneqq \delta ( \hmu_a, \hmu_b )$. For example, $\hdelta$ can be $\hmu_{2} - \hmu_{1}$, $\hmu_{2} / \hmu_{1}$, etc. Notably, using canonical GLMs as working models ensures the consistency of g-computation even under model misspecification \citep{freedman2008randomization}. This is because that under randomization $\hwm_i$ satisfies the so-called ``prediction unbiasedness'' \citep{guo2023generalized}, i.e.
\begin{equation}
\SUM{i \colon A_i = a}{} Y_i \equiv \SUM{i \colon A_i = a}{} \hwm_i. \label{eq:predict-mle}
\end{equation}

Statistical inference for g-computation under misspecified working models can be conducted using Wald or score-based methods, the finite-sample performance of which were recently studied in \citet{zhang2025robust}. Both methods rely on estimating the variance(-covariance) of $\hbmu \coloneq ( \hmu_1, \ldots, \hmu_k)^\top$. This can be obtained using either the influence function (IF) of $M$-estimators \citep{stefanski2002calculus} or the efficient IF of the AIPW estimator \citep{tsiatis2008covariate}, leading to the so-called IF-based variance estimator \citep{boos2013essential}. Following \citet{yuan2012variable}, we refer to the former as the \textit{empirical} IF (directly obtained from the empirical sandwich variance matrix) and the latter as the \textit{theoretical} IF (does not involve the sandwich matrix of $\hbbeta$). Let $\bbeta_0$ be the solution to the population score equation $\E [ U_i ( \bbeta ) ] \equiv 0$,  $m_i \coloneqq m ( X_i^\top \bbeta_0 )$, $m_{i|a} \coloneqq m ( X_{i|a}^\top \bbeta_0 )$, and $m_{i|a}' \coloneqq m' ( X_{i|a}^\top \bbeta_0 )$. The empirical/theoretical IFs of $\hmu_a$ are then 
\begin{align}
\psi_i^a & = \E [ m_{1|a}' \cdot X_{1|a}^\top ] \bpsi_i^{\bbeta} + m_{i|a} - \mu_a \label{eq:if-mu-mest} \\
         & \equiv \frac{ I ( A_i = a ) }{\pi_a} ( Y_i - m_i ) + m_{i|a} - \mu_a, \label{eq:if-mu-aipw}
\end{align}
respectively, where $\bpsi_i^{\bbeta} \coloneqq B^{-1} U_i$ is the IF of $\hbbeta$, with $B \coloneqq \E [ - \DERIV{ U_i (\bbeta_0) }{} ]$ (the subscript $\bbeta$ is omitted in the gradient operator), $U_i \coloneqq U_i ( \bbeta_0 ) = (Y_i - m_i) X_i$, and $I(\cdot)$ be the indicator function. \citet[][Proposition 1]{zhang2025robust} have established that this equivalence holds when parametric working models are misspecified under simple or stratified randomization, since
\begin{equation}
\E [ m_{i|a}' \cdot X_{i|a}^\top ] B^{-1} X_i \equiv \frac{ I ( A_i = a ) }{\pi_a}. \label{eq:zhang}
\end{equation}
For the sake of completeness, we provide a proof of \eqref{eq:zhang} in Appendix~\ref{append:proof-zhang}.

One commonly used strategy to construct g-computation estimators is to posit a single working model, as described at the beginning of this section, fitted with data pooled from all $k$ treatment arms. An alternative widely-used strategy, with the potential to offer more efficiency gains, is to posit a separate working model for each $\mu_a$, which only uses data from the corresponding arm \citep{tsiatis2008covariate}. We call the former approach the pooled working model, and the latter the stratified working model. \citet[][Supporting Information Section A]{zhang2025robust} show that stratified and pooled working models can be represented within a unified framework using a single working model. Consequently, all preceding results apply to both types of working models (see Appendix~\ref{append:wm-strat-mle} for details). This unified formulation also facilitates the development of the theoretical results and the proposed estimators in Sections~\ref{sec:theory} and~\ref{sec:method}.

\subsection{Bias correction in canonical GLMs} \label{sec:pre-bias}

Provided that the model is correctly specified and the i.i.d. assumption holds, the first-order ($n^{-1}$) bias of MLEs has been studied for general parametric models \citep{cox1968general} and for GLMs \citep{cordeiro1991bias}. Their closed-form formulae are available, which provide explicit debiased estimators of $\hbbeta$ but require the existence of MLEs. \citet{firth1993bias} has proposed an alternative approach to obtain debiased estimators by augmenting the score equation with a penalty term. Let $\tbbeta$ be the estimator of $\bbeta$ using Firth correction (FC). For canonical GLMs, the modified score equation reads as 
$
\SUM{i=1}{n} \tU_i + \DERIV{ \log \det ( \tB ) }{} / 2 \equiv \bzero_p,
$
where $\tU_i$ and $\tB$ denote the empirical versions of $U_i$ and $B$, respectively. It guarantees the boundedness of $\tbbeta$ for certain GLMs \citep{kosmidis2020jeffreys}, including those with canonical links, provided that the design matrix is of full rank. 

Under model misspecification, neither the estimator explicitly debiasing MLEs nor the implicit one using FC ($\tbbeta$) are free of the first-order bias. As both $\hbbeta$ and $\tbbeta$ are $M$-estimators, their asymptotic bias formulae can be directly obtained using the formula provided in \citet{kosmidis2024empirical} and read as
\begin{align}
\E ( \hbbeta - \bbeta_0 ) & = \underbrace{ \frac{1}{2} \, B^{-1} \DERIV{ \tr ( B^{-1} M ) }{} }_{ \eqcolon \ \bb_1 ( \hbbeta ) } / n + O ( n^{-3/2} ), \label{bias-MLE} \\ 
\E ( \tbbeta - \bbeta_0 ) & = \underbrace{ \frac{1}{2} \, B^{-1} \left\{ \DERIV{ \tr ( B^{-1} M ) }{} + \DERIV{ \log \det ( B ) }{} \right\} }_{ \eqcolon \ \bb_1 ( \tbbeta ) } / n + O ( n^{-3/2} ), \label{bias-Firth}
\end{align}
where $M \coloneq \E ( U_i^{\otimes 2} )$ is the meat matrix. They also have proposed both explicit and implicit bias reduction methods without requiring correct model specification. The utility of their method in g-computation remains underexplored. Besides, their implicit method does not guarantee the existence of corresponding estimates, which is one of the major hurdles to implement g-computation in practice. 

\citet{kosmidis2024empirical} obtain \eqref{bias-MLE} and \eqref{bias-Firth} using the tensor-based method described in \citet{mccullagh2018tensor}. For the sake of completeness, we provide a proof of the corresponding bias formulae using only linear algebra in Appendix~\ref{append:proof-bias-beta}.

\section{G-computation with FC}  \label{sec:gc-fc}

As indicated in Section~\ref{sec:intro}, small sample sizes or rare events (relative to the number of covariates) pose significant challenges to the application of g-computation, as 1) canonical GLM working models might fit the data poorly and 2) MLEs may not even exist, leading to unreasonably large estimates (in absolute values). It is well known that FC could improve the finite-sample performance of GLM model fit when MLEs do not exist, thus resolving the latter issue \citep{kosmidis2020jeffreys}. Several simulation studies also report that compared to MLE (when they exist), FC has superior finite sample performance when the sample size is small relative to the number of covariates \citep{sur2019modern,kosmidis2020jeffreys,joshi2022solutions}. Therefore, using FC to estimate nuisance parameters is a natural candidate solution to resolve both issues. In this section, we illustrate how FC can be seamlessly integrated into g-computation. 

Let 
$
h_{ii} ( \bbeta ) \coloneqq m' ( X_i^\top \bbeta ) \cdot X_i^\top \{ \SUM{j=1}{n} m' ( X_j^\top \bbeta ) \cdot X_j X_j^\top \}^{-1} X_i
$
be the $i$th diagonal element of the hat matrix (i.e., the leverage score) from $U_i ( \bbeta )$. The Firth's modified score equation is 
\begin{equation}
\label{eq:score-fc}
\begin{gathered}
\SUM{i=1}{n} U_i ( \bbeta ) + \Delta^{(n)} ( \bbeta ) = \mathbf{0}_{p}, \text{ where } \Delta^{(n)} ( \bbeta ) \coloneqq \frac{1}{2} \, \SUM{i=1}{n} h_{ii} ( \bbeta ) \cdot \frac{ m'' ( X_i^\top \bbeta ) }{ m' ( X_i^\top \bbeta ) } \cdot X_{i}.
\end{gathered}
\end{equation}
Note that for binary regression with logit link, $m''( X_i^\top \bbeta ) / m' ( X_i^\top \bbeta ) = 1 - 2 m ( X_i^\top \bbeta )$, whereas for Poisson regression with log link, $m'' ( X_i^\top \bbeta ) / m' ( X_i^\top \bbeta ) = 1$. The FC estimator $\tbbeta$ is then obtained by solving $\SUM{i=1}{n} U_i ( \tbbeta ) + \Delta^{(n)} ( \tbbeta ) = \bzero_p$. Since $\Delta^{(n)} ( \bbeta )$ is $O_\sp ( 1 )$, $\sqrt{n} ( \tbbeta - \bbeta_0 )$ has the same asymptotic normal distribution as $\sqrt{n} ( \hbbeta - \bbeta_0 )$, i.e., with the same IF, $\bpsi_i^{\bbeta}$, and the same asymptotic variance, $\E ( \bpsi_i^{\bbeta \, \otimes 2} )$ \citep{firth1993bias,kosmidis2024empirical}.  

Write $\twm_i \coloneqq m ( X_i^\top \tbbeta )$ and $\twm_{i|a} \coloneqq m ( X_{i|a}^\top \tbbeta )$, the g-computation estimator of $\mu_a$ with FC is then $\tmu_a = n^{-1} \SUM{i=1}{n} \twm_{i|a}$ and analogously $\tdelta_{a,b} \coloneqq \delta ( \tmu_a, \tmu_b )$ for estimating treatment effects. In contrast to $\hmu_a$, the prediction unbiasedness \eqref{eq:predict-mle} no longer holds for $\tmu_a$ due to $\Delta^{(n)} ( \bbeta )$, as shown in the following lemma. 
\begin{lemma} \label{lem:predict-fc}
Let $\th_{ii} \coloneqq h_{ii} (\tbbeta)$. Then, 
\[
\SUM{ i \colon A_i = a }{} \twm_i \equiv  \SUM{ i \colon A_i = a }{} Y_i + \frac{1}{2} \SUM{ i \colon A_i = a }{} \th_{ii} \cdot \frac{\twm_i''}{\twm_i'} = \SUM{ i \colon A_i = a }{} Y_i + O_{\sp} ( 1 ).
\]
\end{lemma}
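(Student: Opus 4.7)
The plan is to read the conclusion of the lemma as an almost-direct consequence of the Firth-modified score equation \eqref{eq:score-fc} evaluated at $\tbbeta$, together with the linear structure of the design vector $X_i$ under a pooled (or stratified) working model.

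First, I would rewrite \eqref{eq:score-fc} at $\tbbeta$ as
\[
\SUM{i=1}{n} ( Y_i - \twm_i ) X_i \; + \; \frac{1}{2} \SUM{i=1}{n} \th_{ii} \cdot \frac{\twm_i''}{\twm_i'} \cdot X_i \; = \; \bzero_p .
\]
The key observation is that, in the pooled working model described in Section~\ref{sec:pre-gc}, $X_i$ contains an intercept and $k-1$ dummy variables for the treatment arms, so for every $a \in \{1,\ldots,k\}$ there is a vector $c_a \in \mathbb{R}^p$ (formed from the intercept and the relevant arm dummies) satisfying $c_a^\top X_i = I(A_i = a)$. Left-multiplying the display above by $c_a^\top$ picks out exactly the arm-$a$ contributions and yields
\[
\SUM{i \colon A_i = a}{} ( Y_i - \twm_i ) \; + \; \frac{1}{2} \SUM{i \colon A_i = a}{} \th_{ii} \cdot \frac{\twm_i''}{\twm_i'} \; = \; 0 ,
\]
which, after rearrangement, is the first equality in the lemma. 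For a stratified working model the argument is even simpler because the arm-$a$ score equations are decoupled by construction; a short cross-reference to Appendix~\ref{append:wm-strat-mle} covers that case.

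It remains to bound the penalty term. I would use two standard facts. Setting $V(\tbbeta) \coloneqq \mathrm{diag}\{ m'(X_i^\top \tbbeta) \}_{i=1}^n$, the matrix with entries $\th_{ii}$ is the diagonal of a projection $V(\tbbeta)^{1/2} X (X^\top V(\tbbeta) X)^{-1} X^\top V(\tbbeta)^{1/2}$ of rank $p$ (a fixed constant), so $0 \le \th_{ii} \le 1$ and $\SUM{i=1}{n} \th_{ii} = p$. For canonical GLMs the ratio $m''/m'$ is uniformly bounded on $\mathbb{R}$: for the logit link $m''/m' = 1 - 2m \in (-1,1)$, for the log link it equals $1$, for the identity link it equals $0$. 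Consequently
\[
\Bigl\lvert \frac{1}{2} \SUM{i \colon A_i = a}{} \th_{ii} \cdot \frac{\twm_i''}{\twm_i'} \Bigr\rvert \;\le\; \frac{C}{2} \SUM{i=1}{n} \th_{ii} \;=\; \frac{Cp}{2} \;=\; O(1),
\]
which is in fact deterministically $O(1)$ and a fortiori $O_\sp(1)$, giving the second equality.

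The only mildly delicate point I foresee is the very first step, namely the existence of $c_a$: one must ensure that the parametrization of the pooled working model is rich enough for $I(A_i=a)$ to lie in the column space of the design. This is guaranteed by Assumption on the design that $X_i$ encodes treatment either as a full set of arm indicators or as an intercept plus $k-1$ dummies, both of which are standard for g-computation; otherwise the argument reduces trivially to the stratified case. Everything else is a one-line consequence of the Firth score equation and the rank/trace identity for the hat matrix.
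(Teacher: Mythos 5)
Your proof is correct, and the first equality is obtained exactly as in the paper: project the Firth-modified score equation \eqref{eq:score-fc} onto the coordinate (or, in your slightly more careful phrasing, the linear combination $c_a^\top X_i = I(A_i=a)$) that extracts arm $a$. Where you genuinely diverge is in bounding the penalty term. The paper argues stochastically: it shows $\E[h_{ii}(\bbeta)] \lesssim p/n$ and $\E[h_{ii}(\bbeta)^2] \lesssim p^2/n^2$, concludes $\th_{ii} = O_\sp(n^{-1})$, and then sums $n$ such terms to get $O_\sp(1)$. You instead invoke the exact algebraic facts that the weighted hat matrix $V^{1/2}X(X^\top V X)^{-1}X^\top V^{1/2}$ is a rank-$p$ projection, so $0 \le \th_{ii} \le 1$ and $\SUM{i=1}{n}\th_{ii} = p$ deterministically, yielding the sharper conclusion that the penalty term is $O(1)$ (not merely $O_\sp(1)$) with the explicit constant $Cp/2$. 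This is cleaner and avoids the uniformity issue lurking in summing $n$ terms that are each individually $O_\sp(n^{-1})$; the paper itself uses the identity $\SUM{i}{}h_{ii}=p$ in the proof of Proposition~\ref{prop:bound-gc}, so your argument is fully consistent with its toolkit. One small caveat: your assertion that $m''/m'$ is uniformly bounded ``for canonical GLMs'' in general is an overstatement (it fails, e.g., for the gamma family with the canonical inverse link), though it holds for every example the paper actually treats (logit, log, identity) and the paper's own proof implicitly relies on the same boundedness; it would be safer to tie the bound to Assumption~\ref{as:link} or to the specific links under consideration.
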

\noindent The proof of Lemma~\ref{lem:predict-fc} can be found in Appendix~\ref{append:proof-predict-fc}. The kind of bias exhibited in Lemma~\ref{lem:predict-fc} has been previously discussed in \citet{puhr2017firth}, but its impact on g-computation has not been explored.  

Similar to $\hbmu$, $\tbmu = ( \tmu_1, \ldots, \tmu_k )^\top$ is a partial $M$-estimator. Fortunately, since $\Delta^{(n)} ( \bbeta ) = O_\sp ( 1 )$, the estimating equation for $( \tbmu^\top, \tbbeta^\top )^\top$ only differs from that for $( \hbmu^\top, \hbbeta^\top )^\top$ with a term of $o_{\sp} ( n^{1/2} )$. This suggests that (as $p$ is fixed) $\tbmu$ is still $\sqrt{n}$-consistent and its asymptotic normal distribution is the same as the one for $\hbmu$ \citep[][p. 30]{stefanski2002calculus}. Consequently, their variance estimators, test statistics and confidence intervals (CIs) can be calculated in the same manner, using the corresponding empirical estimates of $\bpsi_i^a$ based on \eqref{eq:if-mu-mest} or \eqref{eq:if-mu-aipw}. Those estimators are also applicable for stratified working models with FC as shown in Appendix~\ref{append:wm-strat-fc}.

Since $\tmu_a$ is consistent, the extra bias shown in Lemma~\ref{lem:predict-fc} is negligible for large sample sizes. To illustrate the impact of this bias on g-computation estimators when sample sizes are small, we first present some numerical results from a simulation experiment.

\begin{figure}[t!]
\centering
\includegraphics[width=\linewidth]{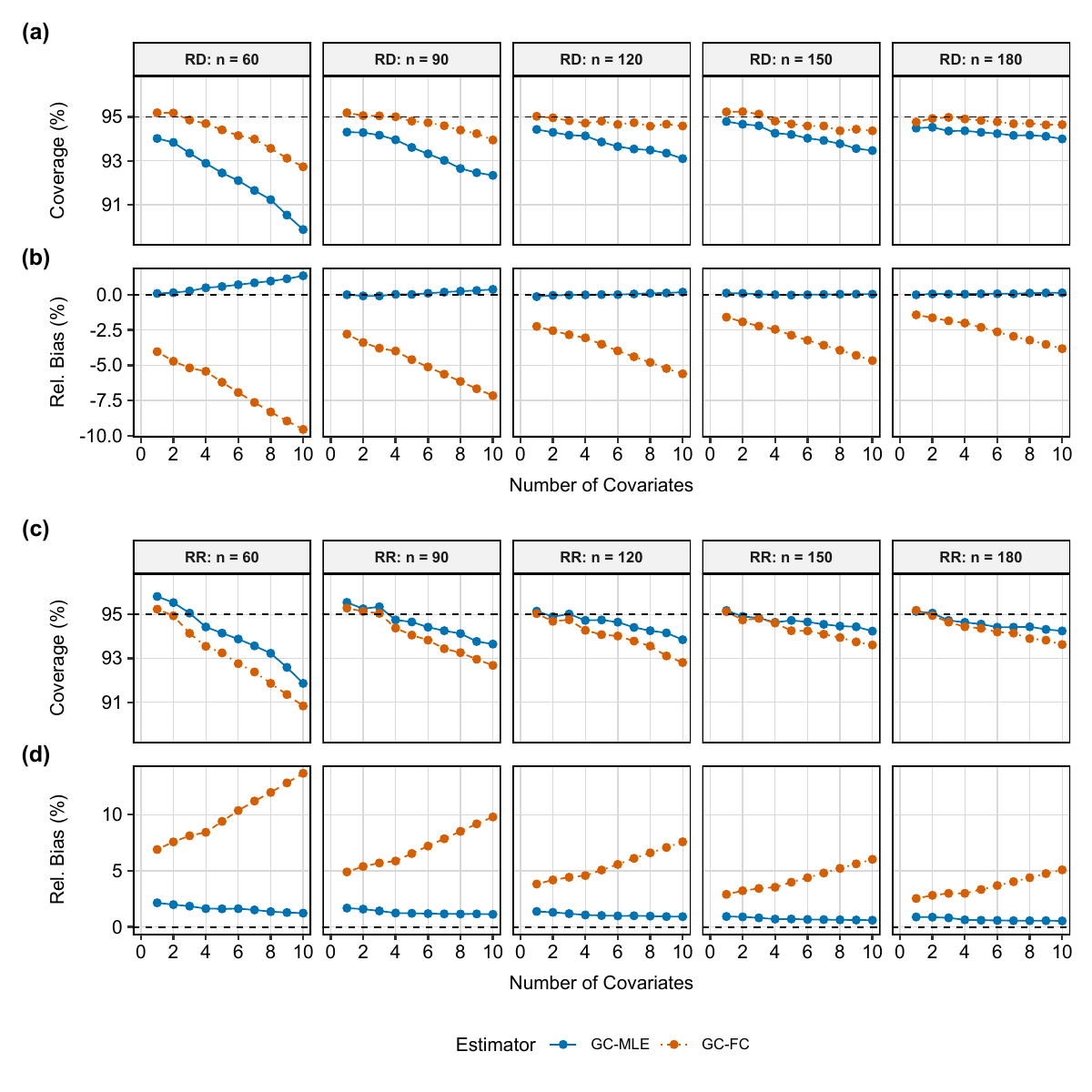}
\caption{
Coverage of the 95\% CI and bias of treatment effect estimates for \textbf{GC-MLE} ($\hmu_a$) and \textbf{GC-FC} ($\tmu_a$). RD: $\mu_2-\mu_1$; RR: $\mu_1/\mu_2$; $n$: sample size; Rel.: Relative. 
}
\label{fig:gc_fc}
\end{figure}

\paragraph{Simulation Experiment I(a)} 

Figure~\ref{fig:gc_fc} presents results from a simulation experiment to evaluate the finite-sample performance of \textbf{GC-MLE} ($\hmu_a$) and \textbf{GC-FC} ($\tmu_a$). We consider a hypothetical trial with two arms (1:1 allocation) with sample sizes ranging from $n=60$ to $180$, and outcome probabilities $( \mu_1, \mu_2 ) = ( 0.25, 0.60 )$. Two estimands are evaluated: RD ($\mu_2-\mu_1$) and RR ($\mu_1/\mu_2$). Details of the data generating process can be found in Appendix~\ref{append:sim-dgp}. 

A single logistic regression working model (including a treatment indicator variable), fitted using data pooling from both arms (i.e., a pooled working model), is used for both estimators. The number of baseline covariates for adjustment is up to $10$. The estimation procedures for the nuisance parameters are described in Appendix~\ref{append:sim-est}. Wald-type confidence intervals are constructed.  

For RD ($\mu_2-\mu_1$), \textbf{GC-FC} improves interval coverage relative to \textbf{GC-MLE} (Figure~\ref{fig:gc_fc}a), with the most pronounced improvement in small samples. However, it incurs non-negligible bias in point estimation (Lemma~\ref{lem:predict-fc}), particularly when sample sizes are small and/or the number of adjusted covariates is large, as shown in Figure~\ref{fig:gc_fc}b. This negative bias leads to underestimation of treatment effects and can reduce the power of tests of the null hypothesis.

For RR ($\mu_1/\mu_2$), \textbf{GC-FC} exhibits poorer interval coverage than \textbf{GC-MLE} in small samples and when the number of adjusted covariates is large. Moreover, the incurred bias is positive, implying a smaller estimated risk reduction and thus an underestimation of the treatment effect, which can reduce the power of tests of the null hypothesis. 

The above empirical results motivate us to perform bias reduction on \textbf{GC-FC} or \textbf{GC-MLE}, which we describe in detail in the next two sections.

\section{Bias Characterization in G-computation} \label{sec:theory}

In this section, we characterize the first-order bias of g-computation estimators with MLE ($\hmu_a$) and FC ($\tmu_a$), respectively, under misspecified working models. To begin with, we present the first-order bias formulae for $\hbbeta$ and $\tbbeta$, respectively. 
\begin{proposition} \label{prop:beta-glm}
Let $m_i' \coloneqq m' ( X_i^\top \bbeta_0 )$ and $m_i'' \coloneqq m'' ( X_i^\top \bbeta_0 )$.  Under potential model misspecification,
\begin{align*}
\bb_1 ( \hbbeta ) & = - B^{-1} \underbrace{ \E \left[ m_i' (Y_i - m_i) \cdot X_i^\top B^{-1} X_i \cdot X_i \right] }_{ \eqcolon \, H_1 } - B^{-1} \underbrace{ \E \left[ \frac{1}{2} m_i'' \cdot X_i^\top B^{-1} M B^{-1} X_i \cdot X_i \right] }_{ \eqcolon \, H_2 }, \\
\bb_1 ( \tbbeta ) & = \bb_1 ( \hbbeta ) + B^{-1} \underbrace{ \E \left[ \frac{1}{2} m_i'' \cdot X_i^\top B^{-1} X_i \cdot X_i \right] }_{ \eqcolon \, H_3 }.
\end{align*}
\end{proposition}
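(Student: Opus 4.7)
The plan is to start from the general bias formulas \eqref{bias-MLE} and \eqref{bias-Firth}, which are already in hand from \citet{kosmidis2024empirical}, and unpack the gradients $\nabla \tr(B^{-1} M)$ and $\nabla \log \det(B)$ using the specific structure of the canonical GLM score $U_i(\bbeta) = \{Y_i - m(X_i^\top \bbeta)\} X_i$. Under this score we have $B = \E[m_i' X_i X_i^\top]$ and $M = \E[(Y_i - m_i)^2 X_i X_i^\top]$, so the coordinate derivatives $\partial_{\beta_k} B = \E[m_i'' X_{i,k} X_i X_i^\top]$ and $\partial_{\beta_k} M = -2\,\E[m_i'(Y_i - m_i) X_{i,k} X_i X_i^\top]$ serve as the two elementary building blocks.

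Next I would apply the standard matrix-calculus identities $\partial \tr(B^{-1} M) = -\tr(B^{-1} (\partial B) B^{-1} M) + \tr(B^{-1} \partial M)$ and $\partial \log \det(B) = \tr(B^{-1} \partial B)$ coordinatewise in $\beta_k$. Plugging in the two building blocks and invoking the cyclic property of the trace to convert $\tr(B^{-1} \E[\,\cdot\, X_i X_i^\top] B^{-1} M)$ into $\E[\,\cdot\, X_i^\top B^{-1} M B^{-1} X_i]$ (and similarly for the remaining terms), each $\partial_{\beta_k}$ becomes the expectation of a scalar quadratic form multiplied by the free coordinate $X_{i,k}$. Reassembling across $k$ packages these expectations into vectors and yields, compactly,
\begin{align*}
\nabla \tr(B^{-1} M) & = -2 H_1 - 2 H_2, \\
\nabla \log \det(B) & = 2 H_3.
\end{align*}
Substituting into \eqref{bias-MLE} and \eqref{bias-Firth} cancels the factor $1/2$ and produces exactly $\bb_1(\hbbeta) = -B^{-1}(H_1 + H_2)$ and $\bb_1(\tbbeta) = \bb_1(\hbbeta) + B^{-1} H_3$, as claimed.

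The computation is essentially bookkeeping: the analytic heavy lifting behind \eqref{bias-MLE}--\eqref{bias-Firth} is inherited from \citet{kosmidis2024empirical} (and re-derived independently in Appendix~\ref{append:proof-bias-beta}), so no new estimation-theoretic argument is required here. The main obstacle, if any, is simply keeping the tensor indices straight when turning the coordinatewise derivatives into compact vector identities; I would therefore carry out the derivatives componentwise and vectorize only at the very end. Interchange of expectation and differentiation is justified by Assumptions~\ref{as:cov} and \ref{as:link} together with the smoothness of $m$ and local integrability in a neighborhood of $\bbeta_0$, which I would note explicitly at the outset.
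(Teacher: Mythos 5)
Your proposal is correct and follows essentially the same route as the paper: both proofs reduce the claim to differentiating $\tr(B^{-1}M)$ and $\log\det(B)$ coordinatewise and then use the cyclic property of the trace to turn each term into the expectation of a quadratic form times $X_{ir}$, which vectorizes to $-2H_1-2H_2$ and $2H_3$. The only cosmetic difference is that you differentiate the explicit GLM expressions $B(\bbeta)=\E[m_i'X_iX_i^\top]$ and $M(\bbeta)=\E[(Y_i-m_i)^2X_iX_i^\top]$ directly, whereas the paper routes through the general-score building blocks $\E[U_i\DERIV{U_i}{r}^\top]$ and $\E[\DDERIV{U_{ir}}{}]$ already assembled in its Appendix~\ref{append:proof-bias-beta}; the resulting computations coincide term by term.
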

\noindent The proof of Proposition~\ref{prop:beta-glm} can be found in Appendix~\ref{append:proof-bias-beta-glm}. Assuming that working models are correctly specified, $H_1 \equiv \bzero_p$, $H_2 \equiv H_3$, and thus $\bb_1 ( \tbbeta ) \equiv \bzero_p$ \citep{firth1993bias}. When they are misspecified, however, the first-order bias in $\tbbeta$ persists. 

We now state the main result of this section, which characterizes the asymptotic biases of the g-computation estimators $\hmu_a$ and $\tmu_a$, respectively. They are derived using the second-order stochastic expansion of the corresponding nuisance parameter estimators, further simplified using the relationship indicated in \eqref{eq:zhang}. The proof can be found in Appendix~\ref{eq:append-proof-bias-gc}. 
\begin{theorem} \label{thm:gc}
Let $m_{i|a}'' \coloneqq m'' ( X_{i|a}^\top \bbeta_0 )$. Under potential model misspecification, $\E [ \hmu_a - \mu_a ] = n^{-1} b_1 ( \hmu_a ) + O ( n^{-3/2} )$, where $b_1 ( \hmu_a ) \coloneq b_1^{(1)} ( \hmu_a ) + b_1^{(2)} ( \hmu_a )$ with 
\begin{align*}
b_1^{(1)} ( \hmu_a ) & \coloneq - ( 1 - \pi_a ) \E \left[ m_{i|a}' \cdot X_{i|a}^\top B^{-1} X_{i|a} \cdot \{ Y_i (a) - m_{i|a} \} \right], \\
b_1^{(2)} ( \hmu_a ) & \coloneq \SUM{ b \neq a }{} \pi_b \E \left[ m_{i|a}' \cdot X_{i|a}^\top B^{-1} X_{i|b} \cdot \{ Y_i (b) - m_{i|b} \} \right].
\end{align*}
In addition, $\E ( \tmu_a - \mu_a ) = n^{-1} b_1 ( \tmu_a ) + O ( n^{-3/2} )$, where $b_1 ( \tmu_a ) \coloneq b_1^{(1)} ( \tmu_a ) + b_1^{(2)} ( \tmu_a )$ with $b_1^{(1)} ( \tmu_a ) \equiv b_1 ( \hmu_a )$ and
\[
b_1^{(2)} ( \tmu_a ) \coloneq \frac{1}{2} \E \left[ m_{i|a}'' \cdot X_{i|a}^\top B^{-1} X_{i|a} \right].
\]
\end{theorem}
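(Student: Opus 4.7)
The plan is to apply a second-order Taylor expansion to $\hwm_{i|a} = m(X_{i|a}^\top \hbbeta)$ around $\bbeta_0$ and then take expectations. Writing $T_n(\bbeta) \coloneq n^{-1}\sum_{i=1}^n m(X_{i|a}^\top \bbeta)$ and $T(\bbeta) \coloneq \E[m(X_{1|a}^\top \bbeta)]$, so that $\hmu_a = T_n(\hbbeta)$, one gets
\[
\hmu_a - \mu_a = [T_n(\bbeta_0) - T(\bbeta_0)] + T_n'(\bbeta_0)(\hbbeta - \bbeta_0) + \tfrac{1}{2}(\hbbeta-\bbeta_0)^\top T_n''(\bbeta_0)(\hbbeta-\bbeta_0) + O_{\sp}(n^{-3/2}).
\]
Because $X_i$ contains an intercept and treatment indicators, the canonical score equation at $\bbeta_0$ combined with simple randomization ($W_i \perp A_i$) yields $\E[m_{1|a}] = \mu_a$; hence the first bracket has mean zero and only the linear and quadratic terms contribute to the $n^{-1}$ bias.

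First, I would decompose the linear term as $T'(\bbeta_0)(\hbbeta-\bbeta_0) + \{T_n'(\bbeta_0) - T'(\bbeta_0)\}(\hbbeta-\bbeta_0)$. Taking expectations, the first piece contributes $T'(\bbeta_0)\bb_1(\hbbeta)/n + O(n^{-3/2})$ by Proposition~\ref{prop:beta-glm}. For the covariance piece, substituting $\hbbeta - \bbeta_0 = B^{-1}\bar{U}_n + O_{\sp}(n^{-1})$ and using independence together with $\E[U_j]=0$ kills the $i \neq j$ cross-terms, leaving $n^{-1}\E[m_{1|a}' X_{1|a}^\top B^{-1} X_1 (Y_1 - m_1)]$. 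The quadratic term admits a similar triple-sum expansion $n^{-3}\sum_{i,j,k} U_i^\top B^{-1} (m_{k|a}'' X_{k|a} X_{k|a}^\top) B^{-1} U_j$; the $i \neq j$ cross-terms vanish, and the dominant $i=j\neq k$ sub-sum (with its $n(n-1)$ terms) yields $\tfrac{1}{2n}\E[m_{1|a}'' \cdot X_{1|a}^\top B^{-1} M B^{-1} X_{1|a}] + O(n^{-2})$.

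To finish, I would substitute $\bb_1(\hbbeta) = -B^{-1}(H_1+H_2)$ from Proposition~\ref{prop:beta-glm} and invoke the identity \eqref{eq:zhang}, $\E[m_{1|a}' X_{1|a}^\top] B^{-1} X_i \equiv I(A_i=a)/\pi_a$, to convert each $\E[m_{1|a}' X_{1|a}^\top] B^{-1} H_j$ into a treatment-$a$ factual expectation. A direct calculation shows that $-T'(\bbeta_0) B^{-1} H_2 / n$ cancels the quadratic $M$-term exactly, while $-T'(\bbeta_0) B^{-1} H_1 / n = -n^{-1}\E[m_{i|a}' X_{i|a}^\top B^{-1} X_{i|a}(Y_i(a)-m_{i|a})]$. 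The covariance term, decomposed by conditioning on $W_1$ then $A_1$, becomes $n^{-1}\sum_b \pi_b \E[m_{1|a}' X_{1|a}^\top B^{-1} X_{1|b}(Y_1(b)-m_{1|b})]$. Separating the $b=a$ summand and combining with the $H_1$ contribution produces the coefficient $-(1-\pi_a)$ on the factual piece and leaves $\sum_{b\neq a}\pi_b(\cdot)$ as the counterfactual piece, matching $b_1(\hmu_a)$.

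For $\tmu_a$ the same three-term expansion applies. Since the FC penalty $\Delta^{(n)}(\bbeta) = O_{\sp}(1)$ leaves the IF of $\tbbeta$ unchanged at $B^{-1}U_i$, both the covariance and quadratic terms are identical to the MLE case up to $O(n^{-3/2})$; only $T'(\bbeta_0)\bb_1(\tbbeta)/n$ differs. Proposition~\ref{prop:beta-glm} gives $\bb_1(\tbbeta) - \bb_1(\hbbeta) = B^{-1}H_3$, and one final application of \eqref{eq:zhang} converts the extra $T'(\bbeta_0)B^{-1}H_3/n$ into $\tfrac{1}{2n}\E[m_{i|a}'' X_{i|a}^\top B^{-1} X_{i|a}]$, which is exactly $b_1^{(2)}(\tmu_a)/n$. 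The main obstacle I anticipate is the careful bookkeeping of the triple-index sum in the quadratic term together with the neat cancellation against the $H_2$ piece of $\bb_1(\hbbeta)$; invoking \eqref{eq:zhang} at precisely the right moment, rather than attempting direct manipulation of the sandwich $B^{-1}MB^{-1}$, is what makes these calculations go through cleanly.
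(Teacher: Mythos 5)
Your proposal is correct and follows essentially the same route as the paper's proof: a second-order expansion of the plug-in map in $\hbbeta-\bbeta_0$, Proposition~\ref{prop:beta-glm} for the mean of the linear term, isolation of the own-observation covariance contribution $n^{-1}\E[m_{1|a}'X_{1|a}^\top B^{-1}X_1(Y_1-m_1)]$, cancellation of the $H_2$ piece against the quadratic term, and repeated use of \eqref{eq:zhang} to convert the $B^{-1}H_j$ terms into arm-$a$ conditional expectations. The only difference is organizational --- the paper reduces to $\E[\hwm_{1|a}-m_{1|a}]$ for a single unit and tracks which indices in the second-order stochastic expansion of $\hbbeta$ coincide with unit $1$, whereas you keep the average $T_n$ and phrase the same term as a covariance between $T_n'(\bbeta_0)$ and $\hbbeta$ --- which amounts to identical bookkeeping.
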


\begin{remark} \label{rem:gc-bias}
For stratified working models, we have that $X_{1|a}^\top B^{-1} X_{1|b} \equiv 0$, which implies that $b_1^{(2)} ( \hmu_a ) \equiv 0$.  On the contrary, for pooled working models, $b_1^{(2)} ( \hmu_a )$ is generally nonzero.  
\end{remark}
\noindent Our proof in Appendix~\ref{eq:append-proof-bias-gc} elucidates the two sources of bias of $\hmu_a$, through $b_1^{(1)} ( \hmu_a )$ and $b_1^{(2)} ( \hmu_a )$, respectively. The former one is caused by the part of the estimation error in $\hbbeta$ (i.e., $-B^{-1} H_1$) carried over through those $\hwm_{i|a}$s under $A_i \neq a$, while the latter one is caused by using those same $X_i$s under $A_i \neq a$ to estimate both the nuisance parameters and treatment-specific means. It is noteworthy that, thanks to the prediction unbiasedness \eqref{eq:predict-mle}, those $\hwm_{i|a}$s with $A_i = a$ do not contribute to $b_1 ( \hmu_a )$.  For $\tmu_a$, it shares the same two sources (as indicated by $b_1^{(1)} ( \tmu_a ) = b_1( \hmu_a )$) and has one additional source of bias ($b_1^{(2)} ( \tmu_a )$), due to the augmented term in \eqref{eq:score-fc}. 

Finally, we analyze the magnitude of $b_1 ( \hmu_a )$ and $b_1 ( \tmu_a )$. Recall that $r_{i|a} $ denotes the true outcome model give $W_i$ under $A_i=a$. Apparently, $b_1^{(1)} ( \hmu_a ) \equiv 0$ if $m_{i|a} \equiv r_{i|a}$ and $b_1^{(2)} ( \hmu_a ) \equiv 0$ if $m_{i|b} \equiv r_{i|b}$ for all $b \neq a$. This indicates that, when working models are correctly specified, $\hmu_a$ is automatically free of first-order bias regardless of the magnitude of $\bb_1 ( \hbbeta )$. In contrast, such bias still persists for $\tmu_a$ since $b_1 ( \tmu_a ) \equiv b_1^{(2)} ( \tmu_a )$ which is generally nonzero.  

Under misspecified working models, we can bound $b_1^{(1)} ( \hmu_a )$, $b_1^{(2)} ( \hmu_a )$ and $b_1^{(2)} ( \tmu_a )$ as in Proposition~\ref{prop:bound-gc} below. The proof can be found in Appendix~\ref{append:proof-bound-gc}. 
\begin{proposition} \label{prop:bound-gc}
Recall that $p$ is the dimension of $X_i$. For pooled working models, 
\begin{align*}
\abs{ b_1^{(1)} ( \hmu_a ) } & \leq c_a ( \pi_a^{-1} - 1 ) \cdot p \cdot \norm{ m_{i|a} - r_{i|a} }_\infty, \\
\abs{ b_1^{(2)} ( \hmu_a ) } & \leq c_a^\ast c_a^{\ast\ast} \cdot p \cdot \sup_{ b \neq a }\norm{ m_{i|b} - r_{i|b} }_\infty, \\
\abs{ b_1^{(2)} ( \tmu_a ) } & \leq c_a (2\pi_a)^{-1} \cdot p \cdot \norm{ m_{i|a}'' / m_{i|a}' }_\infty,
\end{align*}
where $c_a, c_a^{\ast\ast} \in (0 ,1)$ and $c_a^\ast>0$.  Besides, for stratified working models, recalling that $p_a$ is the dimension of $Z_{i|a}$, 
\begin{gather*}
\abs { b_1 ( \hmu_a ) } \leq ( \pi_a^{-1} - 1 ) \cdot p_a \cdot \norm{ m_{i|a} - r_{i|a} }_\infty, \\
\abs{ b_1 ( \tmu_a ) - b_1 ( \hmu_a ) } \leq ( 2 \pi_a )^{-1}\cdot p_a \cdot \norm{ m_{i|a}'' / m_{i|a}' }_\infty.
\end{gather*}
\end{proposition}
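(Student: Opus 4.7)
The plan is to reduce each bound to a trace form in $B^{-1}$ and control it via a Loewner decomposition of $B$. By simple randomization and the law of iterated expectation, $B = \sum_{c=1}^{k} \pi_c B_c$ with $B_c \coloneqq \E[m_{i|c}' X_{i|c} X_{i|c}^\top] \succeq 0$, so $\pi_a B_a \preceq B$, $\tr(B^{-1} B_a) \leq p/\pi_a$, and $\sum_c \pi_c \tr(B^{-1} B_c) = p$. Define $c_a \coloneqq \pi_a \tr(B^{-1} B_a)/p$; under Assumption~\ref{as:cov} the nondegeneracy of $\sum_{b\neq a}\pi_b B_b$ gives $c_a \in (0,1)$, and the identity $\E[m_{i|a}' X_{i|a}^\top B^{-1} X_{i|a}] = c_a p/\pi_a$ holds by definition.

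The first and third inequalities follow directly. Since $r_{i|a} = \E[Y_i(a)|W_i]$, the tower property lets me replace $Y_i(a)$ by $r_{i|a}$ against any $W_i$-measurable factor, so $b_1^{(1)}(\hmu_a) = -(1-\pi_a)\E[m_{i|a}' X_{i|a}^\top B^{-1} X_{i|a}(r_{i|a} - m_{i|a})]$; H\"older pulls out $\norm{m_{i|a} - r_{i|a}}_\infty$ and the displayed identity contributes the prefactor $c_a(\pi_a^{-1}-1)p$. For $b_1^{(2)}(\tmu_a)$ I would write $m_{i|a}'' = (m_{i|a}''/m_{i|a}') \cdot m_{i|a}'$, pull out $\norm{m_{i|a}''/m_{i|a}'}_\infty$, and reuse the identity to obtain the prefactor $c_a p/(2\pi_a)$.

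The second bound is the main obstacle, since the cross term $X_{i|a}^\top B^{-1} X_{i|b}$ with $a\neq b$ is not a quadratic form in a single vector. My plan is to apply Cauchy--Schwarz pointwise in the $B^{-1}$ inner product, $\abs{X_{i|a}^\top B^{-1} X_{i|b}} \leq (X_{i|a}^\top B^{-1} X_{i|a})^{1/2}(X_{i|b}^\top B^{-1} X_{i|b})^{1/2}$, split $m_{i|a}' = \sqrt{m_{i|a}'}\cdot\sqrt{m_{i|a}'}$, and then apply a second Cauchy--Schwarz in expectation. The $a$-side factor collapses to $\sqrt{c_a p/\pi_a}$ via the key identity, while the $b$-side factor $\E[m_{i|a}' X_{i|b}^\top B^{-1} X_{i|b}]$ is controlled by combining pointwise boundedness of $m'$ (Assumption~\ref{as:link}) with the trace identity $\pi_b \tr(B^{-1} B_b) = c_b p$. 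A further Cauchy--Schwarz on the finite sum over $b \neq a$ introduces the trace-normalization factor $\sqrt{\sum_{b\neq a} c_b} = \sqrt{1-c_a}$, so the overall bound takes the shape $p \cdot c_a^* \cdot \sqrt{c_a(1-c_a)} \cdot \sup_{b\neq a}\norm{m_{i|b}-r_{i|b}}_\infty$. I set $c_a^{**} \coloneqq \sqrt{c_a(1-c_a)} \in (0,1)$ and absorb all bounded-moment constants, the $\sqrt{k-1}$, and $1/\sqrt{\pi_a}$ into $c_a^* > 0$.

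For stratified working models, the block-diagonal structure of $B$ in the arm index forces $X_{i|a}^\top B^{-1} X_{i|b} \equiv 0$ whenever $a\neq b$, so $b_1^{(2)}(\hmu_a) \equiv 0$ by Remark~\ref{rem:gc-bias}, and the diagonal argument applied to the single arm-$a$ block yields $\E[m_{i|a}' X_{i|a}^\top B^{-1} X_{i|a}] = p_a/\pi_a$ (effectively $c_a = 1$ with $p$ replaced by $p_a$); the same derivation used for $b_1^{(2)}(\tmu_a)$ handles $b_1(\tmu_a) - b_1(\hmu_a)$. The principal difficulty I anticipate is in the pooled second bound: without a direct trace identity for $\E[m_{i|a}' X_{i|b}^\top B^{-1} X_{i|b}]$ one has to fall back on pointwise bounds on $m_{i|a}'$, and the packaging must be arranged so that only the trace-normalization factor $\sqrt{c_a(1-c_a)}$ contributes to the informative $c_a^{**}\in(0,1)$ while the rest lives in $c_a^*>0$.
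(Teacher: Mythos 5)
Your argument for the first and third pooled bounds and for the stratified case is essentially the paper's own proof: the tower property replaces $Y_i(a)$ by $r_{i|a}$, H\"older's inequality (using $m'\geq 0$) pulls out the sup-norm, and the trace identity $\E[ m_i' \cdot X_i^\top B^{-1} X_i ] = \tr(B^{-1}B) = p$ (equivalently $\sum_i h_{ii}=p$) together with the conditional-to-marginal ratio $c_a = \pi_a\tr(B^{-1}B_a)/p$ produces the prefactors; the block-diagonal structure in the stratified case gives $p_a/\pi_a$ exactly as you say. Where you genuinely diverge is the cross-arm bound on $b_1^{(2)}(\hmu_a)$: the paper avoids Cauchy--Schwarz entirely by \emph{defining} $c_a^\ast \coloneq \sup_{b\neq a} \E[\abs{m_{i|a}' X_{i|a}^\top B^{-1} X_{i|b}}] / \E[m_{i|b}' X_{i|b}^\top B^{-1} X_{i|b}]$, after which the remaining factor collapses to $c_a^{\ast\ast} = 1-c_a$ times $p$ via the same trace identity. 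Your route --- pointwise Cauchy--Schwarz in the $B^{-1}$ inner product followed by Cauchy--Schwarz in expectation --- also works and yields the more symmetric $c_a^{\ast\ast}=\sqrt{c_a(1-c_a)}$, but it forces you to compare $\E[m_{i|a}' X_{i|b}^\top B^{-1} X_{i|b}]$ (wrong derivative weight) against $\tr(B^{-1}B_b)$, and your proposed fix via ``pointwise boundedness of $m'$'' is not actually licensed by Assumption~\ref{as:link} (the log link has unbounded $m'$). This is easily repaired: since $c_a^\ast$ is only required to be a finite positive constant, just define the mismatch ratio $\E[m_{i|a}' X_{i|b}^\top B^{-1} X_{i|b}]/\E[m_{i|b}' X_{i|b}^\top B^{-1} X_{i|b}]$ as part of $c_a^\ast$, which is precisely the move the paper makes one step earlier. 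With that patch your proof is complete; the paper's version is shorter because it absorbs the entire cross-term comparison into $c_a^\ast$ in a single definitional step, while yours makes the $\sqrt{c_a(1-c_a)}$ trace normalization explicit at the cost of an extra constant.
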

\noindent Proposition~\ref{prop:bound-gc} suggests that $b_1 ( \hmu_a )$ is determined by the bias of the working model from the true model ($\norm{ m_{i|a} - r_{i|a} }_\infty$) and the degree of freedom of the working model ($p$). This result implies a trade-off between the misspecification bias and the estimation error. Though introducing a flexible working model with more parameters could potentially reduce the misspecification bias, it will inevitably increase the estimation error through a larger $p$. At finite sample sizes, all the potential gain via a flexible working model (including more efficiency gain) would be compromised by the amplified estimation error.  

In addition, $b_1 (\hmu_a)$ can be negligible even under misspecified working models in certain scenarios. For example, when events are rare, say $r_{i|a}$ and $m_{i|a}$ are of $o_{\sp} ( p^{-1} )$, $p \times \abs{ m_{i|a} - r_{i|a} }$ also decays to zero. Another occasion is when near perfect data separation for binary outcomes with non-rare events, $r_{i|a}, m_{i|a} \approx$ either $0$ or $1$ at the same $W_i$ for the majority of individuals and thus $\abs{ m_{i|a} - r_{i|a} } \approx 0$.  

On the contrary, $b_1 ( \tmu_a )$ is generally nonzero, unless $m_{i|a}'' / m_{i|a}' \equiv 0$ for all $W_i$. This is impossible for the log link since $m_{i|a}'' / m_{i|a}' \equiv 1$. For the logit link, this implies that $m_{i|a} \equiv 1/2$, a condition that is unlikely to occur in practice. Moreover, $b_1 ( \tmu_a )$  increases with $p$. This result holds even under a correctly specified working model, where $\tbbeta$ is free from first-order bias. This phenomenon helps explain the simulation results shown in Figure~\ref{fig:gc_fc}.

\section{Methodology: Bias Reduction} \label{sec:method}

In this section, we present debiased estimators for $\bmu$ (recall that $\bmu=(\mu_1,\ldots,\mu_k)^\top$) which are bounded (Section~\ref{sec:method-mu}) and a small-sample bias adjustment for the corresponding variance estimators (Section~\ref{sec:method-var}). The adjusted variance estimators can also be used for the standard g-computation. Subsequently, the debiased estimator for the treatment effect, $\delta ( \mu_a, \mu_b )$, can be obtained by simply replacing $(\mu_a, \mu_b)$ with their corresponding debiased estimators.

\subsection{Bias correction in treatment-specific mean estimation} \label{sec:method-mu}

A na\"{i}ve approach to debiasing $\hmu_a$ or $\tmu_a$ is to replace $\hbbeta$ or $\tbbeta$ with a corresponding bias-corrected estimator that removes the first-order bias. Such nuisance parameter estimators can be constructed by explicitly subtracting off the estimated bias, as characterized in Proposition~\ref{prop:beta-glm}. However, the resulting estimators of $\mu_a$ do not necessarily eliminate the first order bias of $\hmu_a$ or $\tmu_a$, as presented in Theorem~\ref{thm:gc}. An alternative approach for debiasing is to directly subtract the estimated first-order bias from the corresponding g-computation estimator. However, those estimators are not bounded. 

We first develop two estimators to reduce the bias of $\hmu_a$. By leveraging the prediction unbiasedness \eqref{eq:predict-mle}, \citet{guo2023generalized} provide an alternative formulation of g-computation estimators, namely as the so-called gOB estimators. In particular, $\hmu_a$ can be equivalently represented as 
\begin{equation}
\hmu_a \equiv \frac{1}{n} \left\{ \SUM{ i \colon A_i = a }{} Y_i + \SUM{ i \colon A_i \neq a }{} m ( X_{i|a}^\top \hbbeta ) \right\}, \label{eq:gob-mle} 
\end{equation}
which simply replaces those $\hwm_{i|a}$s of $A_i = a$ by $Y_i$. This formulation also implies that the bias incurred by $\hbbeta$ contributes to the bias of $\hmu_a$ only through those $\hwm_{i|a}$s with $A_i \neq a$. The following lemma formally demonstrates this property.
\begin{lemma}  \label{lem:gc}
$
\E [ \hwm_{i|a} - \mu_a ] = ( 1 - \pi_a ) \cdot \E [ \hwm_{i|a} - m_{i|a} | A_i \neq a ]. 
$
\end{lemma}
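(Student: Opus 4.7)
The plan is to take expectations in the generalized Oaxaca--Blinder representation \eqref{eq:gob-mle} and evaluate the two summands separately, then invoke one moment identity at the sample level (prediction unbiasedness) and one at the population level (from the canonical GLM score equation). Throughout I rely on the exchangeability of the i.i.d.\ sample, which gives $\E[\hwm_{i|a}] = \E[\hmu_a]$ for every $i$, reducing the claim to a computation of $\E[\hmu_a]$.

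First, I handle the two summands of \eqref{eq:gob-mle}. For the $\{A_i=a\}$ summand, simple randomization $A_i \perp Y_i(a)$ gives $\E[I(A_i=a)Y_i] = \pi_a \mu_a$; equivalently, this is \eqref{eq:predict-mle} after taking expectations, namely $\E[\hwm_{i|a}\mid A_i = a] = \mu_a$. For the $\{A_i\neq a\}$ summand, exchangeability yields
\[
n^{-1}\E\!\left[\sum_{i\colon A_i\neq a}\hwm_{i|a}\right] = \E[I(A_i\neq a)\hwm_{i|a}] = (1-\pi_a)\E[\hwm_{i|a} \mid A_i\neq a].
\]
Adding the two contributions and subtracting $\mu_a$ from both sides gives
\[
\E[\hwm_{i|a} - \mu_a] = (1-\pi_a)\bigl(\E[\hwm_{i|a} \mid A_i \neq a] - \mu_a\bigr).
\]

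It remains to rewrite the $\mu_a$ inside the conditional expectation as $\E[m_{i|a}\mid A_i\neq a]$. Since $m_{i|a} = m(X_{i|a}^\top \bbeta_0)$ depends only on $W_i$ (the treatment coordinate of $X_{i|a}$ being fixed at $a$) and $W_i\perp A_i$ under simple randomization, $\E[m_{i|a}\mid A_i\neq a] = \E[m_{i|a}]$. The identity $\E[m_{i|a}] = \mu_a$ then follows by reading off the arm-$a$ indicator coordinate of the population-level score equation $\E[(Y_i-m_i)X_i] = \bzero_p$: it gives $\E[I(A_i=a)(Y_i-m_i)] = 0$ and hence $\mu_a = \E[m_{i|a}\mid A_i = a] = \E[m_{i|a}]$, again by $A_i \perp W_i$.

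The main obstacle is conceptual rather than computational: making sure every ingredient of the setup (i.i.d.\ sampling, simple randomization, canonical link with MLE, and an $X_i$ that contains or spans an indicator for arm $a$, as the standard pooled working model does) is used precisely where needed. The substantive step is the last one, where the canonical score equation pins down $\E[m_{i|a}] = \mu_a$ even under model misspecification; once this is secured the lemma follows by the algebra above.
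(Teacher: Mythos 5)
Your proof is correct and follows essentially the same route as the paper's: both decompose $\E[\hmu_a]$ via prediction unbiasedness into the $\{A_i=a\}$ and $\{A_i\neq a\}$ contributions and then identify $\mu_a$ with $\E[m_{i|a}\mid A_i\neq a]$. The only difference is that you spell out the justification for $\E[m_{i|a}]=\mu_a$ (reading off the arm-$a$ coordinate of the population score equation together with $W_i\perp A_i$), which the paper uses implicitly; that added detail is welcome but does not change the argument.
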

\noindent The proof can be found in Appendix~\ref{append:proof-gc-bias-conditional}. Building upon the gOB estimator in \eqref{eq:gob-mle}, we now describe our two debiased gOB estimators with $\hmu_a$. Let $\hh_{ii}$ and $\hbpsi^{\bbeta}_i$ be the estimated leverage score and IF of $\hbbeta$, respectively. The two proposed estimators, denoted by $\hmu_a^1$ and $\hmu_a^2$, is obtained by replacing $X_{i|a}^\top \hbbeta$ in \eqref{eq:gob-mle} with $X_{i|a}^\top \hbbeta^1$ or $X_{i|a}^\top \hbbeta_i^2$, respectively:
\begin{equation}
\label{eq:gob-mle-bc} 
\hbbeta^1 \coloneqq \hbbeta + \frac{1}{n} \SUM{i=1}{n} \hh_{ii} \hbpsi^{\bbeta}_i, \quad 
\hbbeta_i^2 \coloneqq \hbbeta^1 - \frac{1}{n} \, \hbpsi^{\bbeta}_i.
\end{equation}

\begin{theorem} \label{thm:gob-mle}
With potential model misspecification, $\E [ \hmu_a^1 - \mu_a ] = O ( n^{-3/2} )$ under stratified working models, or $n^{-1} b_1^{(2)} ( \hmu_a ) + O ( n^{-3/2} )$ under pooled working models; and under pooled working models, $\E [ \hmu_a^2 - \mu_a ] = O ( n^{-3/2} )$.
\end{theorem}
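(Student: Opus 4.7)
My plan is to start from the gOB representations
\[
\hmu_a^1 = \frac{1}{n}\left\{\SUM{i\colon A_i=a}{} Y_i + \SUM{i\colon A_i \neq a}{} m(X_{i|a}^\top \hbbeta^1)\right\}, \quad
\hmu_a^2 = \frac{1}{n}\left\{\SUM{i\colon A_i=a}{} Y_i + \SUM{i\colon A_i \neq a}{} m(X_{i|a}^\top \hbbeta_i^2)\right\}.
\]
Under simple randomization $\E[I(A_i=a) Y_i] = \pi_a\mu_a$, and the canonical GLM population score equation gives $\E[m_{i|a}] = \mu_a$, so the bias reduces to analyzing $\E[I(A_i \neq a)\{m(X_{i|a}^\top \hbbeta^1) - m_{i|a}\}]$ and the analogous expression for $\hmu_a^2$, up to order $n^{-3/2}$. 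I would perform a second--order Taylor expansion of $m$ around $\bbeta_0$ and substitute the stochastic expansion $\hbbeta - \bbeta_0 = n^{-1}\SUM{\ell=1}{n}\bpsi_\ell^{\bbeta} + R_n$ with $\E[R_n] = n^{-1}\bb_1(\hbbeta) + O(n^{-3/2})$ from Proposition~\ref{prop:beta-glm}. The proof of Theorem~\ref{thm:gc} (Appendix~\ref{eq:append-proof-bias-gc}, assumed) shows how these pieces combine: the $\ell = i$ self--term yields $n^{-1} b_1^{(2)}(\hmu_a)$; the mean shift $\E[R_n]$ multiplied by $(1-\pi_a)\E[m'_{i|a} X_{i|a}^\top]$, after applying \eqref{eq:zhang}, has its $-B^{-1}H_2$ piece exactly cancel the quadratic contribution $\tfrac{1}{2}m''_{i|a}[X_{i|a}^\top(\hbbeta - \bbeta_0)]^2$, while its $-B^{-1}H_1$ piece becomes $n^{-1} b_1^{(1)}(\hmu_a)$.

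For $\hmu_a^1$, the role of the correction $n^{-1}\SUM{\ell=1}{n}\hh_{\ell\ell}\hbpsi_\ell^{\bbeta}$ is to add back the $-B^{-1} H_1$ piece of $\bb_1(\hbbeta)$. Using $\hh_{\ell\ell} = n^{-1} m'_\ell X_\ell^\top B^{-1} X_\ell + O_\sp(n^{-1})$ and $\bpsi_\ell^{\bbeta} = B^{-1}(Y_\ell - m_\ell) X_\ell$, a direct moment calculation yields
\[
\E\left[\frac{1}{n}\SUM{\ell=1}{n}\hh_{\ell\ell}\hbpsi_\ell^{\bbeta}\right] = \frac{1}{n}\, B^{-1} H_1 + O(n^{-3/2}).
\]
Propagating this through $(1-\pi_a)\E[m'_{i|a} X_{i|a}^\top]$ in the linear part of the Taylor expansion produces an extra $-n^{-1} b_1^{(1)}(\hmu_a)$ (again via \eqref{eq:zhang}), precisely offsetting the $b_1^{(1)}$ contribution that would otherwise appear as in $\hmu_a$. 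Under pooled models only $n^{-1} b_1^{(2)}(\hmu_a)$ survives; under stratified models Remark~\ref{rem:gc-bias} forces $b_1^{(2)}(\hmu_a) \equiv 0$, yielding the claimed $O(n^{-3/2})$ bound.

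For $\hmu_a^2$, the key algebraic identity is
\[
b_1^{(2)}(\hmu_a) = \E\left[I(A_i \neq a)\, m'_{i|a} X_{i|a}^\top \bpsi_i^{\bbeta}\right],
\]
obtained by writing $\SUM{b \neq a}{}\pi_b \E[\cdot \mid A_i = b]$ as $\E[I(A_i \neq a)\,\cdot\,]$ and noting that $\bpsi_i^{\bbeta} = B^{-1}(Y_i(b) - m_{i|b}) X_{i|b}$ on $\{A_i = b\}$. This identifies $b_1^{(2)}(\hmu_a)$ as precisely the $\ell = i$ self--term in the linear expansion. Subtracting $n^{-1}\hbpsi_i^{\bbeta}$ from $\hbbeta^1$ inside each $X_{i|a}^\top \hbbeta_i^2$ removes exactly this self--interaction in expectation, leaving only an $O(n^{-3/2})$ residual, since $\E[\bpsi_\ell^{\bbeta}] = 0$ and $\bpsi_\ell^{\bbeta}$ is independent of $(W_i, A_i, Y_i)$ for $\ell \neq i$.

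The main obstacle is the bookkeeping of second--order terms: I must cleanly separate the linear and quadratic parts of the Taylor expansion, show that the randomness in $\hh_{\ell\ell}$ and $\hbpsi_\ell^{\bbeta}$ relative to their population limits perturbs the corrections only at order $O(n^{-3/2})$ (using Assumptions~\ref{as:cov}--\ref{as:link}), and handle the dependence between $\hbbeta$ and the specific $D_i$ entering through $X_{i|a}$ by isolating the $\ell = i$ contribution from the $\ell \neq i$ contributions in the IF expansion. Identity~\eqref{eq:zhang} is the crucial tool throughout for reducing $\E[m'_{i|a} X_{i|a}^\top] B^{-1}(\cdot)$ to conditional expectations given $A_i = a$, so that the $H_1$ and $H_2$ expressions convert into the $b_1^{(j)}(\hmu_a)$ forms of Theorem~\ref{thm:gc}.
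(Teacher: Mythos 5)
Your proposal is correct and rests on the same core ingredients as the paper's proof: the gOB representation \eqref{eq:gob-mle}, the second-order stochastic expansion of $\hbbeta$, identity \eqref{eq:zhang}, and the identification of $b_1^{(1)} ( \hmu_a )$ with the $-B^{-1} H_1$ piece of $\bb_1 ( \hbbeta )$ and of $b_1^{(2)} ( \hmu_a )$ with the $\ell = i$ self-term $\E [ I ( A_i \neq a ) \, m_{i|a}' X_{i|a}^\top \bpsi_i^{\bbeta} ]$, which the $- n^{-1} \hbpsi_i^{\bbeta}$ adjustment in $\hbbeta_i^2$ is designed to kill. Where you differ is the bookkeeping. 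The paper inserts $\hwm_{i|a}$ as a pivot, writing $\E [ m ( X_{i|a}^\top \hbbeta_i^2 ) - m_{i|a} \mid A_i \neq a ]$ as $\E [ m ( X_{i|a}^\top \hbbeta_i^2 ) - \hwm_{i|a} \mid A_i \neq a ] + \E [ \hwm_{i|a} - m_{i|a} \mid A_i \neq a ]$: the first piece needs only a \emph{first}-order expansion in the $O_{\sp} ( n^{-1} )$ perturbation $\hbbeta_i^2 - \hbbeta$, and the second is read off wholesale from Lemma~\ref{lem:gc} combined with Theorem~\ref{thm:gc}, so the second-order expansion is never redone. You instead expand directly around $\bbeta_0$ conditionally on $A_i \neq a$, re-deriving the conditional analogue of the Theorem~\ref{thm:gc} computation; this is legitimate, but you should state explicitly that the $H_2$-versus-quadratic cancellation survives the conditioning because every population quantity involved ($m_{i|a}''$, $X_{i|a}$, $B$, $M$) depends on $W_i$ alone, so conditioning on $A_i \neq a$ leaves those expectations unchanged under simple randomization. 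Two order-of-magnitude points to tighten when writing this up: the error in the approximation $\hh_{\ell\ell} \approx n^{-1} m_\ell' X_\ell^\top B^{-1} X_\ell$ is $O_{\sp} ( n^{-3/2} )$, not $O_{\sp} ( n^{-1} )$ as you wrote (at $O_{\sp} ( n^{-1} )$ it would be of the same order as the main term and your moment calculation for $\E [ n^{-1} \SUM{\ell=1}{n} \hh_{\ell\ell} \hbpsi_\ell^{\bbeta} ]$ would not close); and, as the paper does via \eqref{eq:proof-gob-mle-suppl-1}--\eqref{eq:proof-gob-mle-suppl-2}, you need to verify that replacing $\hh_{\ell\ell}$ and $\hbpsi_\ell^{\bbeta}$ by their population counterparts perturbs the correction term only at $O_{\sp} ( n^{-3/2} )$. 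With those points made precise, your argument delivers all three claims of the theorem.
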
 
\noindent The proof can be found in Appendix~\ref{append:proof-gob-mle}. Under pooled working models, the first-order bias of $\hmu_a^1$ still persists since the modification for $\hbbeta$ through adding $n^{-1} \SUM{i}{} \hh_{ii} \hbpsi^{\bbeta}_i$ only removes $n^{-1} b_1^{(1)} ( \hmu_a )$. To completely remove $n^{-1} b_1 ( \hmu_a )$, a further modification is required. Specifically, for each $X_{i|a}$ ($A_i \neq a$), from its associated $\hbbeta^1$ we subtract one additional term $n^{-1} \hbpsi_i^{\bbeta}$.  

Next, motivated by $\hmu_a^1$ and $\hmu_a^2$ we develop the debiased estimators with $\tmu_a$. Our next three gOB estimators, denoted by $\tmu_a^0$, $\tmu_a^1$ and $\tmu_a^2$, are then constructed by replacing $X_{i|a}^\top \hbbeta$ in \eqref{eq:gob-mle} with $X_{i|a}^\top \tbbeta^0$, $X_{i|a}^\top \tbbeta^1$ or $X_{i|a}^\top \tbbeta_i^2$, respectively:
\begin{equation}
\label{eq:gob-fc-bc} 
\begin{gathered}
\tbbeta^0 \coloneqq \tbbeta - \frac{1}{2n} \tB^{-1} \SUM{i=1}{n} \th_{ii} \cdot \frac{\twm_i''}{\twm_i'} \cdot X_i, \\
\tbbeta^1 \coloneqq \tbbeta^0 + \frac{1}{n} \SUM{i=1}{n} \th_{ii} \tbpsi_i^{\bbeta}, \quad 
\tbbeta_i^2 \coloneqq \tbbeta^1 - \frac{1}{n} \, \tbpsi^{\bbeta}_i.
\end{gathered}
\end{equation}

\begin{theorem} \label{thm:gob-fc}
With potential model misspecification, $\E [ \tmu_a^0 - \mu_a ] = n^{-1} b_1^{(1)} ( \tmu_a ) + O ( n^{-3/2} )$; $\E [ \tmu_a^1 - \mu_a ] = O ( n^{-3/2} )$ under stratified working models, or $n^{-1} b_1^{(2)} ( \hmu_a ) + O ( n^{-3/2} )$ under pooled working models; and under pooled working models, $\E [ \tmu_a^2 - \mu_a ] = O ( n^{-3/2} )$.
\end{theorem}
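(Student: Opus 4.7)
The overall strategy is to reduce all three claims to Theorem \ref{thm:gob-mle}, after a preparatory step that neutralizes the Firth-specific component of the bias. The crux is to show that $\E[\tbbeta^0 - \bbeta_0] = n^{-1}\bb_1(\hbbeta) + O(n^{-3/2})$. By the Firth score equation \eqref{eq:score-fc}, $\Delta^{(n)}(\tbbeta) = -\SUM{i=1}{n} U_i(\tbbeta)$, so $\tbbeta^0$ can equivalently be written as $\tbbeta + (n\tB)^{-1}\SUM{i=1}{n} U_i(\tbbeta)$, i.e.\ a one-step quasi-Newton update from $\tbbeta$ toward the unpenalized score equation. Combining \eqref{bias-Firth} with a Taylor expansion of $\Delta^{(n)}(\tbbeta)$ around $\bbeta_0$, using $\th_{ii} = n^{-1} m_i'\cdot X_i^\top B^{-1} X_i + O_\sp(n^{-3/2})$ and $\tB^{-1} = B^{-1} + O_\sp(n^{-1/2})$, one obtains $n^{-1}\E[\tB^{-1}\Delta^{(n)}(\tbbeta)] = n^{-1} B^{-1} H_3 + O(n^{-3/2})$, which exactly cancels the Firth-specific term $B^{-1} H_3$ in $\bb_1(\tbbeta)$ identified in Proposition \ref{prop:beta-glm}.

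Given this preparation, $\tmu_a^0$ has the same functional form as the gOB representation \eqref{eq:gob-mle} of $\hmu_a$, with $\hbbeta$ replaced by $\tbbeta^0$. Since $\tbbeta^0$ and $\hbbeta$ share both their first-order bias (by the preparatory step) and their IF $\bpsi_i^{\bbeta}$ — because the extra term defining $\tbbeta^0$ is $O_\sp(n^{-1})$ and therefore contributes only to the $O(n^{-3/2})$ remainder in any second-order stochastic expansion — the argument used in Theorem \ref{thm:gc} for $\hmu_a$ (a conditional bias identity of the flavor of Lemma \ref{lem:gc} combined with the reduction \eqref{eq:zhang}) transfers verbatim and yields $\E[\tmu_a^0 - \mu_a] = n^{-1} b_1(\hmu_a) + O(n^{-3/2})$. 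Since $b_1^{(1)}(\tmu_a) \equiv b_1(\hmu_a)$ by Theorem \ref{thm:gc}, this is the first claim.

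The residual bias of $\tmu_a^0$ is then $n^{-1}\{b_1^{(1)}(\hmu_a) + b_1^{(2)}(\hmu_a)\}$, and the modifications $\tbbeta^1 = \tbbeta^0 + n^{-1}\SUM{i=1}{n}\th_{ii}\tbpsi_i^{\bbeta}$ and $\tbbeta_i^2 = \tbbeta^1 - n^{-1}\tbpsi_i^{\bbeta}$ are structurally identical to $\hbbeta^1, \hbbeta_i^2$ in \eqref{eq:gob-mle-bc}. Because $\th_{ii}$ and $\tbpsi_i^{\bbeta}$ agree with their MLE counterparts up to $O_\sp(n^{-1})$, they induce identical first-order adjustments in the gOB estimator, so the bias-removal arithmetic from the proof of Theorem \ref{thm:gob-mle} applies with $\tbbeta^0$ in the role of $\hbbeta$: the $\th_{ii}\tbpsi_i^{\bbeta}$ correction removes $n^{-1} b_1^{(1)}(\hmu_a)$ (so $\tmu_a^1$ is debiased under stratified models by Remark \ref{rem:gc-bias}, and leaves exactly $n^{-1} b_1^{(2)}(\hmu_a)$ under pooled models), while the subject-specific shift $-n^{-1}\tbpsi_i^{\bbeta}$ in $\tbbeta_i^2$ eliminates the remaining $n^{-1} b_1^{(2)}(\hmu_a)$ under pooled models. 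The principal obstacle is the preparatory step in the first paragraph: it demands uniform control of the stochastic expansion of $\Delta^{(n)}(\tbbeta)$ to order $O_\sp(n^{-1/2})$, carefully tracking the coupled fluctuations of $\tB^{-1}$, $\th_{ii}$, and $\twm_i''/\twm_i'$, in order to upgrade the natural $o(n^{-1})$ remainder to the required $O(n^{-3/2})$.
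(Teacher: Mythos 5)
Your proof is correct in substance but follows a genuinely different route from the paper's. The paper re-derives, for the FC-based estimators, the conditional-bias decomposition used for Theorem~\ref{thm:gob-mle}: it writes $\E[\tmu_a^2-\mu_a]=(1-\pi_a)\,\E[m(X_{i|a}^\top\tbbeta_i^2)-m_{i|a}\mid A_i\neq a]$, splits this into $\E[m(X_{i|a}^\top\tbbeta_i^2)-\twm_{i|a}\mid A_i\neq a]$ plus $\E[\twm_{i|a}-m_{i|a}\mid A_i\neq a]$, handles the second piece with an FC analogue of Lemma~\ref{lem:gc} built from the modified prediction identity of Lemma~\ref{lem:predict-fc}, and cancels the Firth-specific contribution against $\E[m_{i|a}'X_{i|a}^\top]B^{-1}H_3$ via \eqref{eq:proof-bias-fc-suppl}. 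You instead couple each FC-based nuisance estimator to its MLE counterpart: your observation that $\tbbeta^0=\tbbeta+(n\tB)^{-1}\SUM{i=1}{n}U_i(\tbbeta)$ is a one-step Newton update toward the unpenalized score equation does not appear in the paper, and the cancellation of the explicit correction $-\tfrac{1}{2n}\tB^{-1}\SUM{i=1}{n}\th_{ii}(\twm_i''/\twm_i')X_i$ against the term $n^{-1}B^{-1}H_3$ in the expansion \eqref{eq:proof-beta-fc-expand-2rd} yields $\tbbeta^0=\hbbeta+O_\sp(n^{-3/2})$, hence $\tbbeta^1=\hbbeta^1+O_\sp(n^{-3/2})$ and $\tbbeta_i^2=\hbbeta_i^2+O_\sp(n^{-3/2})$, so Theorems~\ref{thm:gc} and~\ref{thm:gob-mle} can be imported wholesale. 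Your route is more economical; the paper's is more self-contained and makes explicit where Lemma~\ref{lem:predict-fc} and the identity \eqref{eq:proof-bias-fc-suppl} enter.

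One soft spot: your reduction needs the cancellation to hold \emph{in probability} at rate $O_\sp(n^{-3/2})$ (which it does, by \eqref{eq:proof-gob-mle-suppl-3} together with \eqref{eq:proof-beta-fc-expand-2rd}), not merely in expectation as your preparatory step is phrased. Matching first-order biases and IFs is not by itself sufficient, because the order-$n^{-1}$ random fluctuation of the nuisance estimator enters the bias of the mean estimator through its cross-moments with observation $i$'s own covariates. Relatedly, the clause ``the extra term defining $\tbbeta^0$ is $O_\sp(n^{-1})$ and therefore contributes only to the $O(n^{-3/2})$ remainder'' is false as a general principle: an $O_\sp(n^{-1})$ perturbation of the nuisance parameter contributes at exactly the order of the bias being computed (this is precisely why $\Delta^{(n)}$ creates $b_1^{(2)}(\tmu_a)$ in Theorem~\ref{thm:gc}). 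What rescues the argument is the specific stochastic cancellation of the two $O_\sp(n^{-1})$ terms, which you should state and prove as an in-probability identity rather than only at the level of expectations.
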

\noindent The proof can be found in Appendix~\ref{append:proof-gob-fc}. Our first proposal in \eqref{eq:gob-fc-bc}, only removes $n^{-1} b_1^{(2)} ( \tmu_a )$, which is the bias due to $\Delta^{(n)} (\bbeta)$ in \eqref{eq:score-fc}, while the last two proposed estimators are in the same fashion as the two in \eqref{eq:gob-mle-bc}. When $p/n$ is negligible, $n^{-1} b_1^{(1)} ( \tmu_a ) = n^{-1} b_1 ( \hmu_a )$ is marginal and $n^{-1} b_1^{(2)} ( \tmu_a )$ dominates $n^{-1} b_1 ( \tmu_a )$ (Proposition~\ref{prop:bound-gc}). Therefore, $\tmu_a^0$ is particularly suitable for low-dimensional settings, which are common in RCTs. 

All of our proposed debiased estimators in \eqref{eq:gob-mle-bc} and \eqref{eq:gob-fc-bc} are easy to implement as the leverage score, bread matrix, and IF can be directly obtained from the output of fitting a GLM using off-the-shelf software packages. Moreover, we provide the specific versions for stratified working models in Appendix~\ref{append:wm-strat-br}.

\begin{figure}[t!]
\centering
\includegraphics[width=\linewidth]{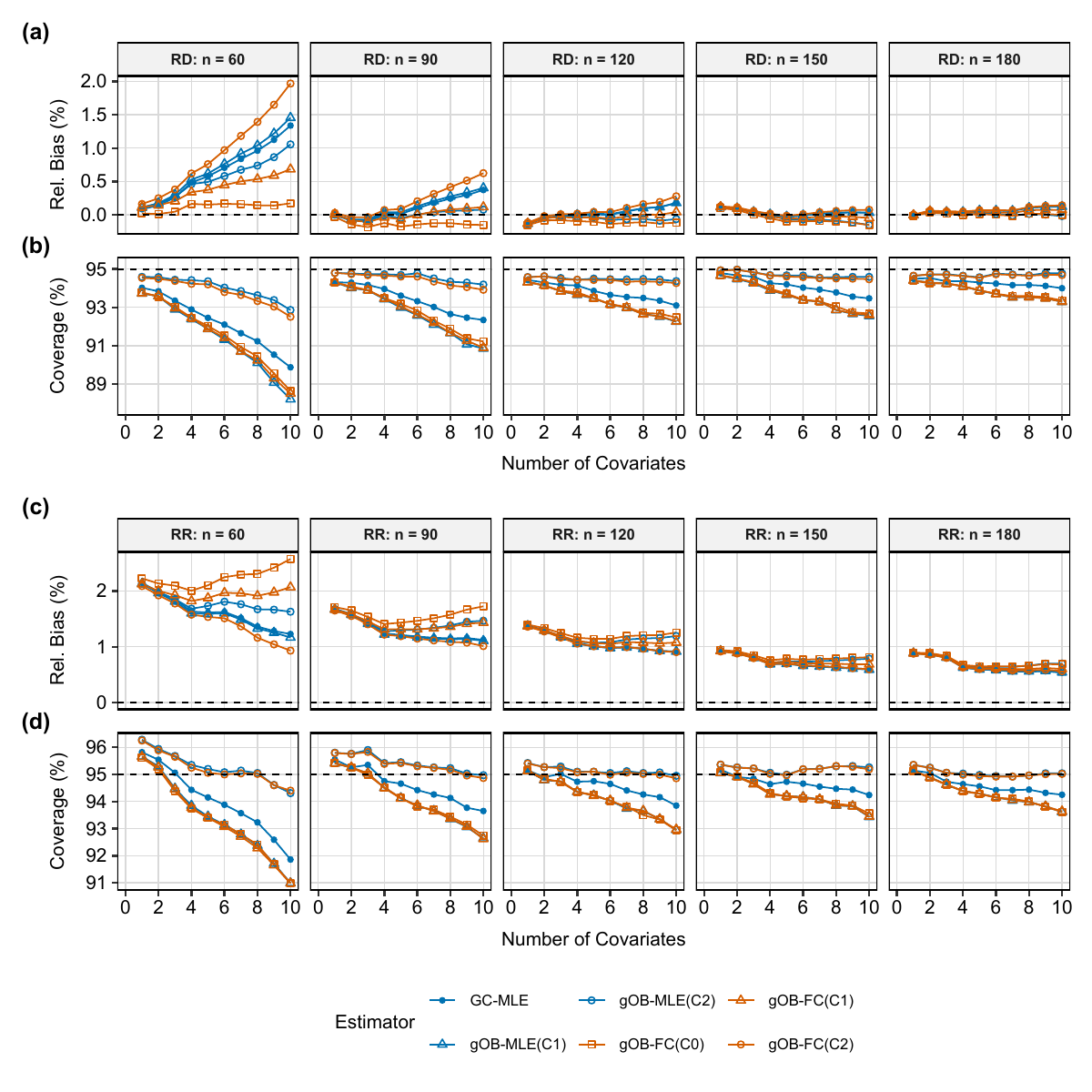}
\caption{
The bias of treatment effect estimates and coverage of the 95\% CI for \textbf{GC-MLE} ($\hmu_a$), \textbf{gOB-MLE(C1)} ($\hmu_a^1$), \textbf{gOB-MLE(C2)} ($\hmu_a^2$), \textbf{gOB-FC(C0)} ($\tmu_a^0$), \textbf{gOB-FC(C1)} ($\tmu_a^1$) and \textbf{gOB-FC(C2)} ($\tmu_a^2$). The debiased gOB estimators use the (\textit{theoretical}) IF-based variance estimator for interval estimation. RD: $\mu_2-\mu_1$; RR: $\mu_1/\mu_2$; $n$: sample size; Rel.: Relative. 
}
\label{fig:bc_gob}
\end{figure}

\paragraph{Simulation Experiment I(b)}  Figure~\ref{fig:bc_gob}a~and~\ref{fig:bc_gob}c present results from \textbf{Experiment I}, continuing the evaluation of bias in the debiased gOB estimators proposed in \eqref{eq:gob-mle-bc} and \eqref{eq:gob-fc-bc}. We denote these estimators as \textbf{gOB-MLE(C1)} ($\hmu_a^1$), \textbf{gOB-MLE(C2)} ($\hmu_a^2$), \textbf{gOB-FC(C0)} ($\tmu_a^0$), \textbf{gOB-FC(C1)} ($\tmu_a^1$), and \textbf{gOB-FC(C2)} ($\tmu_a^2$). Because the working models are pooled, only \textbf{gOB-MLE(C2)} and \textbf{gOB-FC(C2)} eliminate the $O(n^{-1})$ bias (i.e., they are fully debiased), whereas the remaining variants retain residual $O(n^{-1})$ bias (i.e., they are only partially debiased). Results for \textbf{GC-FC} are omitted because its bias (see Figure~\ref{fig:gc_fc}b~and~\ref{fig:gc_fc}d) is substantially larger than that of the other estimators.

Across all three FC-based debiased estimators, the bias for both $\mu_2-\mu_1$ and $\mu_1/\mu_2$ is substantially reduced relative to \textbf{GC-FC}. Their bias profiles are comparable to that of \textbf{GC-MLE} and, in many scenarios, even smaller. When sample sizes are at least 120 or the number of adjusted covariates is fewer than six, the bias profiles of the five debiased gOB estimators and \textbf{GC-MLE} are largely indistinguishable, despite modest variation across scenarios defined by sample size, covariate count, and estimand (difference/ratio). However, when $n=60$, \textbf{gOB-FC(C0)} attains the smallest bias for estimating $\mu_2-\mu_1$, whereas for $\mu_1/\mu_2$ it exhibits the largest bias. The same pattern holds when $n=90$ and the number of adjusted covariates exceeds six.

Figure~\ref{fig:mse} (Appendix~\ref{append:sim-res}) reports the root mean square error (RMSE) of all debiased gOB estimators, expressed relative to the RMSE of the corresponding \textbf{GC-MLE}. In most scenarios, FC-based gOB estimators yield lower RMSE than their MLE-based counterparts; in particular, \textbf{gOB-FC(C0)} attains the smallest RMSE in nearly all scenarios. Nevertheless, the RMSE differences among the debiased estimators are generally modest.

Figure~\ref{fig:bc_gob}b~and~\ref{fig:bc_gob}d summarize the interval coverage of the proposed gOB estimators, with variance estimation based on the \textit{theoretical} IF \eqref{eq:if-mu-aipw}. While \textbf{gOB-MLE(C1)}, \textbf{gOB-FC(C0)}, and \textbf{gOB-FC(C1)} under-perform \textbf{GC-MLE}, \textbf{gOB-MLE(C2)} and \textbf{gOB-FC(C2)} show comparably strong improvements. For $\mu_1/\mu_2$, the coverage probabilities remain close to or slightly above the nominal level in most scenarios (i.e., adjusting for up to 8 covariates when $n=60$ and up to 10 covariates when $n=90-180$). For $\mu_2 -\mu_1$, the improved coverage probabilities still falls below the nominal level in most scenarios.

\subsection{Small-sample bias adjustment for variance estimation} \label{sec:method-var}

The proposed small-sample bias adjustment for variance estimation is motivated by the gOB estimator with $\hbbeta^1$ in \eqref{eq:gob-mle-bc}. Specifically, $\hbbeta^1$ can be treated as an alternative estimator of the nuisance parameter ($\bbeta_0$), as $\hbbeta^1 = \hbbeta + O_{\sp} ( n^{-1} )$. This suggests that variance estimation for $\hmu_a^1$ can be improved by accounting for the difference between $\hbbeta^1$ and $\hbbeta$. In the following, we propose a simple small-sample bias adjustment by leveraging this difference via the linearization method \citep{deville1999variance}, which can also be applied for variance estimation of $\hbmu$ and $\hbmu^2$. Analogously, such small-sample bias adjustment can also be constructed for FC-based estimators ($\hbmu^1$), which is also applicable for variance estimation of $\tbmu^0$ and $\tbmu^2$.

The linearization method for variance estimation of an estimator relies on deriving its \textit{linearized variable}. In the following theorem, we show that both $\hmu_a^1$ and $\tmu_a^1$ can be written in terms of the same linearized variable. The proof can be found in Appendix~\ref{append:proof-linear}. 
\begin{theorem} \label{thm:linear}
Under potential model misspecification, $\hmu_a^1 = n^{-1} \SUM{i=1}{n} \ell_i ( \hmu_a^1 ) + o_{\sp} ( n^{-1/2} )$ and $\tmu_a^1 = n^{-1} \SUM{i=1}{n} \ell_i ( \tmu_a^1 ) + o_{\sp} ( n^{-1/2} )$, where 
\begin{equation}
\ell_i ( \hmu_a^1 ) = \ell_i ( \tmu_a^1 ) = \pi_a^{-1} I ( A_i = a ) ( Y_i - m_i ) \cdot ( 1 + \barh_{ii} ) + m_{i|a} - \mu_a, \label{eq:if-adj}
\end{equation}
and $\barh_{ii} \coloneqq n^{-1} m_i' X_i^\top B^{-1} X_i$.
\end{theorem}
\noindent The above \textit{linearized variable} is related with the theoretical IF presented in \eqref{eq:if-mu-aipw}, except that the first term in \eqref{eq:if-mu-aipw} is multiplied by $1 + \barh_{ii}$ (and thus its expectation is nonzero), where $\barh_{ii}$ can be simply estimated by $\hh_{ii}$ or $\th_{ii}$, the corresponding leverage score. It is noteworthy that $\ell_i ( \hmu_a^1 )$ can be also written as $ \E [ m_{1|a}' \cdot X_{1|a}^\top ] \bpsi_i^{\bbeta} \cdot ( 1 + \barh_{ii} ) + m_{i|a} - \mu_a$. This alternative formulation is related with the empirical IF presented in \eqref{eq:if-mu-aipw}, except that $\bpsi^{\bbeta}$ in \eqref{eq:if-mu-aipw} is replaced by $( 1 + \barh_{ii} ) \cdot \bpsi^{\bbeta}$. In fact, $\ell_i ( \hmu_a^1 )$ is developed from this formulation; see Appendix~\ref{append:proof-linear}. The additional term (i.e., $\barh_{ii} \bpsi^{\bbeta}_i$) accounts for the difference between $\hbbeta^1$ and $\hbbeta$.

Let $\ell_i^\ast$ be an estimate of \textit{linearized variable} defined in \eqref{eq:if-adj} using corresponding estimates of $(\barh_{ii}, m_i, m_{i|a}, \mu_a)$. Variance estimation for $\hbmu^1$ and $\tbmu^1$ is convenient to implement in the same manner as the IF-based variance estimator. The corresponding variance estimator is then formally written as
\begin{equation}
\frac{1}{n(n-1)} \SUM{i=1}{n} \Big( \bl_i^\ast - \frac{1}{n} \SUM{j=1}{n} \bl_j^\ast \Big)^{\otimes2}, \label{eq:var-linear}
\end{equation}
which is the sample variance of $\bl_i^\ast = (\ell_1^\ast,\ldots,\ell_n^\ast)^\top$ scaled by $1/n$ \citep{deville1999variance}. Moreover, Theorem~\ref{thm:linear} and \eqref{eq:var-linear} are applicable to $\hbmu$, $\hbmu^2$, $\tbmu$, $\tbmu^0$, and $\hbmu^2$, since all of these estimators share the same asymptotic normal distribution. 

The proposed adjusted variance estimators are as easy to implement as standard IF-based variance estimators. They only modify the first term in the estimated \textit{theoretical} IF \eqref{eq:if-mu-aipw} by a multiplier, which is simply one plus the leverage score. This multiplier accounts for the estimation error in the estimated nuisance parameter, which contributes to the overall estimation error in g-computation estimators.  

\begin{remark}
Our proposal for small-sample bias adjustment is closely related to the well-known HC3 correction \citep{long2000using,mancl2001covariance}.  The multiplier in HC3 can be written as a geometric series,
$
(1-\hh_{ii})^{-1} = 1 + \hh_{ii} + \hh_{ii}^2 + \hh_{ii}^3 + \cdots,
$
provided that $\hh_{ii} < 1$.  The first two terms in this geometric series exactly constitute the multiplier in our proposal.  
\end{remark}

\begin{figure}[t!]
\centering
\includegraphics[width=\linewidth]{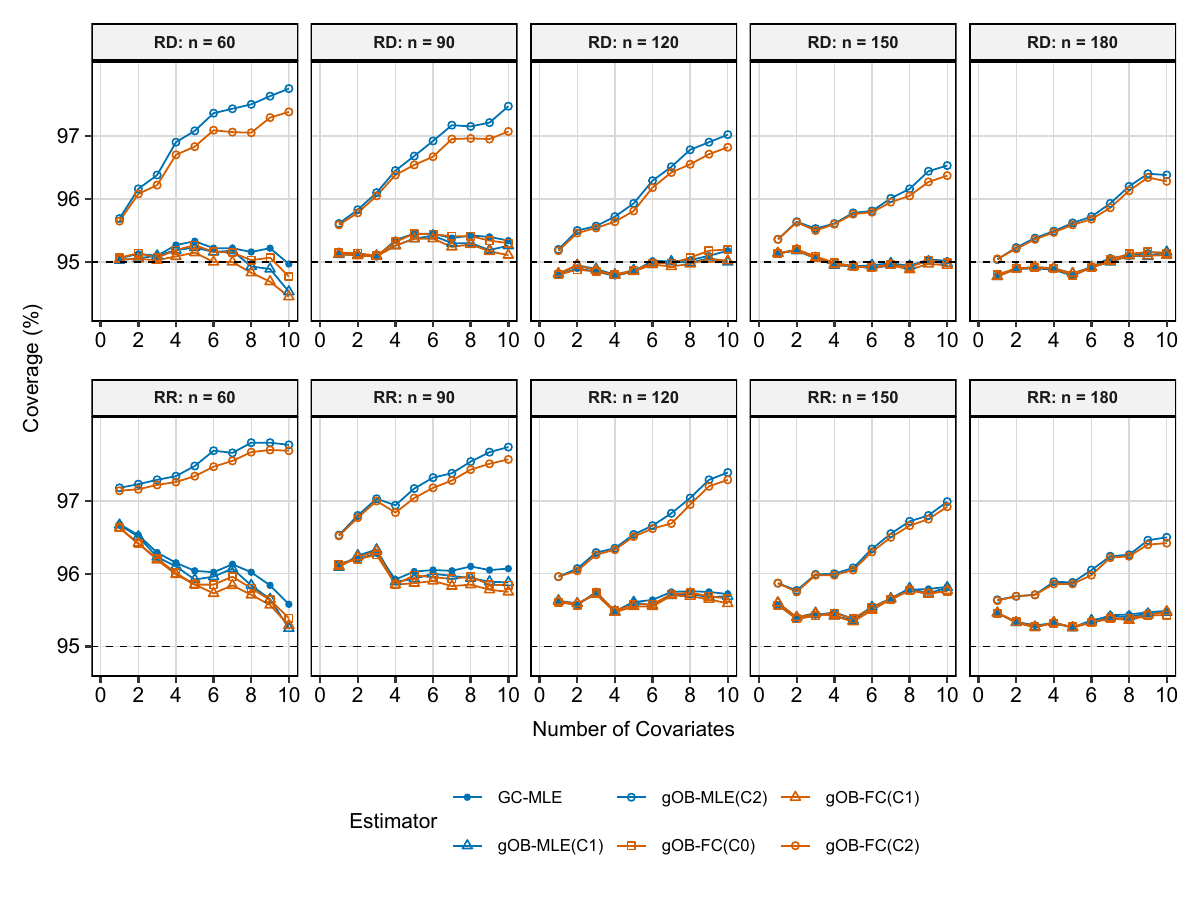}
\caption{
The coverage of the 95\% CI with the proposed small-sample bias adjustment for \textbf{GC-MLE} ($\hmu_a$), \textbf{gOB-MLE(C1)} ($\hmu_a^1$), \textbf{gOB-MLE(C2)} ($\hmu_a^2$), \textbf{gOB-FC(C0)} ($\tmu_a^0$), \textbf{gOB-FC(C1)} ($\tmu_a^1$) and \textbf{gOB-FC(C2)} ($\tmu_a^2$). RD: $\mu_2-\mu_1$; RR: $\mu_1/\mu_2$; $n$: sample size. 
}
\label{fig:bc_gob_adj}
\end{figure}

\paragraph{Simulation Experiment I(c)} 

Figure~\ref{fig:bc_gob_adj} presents the simulation results for the interval coverage of \textbf{GC-MLE} and the debiased gOB estimators with the proposed small-sample bias adjustment. For $\mu_2-\mu_1$, \textbf{GC-MLE}, \textbf{gOB-MLE(C1)}, \textbf{gOB-FC(C0)}, and \textbf{gOB-FC(C1)} show substantial improvements in interval coverage, staying close to or above the nominal level. In particular, with the adjusted variance estimator, \textbf{GC-MLE} maintains near or above nominal coverage as the number of adjusted covariates increases when $n=60$. By contrast, \textbf{gOB-MLE(C2)} and \textbf{gOB-FC(C2)} exhibit over-coverage when combined with the proposed small-sample bias adjustment; the same over-coverage occurs for all estimators for $\mu_1/\mu_2$.

Figure~\ref{fig:width} (Appendix~\ref{append:sim-res}) reports the 95\% CI widths for \textbf{GC-MLE} and all debiased gOB estimators, computed with the adjusted variance estimator. Widths are expressed relative to those of the corresponding \textbf{GC-MLE} under the adjusted variance estimator. \textbf{gOB-MLE(C2)} and \textbf{gOB-FC(C2)} yield the widest intervals, consistent with the over-coverage in Figure~\ref{fig:bc_gob_adj}. For the remaining estimators, interval-width profiles are largely indistinguishable.

\subsection{Recommendation} \label{sec:rec}

In the following, we provide guidance on the choice of estimation methods when pooled working models are used; for stratified working models, see Appendix~\ref{append:sim-2}.

For RD, in most simulation settings ($n\geq90$), $\hmu_a$ (\textbf{GC-MLE}), $\hmu_a^1$ (\textbf{gOB-MLE(C1)}), $\tmu_a^0$ (\textbf{gOB-FC(C0)}), and $\tmu_a^1$ (\textbf{gOB-FC(C1)}) perform well, exhibiting similar bias and interval coverage with the bias-adjusted variance estimator. Although the RMSE and the interval width vary across scenarios, the differences are modest. In summary, $\tmu_a^0$ is preferable when the sample size is small or data separation arises; otherwise $\hat{\mu}_{a}$ can still be used. 

For RR, $\hmu_a^2$ (\textbf{gOB-MLE(C2)}) and $\tmu_a^2$ (\textbf{gOB-FC(C2)}) with the standard variance estimator exhibit low bias and near-nominal coverage. Accordingly, we recommend $\hmu_a^2$ by default; when data separation arises, $\tmu_a^2$ is preferable.

\section{Application} \label{sec:app}

CTN-03 study \citep{ling2009buprenorphine} was a RCT to compare two taper schedules following a period of physiological stabilization on buprenorphine for opioid dependent individuals. In this trial, participants were randomized at a 1:1 ratio to receive either a 28-day or a 7-day taper. We define the former intervention arm as the control arm, and the latter as the treatment arm. The randomization was stratified by the maintenance dose (8, 16, and 24 mg). The objective of the statistical analysis was to compare the proportion of participants with opioid-free urine specimens at the end of the taper period between the two taper conditions. 

We perform two analyses for this trial to illustrate the utility of our proposed estimators. The first analysis (Table~\ref{tab:ctn03-subgroup}) demonstrates the utility of the proposed FC-based debiased estimator (Section~\ref{sec:method-mu}) under data separation, allowing covariate adjustment to improve precision. The details are provided in the rest of this section. The second analysis evaluates the benefit of our bias-adjusted variance estimator for \textbf{GC-MLE} (Section~\ref{sec:method-var}) in settings with many adjusted covariates. Owing to space constraints, this analysis is provided in Appendix~\ref{sec:app-1}.

We analyze the 8 mg maintenance-dose subgroup (48 participants: 26 control, 22 treatment). Implementing g-computation for this subgroup analysis is challenging because the small sample size creates a risk of data separation. Using a pooled working model (logistic regression) to adjust for a single covariate (baseline opioid urine toxicology level) produced an estimated coefficient of $-18.68$, implausibly large in magnitude, indicating potential data separation; this was confirmed using the \textbf{detectseparation} package \citep{kosmidis2022detectseparation}. Consequently, covariate adjustment is infeasible without employing FC.

Following on our recommendations in Section~\ref{sec:rec}, we estimate treatment effects using \textbf{gOB-FC(C0)}. We adopt a pooled working model, fitting a logistic regression that adjusts for sex and baseline opioid urine toxicology. As shown in Figure~\ref{fig:bc_gob}b, with the standard variance estimator, adjusting for two covariates at $n=60$ yields coverage of roughly 94\%. To mitigate this under-coverage, we report Wald CIs using the proposed bias–adjusted variance estimator. 

Table~\ref{tab:ctn03-subgroup} reports subgroup results for both the unadjusted analysis and \textbf{gOB-FC(C0)}. We also report results for \textbf{gOB-FC(C1)} and \textbf{gOB-FC(C2)} as sensitivity analyses. With FC, covariate adjustment becomes feasible, yielding an 8.41\% gain in relative efficiency. The unadjusted analysis yields a 95\% CI that does not rule out those values above 50\%, whereas the adjusted analysis using \textbf{gOB-FC(C0)} produces a 95\% CI with an upper bound of 48.90\%, suggesting the effect is below 50\%. 

Moreover, \textbf{gOB-FC(C1)} produces nearly identical (to \textbf{gOB-FC(C0)}) point/interval estimates, while \textbf{gOB-FC(C2)} yields a standard error slightly larger than \textbf{Unadj}. This aligns with our simulation results, indicating that \textbf{gOB-FC(C2)} should not be used when sample sizes are small.

\begin{table}[t!]
\centering
\begin{tabular}{l|ccccc}
\hline
    Estimator            &    RD (\%) & SE (\%) & $\mathrm{RE}_i$ (\%) & 95\% CI (\%)  & Width (\%) \\[0.5ex] \hline\hline
    \textbf{Unadj}       &   23.78    &   13.95 &                 NA   & -3.57, 51.12  & 54.69 \\
    \textbf{gOB-FC(C0)}  &   22.73    &   13.35 &                 8.41 & -3.43, 48.90  & 52.33 \\
    \textbf{gOB-FC(C1)}  &   22.73    &   13.35 &                 8.42 & -3.43, 48.90  & 52.33 \\
    \textbf{gOB-FC(C2)}  &   22.81    &   14.00 &                -0.74 & -4.63, 50.25  & 54.88 \\
    \hline
\end{tabular}
\caption{Summary of a subgroup analysis ($n=48$) for the CTN-03 study. RD: risk difference; SE: standard error; $\mathrm{RE}_i$: relative efficiency improvement to Unadj (one minus the ratio between the two variance estimates); CI: confidence interval; Width: interval width}.
\label{tab:ctn03-subgroup}
\end{table}

\section{Discussion} \label{sec:disc}

In this article, we develop a new bias-reduction approach for g-computation, that refines MLE- and FC-based nuisance estimation, improving point estimation (particularly with FC) and inference (for both MLE and FC). The focus is on improving finite-sample statistical properties, taking causal identification as established through randomization and g-computation. Our approach is easy to implement, requiring only minor modifications to standard point and variance estimators. The proposed debiased estimators take the gOB form and are bounded, and the bias-adjusted variance estimators are constructed by a simple modification of IFs using leverage scores. Simulation experiments, designed to mimic scenarios where conventional g-computation estimators fail, demonstrate the superior finite-sample performance of the proposed approach. The recommendations under different scenarios for RD/RR are provided in Section~\ref{sec:rec}. We further illustrate the practical utility of the proposed method in Section~\ref{sec:app}. 

In practice, caution is warranted when deciding whether to apply the proposed approach, particularly the bias‑adjusted variance estimator. When the number of covariates is small relative to the sample size, this adjustment can over-correct, producing unnecessarily wide CIs and reducing precision and power. We recommend using the proposed method only when simulation studies provide clear evidence of under-coverage in interval estimation.

Our proposed method has several limitations. Our method still needs positivity, which is reasonable to assume in RCTs in general. However, positivity violation can still occur through stratification by many strata or missingness in outcomes. If violated, our method cannot be directly applied without being modified. Our method is also developed exclusively for RCTs with i.i.d. observations. It is beyond the scope of this paper to generalize our method to observational studies or RCTs with non-i.i.d. observations. 

In Section~\ref{sec:app}, the small sample size ($n = 48$) limits the scope for systematic covariate selection in this subgroup analysis. However, since this is a RCT, the validity of the estimator is unaffected by the choice of adjustment covariates.

\FloatBarrier

\subsection*{Acknowledgments}

We thank two anonymous reviewers, the associate editor, and the co-editor for their helpful and constructive comments. We also thank Wei Ma, Linbo Wang, and Qingyuan Zhao for helpful discussions. The CTN-03 study dataset used in this article to illustrate our proposed method is available at \url{https://datashare.nida.nih.gov/study/nida-ctn-0003}. Lin Liu's research is supported by NSFC Grant No.12471274 and Science and Technology Talent and Platform Program of Yunnan Province Grant No.202605AF35007.
 
\bibliographystyle{apalike}        
\bibliography{main}

\clearpage

\allowdisplaybreaks

\appendix
\appendixpage
\beginappendix

\section{Main Proofs} \label{append:proof}

\subsection{Proof of \texorpdfstring{Lemma \ref{lem:predict-fc}}{Lemma 1}}  \label{append:proof-predict-fc}

From the modified score equation $\SUM{i}{} U_i ( \tbbeta ) + \Delta^{(n)} ( \tbbeta ) \equiv \bzero_p$, we have that 
\[
\SUM{i=1}{n} (Y_i - \twm_i) X_i + \frac{1}{2} \SUM{i=1}{n} \th_{ii} \twm_i'' / \twm_i' \cdot X_i \equiv \bzero_p.
\]
Since $I ( A_i = a )$ is one element in $X_i$, 
\begin{gather*}
\SUM{i=1}{n} I ( A_i = a ) ( Y_i - \twm_i )  + \frac{1}{2} \SUM{i=1}{n} I ( A_i = a ) \th_{ii} \twm_i'' / \twm_i' \equiv 0 \\
\implies \SUM{ i \colon A_i = a }{} ( Y_i - \twm_i ) + \frac{1}{2} \SUM{ i \colon A_i = a }{} \th_{ii} \twm_i'' / \twm_i' \equiv 0 \\ 
\implies \SUM{ i \colon A_i = a }{} \twm_i \equiv  \SUM{ i \colon A_i = a }{} \left( Y_i + \frac{1}{2} \th_{ii} \twm_i'' / \twm_i' \right).
\end{gather*}
To complete the proof, we show that $h_{ii} (\bbeta) = O_{\sp} ( n^{-1} )$. With $p$ fixed, it is not difficult to see that $\E [h_{ii} (\bbeta)] \lesssim p / n = O ( n^{-1} )$.  Furthermore, by similar arguments, $\E [ h_{ii} (\bbeta)^{2} ] \lesssim p^{2} / n^{2} = O ( n^{-2} )$.  Combing the above results, we have that $h_{ii} (\bbeta) = O_{\sp} ( p / n ) = O_{\sp} ( n^{-1} )$, and thus $n \tilde{h}_{ii} = O_{\sp} ( 1 )$.  Above all,
\[
\frac{1}{n} \SUM{i \colon A_i = a}{} n \th_{ii} \twm_i'' / \twm_i' = O_{\sp} (1),
\]
which completes the proof.

\subsection{Proof of \texorpdfstring{Proposition \ref{prop:beta-glm}}{Proposition 1}}  \label{append:proof-bias-beta-glm}

The $r$th element of $B \bb_1 ( \hbbeta )$ is $\DERIV{ \tr ( B^{-1} M ) }{r} / 2$, which, according to \eqref{proof-mle-bias-vector-ext}, can be written as 
\begin{equation}
\ \tr \left( B^{-1} \E \left[ U_i \DERIV{ U_i }{r}^\top \right] \right) + \frac{1}{2} \tr \left( B^{-1} M B^{-1} \E \left[ \DDERIV{ U_{ir} }{} \right] \right), \label{eq:proof-mle-bias-glm}    
\end{equation}
and the $r$th element of $B \bb_1 ( \tbbeta )$ is $\DERIV{ \tr ( B^{-1} M ) }{r} / 2 + \DERIV{ \log \det ( B ) }{r} / 2$, which, according to \eqref{eq:proof-fc-bias-vector}, can be written as
\begin{equation}
\eqref{eq:proof-mle-bias-glm} - \frac{1}{2} \tr \left( B^{-1} \E \left[ \DDERIV{ U_{ir} }{} \right] \right). \label{eq:proof-fc-bias-glm}         
\end{equation}
Since $U_i = (Y_i - m_i) X_i$, we have $\DERIV{ U_i }{r} = - m_i' X_{ir} \cdot X_i$, $\DDERIV{ U_{ir} }{} = - m_i'' X_{ir} \cdot X_i X_i^\top$, and thus 
\begin{align*}
& \E \left[ U_i \DERIV{U_i}{r}^\top \right] = - \E \left[ m_i' (Y_i - m_i) X_{ir} \cdot X_i X_i^\top \right], \\
& \E \left[ \DDERIV{U_{ir}}{} \right] = - \E \left[ m_i'' X_{ir} \cdot X_i X_i^\top  \right].
\end{align*}
Then for the first term in \eqref{eq:proof-mle-bias-glm}, we have that
\begin{align*}
\tr \left( B^{-1} \E \left[ U_i \DERIV{ U_i }{r}^\top \right] \right)
& = - \E \left[ \tr \left( B^{-1} X_i X_i^\top \cdot X_{ir} (Y_i - m_i) m_i' \right) \right] \\
& = - \E \left[ \tr \left( X_i^\top B^{-1} X_i\right) \cdot X_{ir} (Y_i - m_i) m_i' \right] \\
& = - \E \left[ X_i^\top B^{-1} X_i \cdot X_{ir} (Y_i - m_i) m_i' \right].
\end{align*}
For the second term in \eqref{eq:proof-mle-bias-glm}, we have that 
\begin{align*}
\tr \left( B^{-1 } M B^{-1} \E \left[ \DDERIV{U_{ir}}{} \right] \right) 
& = - \E \left[ \tr \left( B^{-1 } M B^{-1} \cdot X_i X_i^\top \cdot X_{ir} m_i'' \right) \right] \\
& = - \E \left[ \tr \left( X_i^\top B^{-1 } M B^{-1} X_i \right) \cdot X_{ir} m_i'' \right] \\
& = - \E \left[ X_i^\top B^{-1 } M B^{-1} X_i \cdot X_{ir} m_i'' \right].
\end{align*}
And for the last term in \eqref{eq:proof-fc-bias-glm}, we have that 
\begin{align*}
\tr \left( B^{-1} \E \left[ \DDERIV{U_{ir}}{} \right] \right) 
& = - \E \left[ \tr \left( B^{-1} X_i X_i^\top \cdot X_{ir} m_i'' \right) \right] \\
& = - \E \left[ \tr \left( X_i^\top B^{-1} X_i \right) \cdot X_{ir} m_i'' \right] \\
& = - \E \left[ X_i^\top B^{-1} X_i \cdot X_{ir} m_i'' \right]. 
\end{align*}

\subsection{Proof of \texorpdfstring{Theorem \ref{thm:gc}}{Theorem 1}} \label{eq:append-proof-bias-gc}

\subsubsection{The MLE part} \label{eq:append-proof-bias-gc-mle}

First, we note that $\E ( \hmu_a - \mu_a ) \equiv \E ( \hwm_{i|a} - m_{i|a} )$ since $\E ( \hmu_a ) \equiv \E ( \hwm_{i|a} )$ and $\E ( \mu_a ) \equiv \E ( m_{i|a} )$.  Without loss of generality, we consider the stochastic expansion of $\hwm_{1|a}$, 
\begin{equation}
\hwm_{1|a} - m_{1|a} = m_{1|a}' \cdot X_{1|a}^\top ( \hbbeta - \bbeta_0 ) + \frac{1}{2} m_{1|a}'' \cdot X_{1|a}^\top ( \hbbeta - \bbeta_0 ) ( \hbbeta - \bbeta_0 )^\top X_{1|a} + O_{\sp} ( n^{-3/2} ). \label{eq:proof-mle-bias-wm}
\end{equation}
Using the second-order stochastic expansion of $\hbbeta$, which is \eqref{eq:proof-beta-mle-expand-2rd} provided in Appendix~\ref{append:proof-beta-hoif}, the unconditional expectation of the first term of the RHS in \eqref{eq:proof-mle-bias-wm} is 
\begin{multline*}
\frac{1}{n} \E \left[ m_{1|a}' \cdot X_{1|a}^\top \bpsi_1^{\bbeta} \right] + \frac{1}{n} \SUM{ i \neq 1 }{} \E \left[ m_{1|a}' \cdot X_{1|a}^\top \bpsi_i^{\bbeta} \right] + \underbrace{ \frac{1}{n^2} \E \left[ m_{1|a}' \cdot X_{1|a}^\top \bpsi^{\bbeta,2}_{11} \right] }_{ O ( n^{-2} ) } + \\ 
\frac{1}{n^2} \SUM{ i \neq 1 }{} \E \left[ m_{1|a}' \cdot X_{1|a}^\top \bpsi^{\bbeta,2}_{i1} \right] + \frac{1}{n^2} \SUM{ j \neq 1 }{}\E \left[ m_{1|a}' \cdot X_{1|a}^\top \bpsi^{\bbeta,2}_{1j} \right] + \\
\frac{1}{n^2} \SUM{ i,j \neq 1 }{} \E \left[ m_{1|a}' \cdot X_{1|a}^\top \bpsi^{\bbeta,2}_{ij} \right] + O ( n^{-3/2} ).
\end{multline*}
Using the results provided in \eqref{eq:proof-bias-gc-suppl-1}, the above result can be further simplified, leading to 
\begin{align}
& \ \E \left[ m_{1|a}' \cdot X_{1|a}^\top ( \hbbeta - \bbeta_0 ) \right] \nonumber \\
= & \ \frac{1}{n} \ \E \left[ m_{1|a}' \cdot X_{1|a}^\top \bpsi_1^{\bbeta} \right] + \frac{1}{n^2} \SUM{ i,j \neq 1 }{} \E \left[ m_{1|a}' \cdot X_{1|a}^\top \right] \E [ \bpsi_{ij}^{\bbeta,2} ] + O ( n^{-3/2} ) \nonumber \\
= & \ \frac{1}{n} \ \E \left[ m_{1|a}' \cdot X_{1|a}^\top \bpsi_1^{\bbeta} \right] + \frac{1}{n} \ \E \left[ m_{1|a}' \cdot X_{1|a}^\top \right] \bb_1 ( \hbbeta ) + O ( n^{-3/2} ). \label{eq:proof-mle-bias-wm-1}
\end{align}
The last equality holds since $\E [ \bpsi_{ij}^{\bbeta,2} ] = \bzero_p$ if $i \neq j$ and $\E [ \bpsi_{ii}^{\bbeta,2} ] = \bb_1 ( \hbbeta )$ (see Appendix~\ref{append:proof-beta-hoif}).  Next, denote the $O_{\sp} ( n^{-1} )$ term in \eqref{eq:proof-beta-mle-expand-2rd} as $R_n$. The unconditional expectation of the second term of the right-hand side in \eqref{eq:proof-mle-bias-wm} is 
\begin{multline*}
\underbrace{ \frac{1}{n^2} \ \E \left[ m_{1|a}'' \cdot X_{1|a}^\top \bpsi^{\bbeta}_1 \bpsi^{\bbeta\top}_1 X_{1|a} \right] }_{ O ( n^{-2} ) } + \frac{1}{n^2} \SUM{i=2}{n} \E \left[ m_{1|a}'' \cdot X_{1|a}^\top \bpsi^{\bbeta}_i \bpsi^{\bbeta\top}_i X_{1|a} \right] + \\
\frac{1}{n^2} \SUM{i=2}{n} \E \left[ m_{1|a}'' \cdot X_{1|a}^\top \bpsi^{\bbeta}_i \bpsi^{\bbeta\top}_1 X_{1|a} \right] + \frac{1}{n^2} \SUM{j=2}{n} \E \left[ m_{1|a}'' \cdot X_{1|a}^\top \bpsi^{\bbeta}_1 \bpsi^{\bbeta\top}_j X_{1|a} \right] + \\ 
\frac{1}{n^2} \SUM{ i \neq j \neq 1 }{} \E \left[ m_{1|a}'' \cdot  X_{1|a}^\top \bpsi^{\bbeta}_i \bpsi^{\bbeta\top}_j X_{1|a} \right] + \underbrace{ \E \left[ m_{1|a}'' \cdot X_{1|a}^\top \left\{ \frac{1}{n} \SUM{i=1}{n} \bpsi^{\bbeta}_i \right\} R_n^\top X_{1|a} \right] }_{ O ( n^{-3/2} ) } + \\
\underbrace{ \E \left[ m_{1|a}'' \cdot X_{1|a}^\top R_n \left\{ \frac{1}{n} \SUM{j=1}{n} \bpsi^{\bbeta}_j \right\}^\top X_{1|a} \right] }_{ O ( n^{-3/2} ) }. 
\end{multline*}
Using the results provided in \eqref{eq:proof-bias-gc-suppl-2}, the above result can be further simplified, leading to  
\begin{equation}
\E \left[ m_{1|a}'' \cdot X_{1|a}^\top ( \hbbeta - \bbeta_0 ) ( \hbbeta - \bbeta_0 )^\top X_{1|a} \right] = \frac{1}{n} \ \E \left[ m_{1|a}'' \cdot X_{1|a}^\top B^{-1} M B^{-1} X_{1|a} \right] + O ( n^{-3/2} ). \label{eq:proof-mle-bias-wm-2}
\end{equation}
Lastly, using \eqref{eq:proof-bias-gc-suppl-3} and \eqref{eq:proof-bias-gc-suppl-4}, we have
\begin{align}
& \E \left[ m_{1|a}' \cdot X_{1|a}^\top \right] \bb_1 ( \hbbeta ) = - \E \left[ m_{1|a}' \cdot X_{1|a}^\top \right] B^{-1} ( H_1 + H_2 ) \nonumber \\
= & - \E \left[ m_1' \cdot X_1^\top \bpsi_1^{\bbeta} \middle\vert A_1 = a \right] - \frac{1}{2} \ \E \left[ m_{1|a}' \cdot X_{1|a}^\top B^{-1} M B^{-1} X_{1|a} \right]. \label{eq:proof-mle-bias-wm-3}
\end{align}
Armed with the above results, replacing \eqref{eq:proof-mle-bias-wm-1} -- \eqref{eq:proof-mle-bias-wm-3} in \eqref{eq:proof-mle-bias-wm}, we have 
\begin{align}
b_1 ( \hmu_a ) & = \E \left[ m_{1|a}' \cdot X_{1|a}^\top \bpsi_1^{\bbeta} \right] - \E \left[ m_1' \cdot X_1^\top \bpsi_1^{\bbeta} \middle\vert A_1 = a \right] \nonumber \\
& = ( 1 - \pi_a ) \left\{ \E \left[ m_{1|a}' \cdot X_{1|a}^\top \bpsi_1^{\bbeta} \middle\vert A_1 \neq a \right] - \E \left[ m_{1|a}' \cdot X_{1|a}^\top \bpsi_1^{\bbeta} \middle\vert A_1 = a \right] \right\}. \label{eq:proof-mle-bias-wm-4}
\end{align}
Besides, 
\begin{align*}
\E \left[ m_{1|a}' \cdot X_{1|a}^\top \bpsi_1^{\bbeta} \middle\vert A_1 = a \right] 
& = \E \left[ m_{1|a}' \cdot X_{1|a}^\top B^{-1} X_1 \cdot ( Y_1 - m_1 ) \middle\vert A_1 = a \right] \\
& = \E \left[ m_{1|a}' \cdot X_{1|a}^\top B^{-1} X_{1|a} \cdot \{ Y_1 (a) - m_{1|a} \} \right] \\
\E \left[ m_{1|a}' X_{1|a}^\top \bpsi_1^{\bbeta} \middle\vert A_1 \neq a \right]
& = \SUM{ b \neq a }{} \frac{\pi_b}{1-\pi_a} \ \E \left[ m_{1|a}' \cdot X_{1|a}^\top \bpsi_1^{\bbeta} \middle\vert A_1 = b \right] \\
& = \SUM{ b \neq a }{} \frac{\pi_b}{1-\pi_a} \ \E \left[ m_{1|a}' \cdot X_{1|a}^\top B^{-1} X_1 \cdot ( Y_1 - m_1 ) \middle\vert A_1 = b \right] \\
& = \SUM{ b \neq a }{} \frac{\pi_b}{1-\pi_a} \ \E \left[ m_{1|a}' \cdot X_{1|a}^\top B^{-1} X_{1|b} \cdot \{ Y_1 (b) - m_{1|b} \} \right], 
\end{align*}
which completes the proof for the formula of $b_1 ( \hmu_a )$. 

\subsubsection{The FC part}

We obtain $b_1 (\tmu_a)$ following the steps outlined in Appendix~\ref{eq:append-proof-bias-gc-mle}.  Note that the stochastic expansion of $\tbbeta$, provided in \eqref{eq:proof-beta-fc-expand-2rd} in Appendix~\ref{append:proof-beta-hoif}, differs from that of $\hbbeta$ only in one $O_{\sp} ( n^{-1} )$ term, which is $n^{-1} B^{-1} H_3$ (due to the augmentation term in the Firth's modified score equation).  With this additional $O_{\sp} ( n^{-1} )$ term, \eqref{eq:proof-mle-bias-wm-1} becomes 
\begin{align*}
& \E \left[ m_{1|a}' \cdot X_{1|a}^\top ( \tbbeta - \bbeta_0 ) \right] \\
= & \frac{1}{n} \ \E \left[ m_{1|a}' \cdot X_{1|a}^\top \bpsi_1^{\bbeta} \right] + \frac{1}{n} \ \E \left[ m_{1|a}' \cdot X_{1|a}^\top \right] \bb_1 ( \hbbeta ) + \frac{1}{n} \ \E \left[ m_{1|a}' \cdot X_{1|a}^\top \right] B^{-1} H_3 + O ( n^{-3/2} ) \\
= & \frac{1}{n} \E \left[ m_{1|a}' \cdot X_{1|a}^\top \bpsi_1^{\bbeta} \right] + \frac{1}{n} \ \E \left[ m_{1|a}' \cdot X_{1|a}^\top \right] \bb_1 ( \tbbeta ) + O ( n^{-3/2} ). 
\end{align*}
Furthermore, \eqref{eq:proof-mle-bias-wm-2} remains the same since the additional $O_{\sp} ( n^{-1} )$ term can be absorbed in $R_n$. Therefore, $b_1 (\tmu_a)$ differs from $b_1 (\hmu_a)$ only in one additional term, which is 
\[
\E \left[ m_{1|a}' \cdot X_{1|a}^\top \right] B^{-1} H_3 = \frac{1}{2} \ \E \left[ m_{i|a}'' \cdot X_{i|a}^\top B^{-1} X_{i|a} \right].
\]
The equality holds due to \eqref{eq:proof-bias-fc-suppl}.  This completes the proof for $b_1^{(2)} ( \tmu_a )$. 

\subsection{Proof of \texorpdfstring{Proposition~\ref{prop:bound-gc}}{Proposition 2}}
\label{append:proof-bound-gc}

\subsubsection{The MLE part}

We first derive the upper-bound for $\abs{ b_1^{(1)} ( \hmu_a ) }$ with pooled working models.  Write $\E [ \cdot | A_i = a ]$ by $\E [ \E [ \cdot | W_i, A_i = a ] | A_i = a ]$, and then
\begin{align*}
\abs{ b_1^{(1)} ( \hmu_a ) } & = ( 1 - \pi_a ) \cdot \left\lvert \E \left[ m_i' \cdot X_i^\top B^{-1} X_i \cdot ( r_{i|a} - m_{i|a} ) \middle\vert A_i = a \right] \right\rvert \\
& \leq ( 1 - \pi_a ) \cdot  \E \left[ m_i' \cdot X_i^{\top} B^{-1} X_i \middle\vert A_i = a \right] \cdot \norm{ r_{i|a} - m_{i|a} }_{\infty}.
\end{align*}
The last inequality holds by the H\"{o}lder's inequality and the non-negativity of $m_i' \cdot X_i^{\top} B^{-1} X_i$ (we assumed $m$ is non-decreasing which holds for most of GLMs encountered in practice). Besides, we have that 
\[
\E \left[ m_i' \cdot X_i^{\top} B^{-1} X_i \middle\vert A_i = a \right] = c_a / \pi_a \cdot \E \left[ m_i' \cdot X_i^{\top} B^{-1} X_i \right],
\]
where 
\[
c_a \coloneq \frac{ \pi_a \cdot \E \left[ m_i' \cdot X_i^\top B^{-1} X_i \middle\vert A_i = a \right] }{ \E \left[ m_i' \cdot X_i^\top B^{-1} X_i \right] } < 1,
\]
and 
\[
\E \left[ m_i' \cdot X_i^\top B^{-1} X_i \right] = \lim_{n\to\infty} \SUM{i=1}{n} h_{ii} = p,
\]
since $\SUM{i}{} h_{ii} = p$.  Above all, we complete the proof for $\abs{ b_1^{(1)} ( \hmu_a ) }$.

Next, we derive the upper-bound for $\abs{ b_1^{(2)} ( \hmu_a ) }$ with pooled working models.  Similar to $\abs{ b_1^{(1)} ( \hmu_a ) }$, we have that 
\begin{align*}
\abs{ b_1^{(2)} ( \hmu_a ) } & = ( 1 - \pi_a ) \cdot \left\lvert \E \left[ m_{i|a}' \cdot X_{i|a}^\top B^{-1} X_i \cdot ( \E [ Y_i | W_i, A_i ] - m_i ) \middle\vert A_i \neq a \right] \right\rvert \\
& \leq ( 1 - \pi_a ) \cdot \E \left[ \abs{ m_{i|a}' \cdot X_{i|a}^{\top} B^{-1} X_i } \middle\vert A_i \neq a \right] \cdot \sup_{ b \neq a }\norm{ r_{i|b} - m_{i|b} }_{\infty} \\ 
& \leq c_a^\ast \cdot ( 1 - \pi_a ) \cdot \E \left[ m_i' \cdot X_i^{\top} B^{-1} X_i \middle\vert A_i \neq a \right] \cdot \sup_{ b \neq a }\norm{ r_{i|b} - m_{i|b} }_{\infty} \\
& = c_a^\ast c_a^{\ast\ast} \cdot \E \left[ m_i' \cdot X_i^{\top} B^{-1} X_i \right] \cdot \sup_{ b \neq a }\norm{ r_{i|b} - m_{i|b} }_{\infty}
\end{align*}
where 
\begin{align*}
c_a^\ast &  \coloneq \sup_{ b \neq a} \frac{ \E \left[ \abs{ m_{i|a}' \cdot X_{i|a}^{\top} B^{-1} X_{i|b} } \right] }{ \E \left[  m_{i|b}' \cdot X_{i|b}^{\top} B^{-1} X_{i|b} \right] } > 0 \\
c_a^{\ast\ast} & \coloneq \frac{ (1-\pi_a) \cdot \E \left[ m_i' \cdot X_i^\top B^{-1} X_i \middle\vert A_i \neq a \right] }{ \E \left[ m_i' \cdot X_i^\top B^{-1} X_i \right] } < 1.
\end{align*}
We complete the proof for $\abs{ b_1^{(2)} ( \hmu_a ) }$ since $\E [ m_i' \cdot X_i^\top B^{-1} X_i ] = p$. 

Finally, for stratified working models, we have that $\SUM{ i \colon A_i = a }{} h_{ii} = p_a$, which completes the proof immediately.

\subsubsection{The FC part}

Similar to $\abs{ b_1^{(1)} ( \hmu_a ) }$, for pooled working models we have that 
\begin{align*}
\abs{ b_1^{(2)} ( \tmu_a ) } 
& = \frac{1}{2} \ \E \left[ m_i' \cdot X_i^\top B^{-1} X_i \cdot m_i'' / m_i' \middle \vert A_i = a \right] \\
& \leq \frac{1}{2} \ \E \left[ m_i' \cdot X_i^{\top} B^{-1} X_i \middle\vert A_i = a \right] \cdot \norm{ m_i'' / m_i' }_{\infty} \\
& = c_a / (2\pi_a) \cdot \E \left[ m_i' \cdot X_i^\top B^{-1} X_i \right] \cdot \norm{ m_i'' / m_i' }_{\infty},
\end{align*}
 which completes the proof since $\E [ m_i' \cdot X_i^\top B^{-1} X_i ] = p$.  For stratified working models, we have that $\SUM{ i \colon A_i = a }{} h_{ii} = p_a$, which completes the proof immediately.

\subsection{Proof of \texorpdfstring{Lemma \ref{lem:gc}}{Lemma 2}} \label{append:proof-gc-bias-conditional}

With the prediction unbiasedness \eqref{eq:predict-mle}, we have that 
\begin{align*}
\E [ \hwm_{i|a} - m_{i|a} ] 
& = \frac{1}{n} \, \E \left\{ \SUM{ i: A_i = a }{} Y_i + \SUM{ i: A_i \neq a }{} \hwm_{i|a} \right\} - \E [ m_{i|a} ] \\
& = \E [ I ( A_i = a ) Y_i ] + \E [ I ( A_i \neq a ) \hwm_{i|a} ] - \mu_a \\
& = ( 1 - \pi_a ) \cdot \E [ \hwm_{i|a} | A_i \neq a ] - ( 1 - \pi_a ) \mu_a \\ 
& = ( 1 - \pi_a ) \cdot \E [ \hwm_{i|a} - m_{i|a} | A_i \neq a ],
\end{align*}
which completes the proof. 

\subsection{Proof of \texorpdfstring{Theorem \ref{thm:gob-mle}}{Theorem 2}}
\label{append:proof-gob-mle}

We first show that $\E [ \hmu_a^2 - \mu_a ] = O ( n^{-3/2} )$ for pooled working models.  Our proof immediately suggests that $\E [ \hmu_a^1 - \mu_a ] = n^{-1} b_1^{(2)} ( \hmu_a ) + O ( n^{-3/2} )$ for pooled working models, and $O ( n^{-3/2} )$ for stratified working models.  

To start with, we have that 
\begin{align*}
\E [ \hmu_a^2 - \mu_a ] & = \frac{1}{n} \ \E \left\{ \SUM{ i: A_i = a }{} Y_i + \SUM{ i: A_i \neq a }{} m ( X_{i|a}^\top \hbbeta_i^2 ) \right\} - \mu_a \\
& = \E [ I ( A_i = a ) Y_i ] + \E [ I ( A_i \neq a ) m ( X_{i|a}^\top \hbbeta_i^2 ) ] - \mu_a \\
& = \pi_a \mu_a + ( 1 - \pi_a ) \cdot \E \left[ m ( X_{i|a}^\top \hbbeta_i^2 ) \middle\vert A_i \neq a \right] - \mu_a \\
& = ( 1 - \pi_a ) \left\{ \E \left[ m ( X_{i|a}^\top \hbbeta_i^2 ) \middle\vert A_i \neq a \right] - \mu_a \right\} \\
& = ( 1 - \pi_a ) \cdot \E \left[ m ( X_{i|a}^\top \hbbeta_i^2 ) - m ( X_{i|a}^\top \bbeta_{0} ) \middle\vert A_i \neq a \right].
\end{align*}
The last equality in the above holds since $\mu_a = \E [ m_{i|a} ] = \E [ m_{i|a} | A_i \neq a ]$.  Then write 
\begin{equation}
\label{eq:proof-gob-mle-bias-1}
\begin{split}
& \E \left[ m ( X_{i|a}^\top \hbbeta_i^2 ) - m ( X_{i|a}^\top \bbeta_{0} ) \middle\vert A_i \neq a \right] \\
= & \ \E \left[ m ( X_{i|a}^\top \hbbeta_i^2 ) - \hwm_{i|a} \middle\vert A_i \neq a \right] + \E \left[ \hwm_{i|a} - m_{i|a} \middle\vert A_i \neq a \right].  
\end{split}
\end{equation}
For the first term in the RHS of \eqref{eq:proof-gob-mle-bias-1}, without loss of generality, we consider the stochastic expansion of $m ( X_{1|a}^\top \hbbeta_1^2 )$ at $\hbbeta$, 
\begin{multline*}
\hwm_{1|a} + \hwm_{1|a}' \cdot X_{1|a}^\top ( \hbbeta_1^2 - \hbbeta ) + \frac{1}{2} \, \hwm_{1|a}'' \cdot X_{1|a}^\top ( \hbbeta_1^2 - \hbbeta ) ( \hbbeta_1^2 - \hbbeta )^\top X_{1|a} + \\ O_{\sp} \left( \left\{ X_{1|a}^\top ( \hbbeta_1^2 - \hbbeta ) \right\}^3 \right) = \hwm_{1|a} + \frac{1}{n} \, \hwm_{1|a}' \cdot X_{1|a}^\top \SUM{i=1}{n} \hh_{ii} \hbpsi_i^{\bbeta} - \frac{1}{n} \, \hwm_{1|a}' \cdot X_{1|a}^\top \hbpsi^{\bbeta}_1 + O_\sp ( n^{-2} ),
\end{multline*}
since $\hbbeta_1^2 - \hbbeta = n^{-1} \SUM{i}{} \hh_{ii} \hbpsi^{\bbeta}_i - n^{-1} \hbpsi^{\bbeta}_1 = O_\sp ( n^{-1} )$.  Then, we have that  
\begin{align}
& \E \left[ m ( X_{1|a}^\top \hbbeta_1^2 ) - \hwm_{1|a} \middle\vert A_1 \neq a \right] \nonumber \\
= & \ \frac{1}{n} \, \E \left[ \hwm_{1|a}' \cdot X_{1|a}^\top \SUM{i=1}{n} \hh_{ii} \hbpsi_i^{\bbeta} \middle\vert A_1 \neq a \right] - \frac{1}{n} \, \E \left[ \hwm_{1|a}' \cdot X_{1|a}^\top \hbpsi^{\bbeta}_1 \middle\vert A_1 \neq a \right] + O ( n^{-2} ) \nonumber \\
= & \ \frac{1}{n} \, \E \left[ \left\{ m_{1|a}' + O_{\sp} ( n^{-1/2} ) \right\} \cdot X_{1|a}^\top \underbrace{ \left\{ B^{-1} H_1 + O_{\sp} ( n^{-1/2} ) \right\} }_{ \eqref{eq:proof-gob-mle-suppl-2} } \middle\vert A_1 \neq a \right] - \nonumber \\
& \quad \frac{1}{n} \, \E \left[ \left\{ m_{1|a}' + O_{\sp} ( n^{-1/2} ) \right\} \cdot X_{1|a}^\top \underbrace{ \left\{ B^{-1} X_1 ( Y_1 - m_1 ) + O_{\sp} ( n^{-1/2} ) \right\} }_{ \eqref{eq:proof-gob-mle-suppl-1} } \middle\vert A_1 \neq a \right] + O ( n^{-2} ) \nonumber \\ 
= & \ \frac{1}{n} \, \E \left[ m_{1|a}' \cdot X_{1|a}^\top \middle\vert A_1 \neq a \right] B^{-1} H_1 - \frac{1}{n} \, \E \left[ m_{1|a}' \cdot X_{1|a}^\top \bpsi_1^{\bbeta} \middle\vert A_1 \neq a \right] + O ( n^{-3/2} ) \nonumber \\ 
= & \ \frac{1}{n} \, \E \left[ m_{1|a}' \cdot X_{1|a}^\top \right] B^{-1} H_1 - \frac{1}{n} \, \E \left[ m_{1|a}' \cdot X_{1|a}^\top \bpsi_1^{\bbeta} \middle\vert A_1 \neq a \right] + O ( n^{-3/2} ) \nonumber \\ 
= & \ \frac{1}{n} \, \underbrace{ \E \left[ m_{1|a}' \cdot X_{1|a}^\top \bpsi_1^{\bbeta} \middle\vert A_1 = a \right] }_{ \eqref{eq:proof-bias-gc-suppl-3} } - \frac{1}{n} \, \E \left[ m_{1|a}' \cdot X_{1|a}^\top \bpsi_1^{\bbeta} \middle\vert A_1 \neq a \right] + O ( n^{-3/2} ) \nonumber \\ 
& = - \frac{ b_1 ( \hmu_a ) }{ ( 1 - \pi_a ) n } + O ( n^{-3/2} ).  \label{eq:proof-gob-mle-bias-2}
\end{align}
The last equality holds due to \eqref{eq:proof-mle-bias-wm-4}.  For the second term in the RHS of \eqref{eq:proof-gob-mle-bias-1}, we have that 
\begin{equation}
\E [ \hwm_{i|a} - m_{i|a} | A_i \neq a ] = \underbrace{ ( 1 - \pi_a )^{-1} \cdot \E [ \hwm_{i|a} - m_{i|a} ] }_{ \text{Lemma~\ref{lem:gc}} } = \ \frac{ b_1 ( \hmu_a ) }{ ( 1 - \pi_a ) n } + O ( n^{-3/2} ).  \label{eq:proof-gob-mle-bias-3}  
\end{equation}
Above all, with \eqref{eq:proof-gob-mle-bias-1}--\eqref{eq:proof-gob-mle-bias-2} we have that 
\[
\E [ \hmu_a^2 - \mu_a ] = ( 1 - \pi_a ) \cdot \E [ m ( X_{i|a}^\top \hbbeta_i^2 ) - m_{i|a} | A_i \neq a ] = O ( n^{-3/2} ),
\]
which completes the proof for pooled working models.  For $\E [ \hmu_a^1 - \mu_a ]$, following the same manner, we have that
\begin{align*}
\E [ \hmu_a^1 - \mu_a ]
& = ( 1 - \pi_a ) \cdot \E [ m ( X_{i|a}^\top \hbbeta^1 ) - m_{i|a} | A_i \neq a ] \\ 
& = ( 1 - \pi_a ) \cdot \E [ m ( X_{i|a}^\top \hbbeta^1 ) - \hwm_{i|a} | A_i \neq a ] + ( 1 - \pi_a ) \cdot \E [ \hwm_{i|a} - m_{i|a} | A_i \neq a ] \\ 
& = ( 1 - \pi_a ) \cdot \left\{ - \frac{ b_1^{(1)} ( \hmu_a ) }{ ( 1 - \pi_a ) n } \right\} + ( 1 - \pi_a ) \cdot \frac{ b_1 ( \hmu_a ) }{ ( 1 - \pi_a ) n } = n^{-1} b_1^{(2)} ( \hmu_a ),
\end{align*}
which completes the proof ($b_1^{(2)} ( \hmu_a ) \equiv 0$ for stratified working models). 

\subsection{Proof of \texorpdfstring{Theorem \ref{thm:gob-fc}}{Theorem 3}}
\label{append:proof-gob-fc}

Similar to Appendix~\ref{append:proof-gob-mle}, we first show that $\E [ \tmu_a^2 - \mu_a ] = O ( n^{-3/2} )$ for pooled working models.  Our proof immediately suggests that $\E [ \tmu_a^1 - \mu_a ] = n^{-1} b_1^{(2)} ( \hmu_a ) + O ( n^{-3/2} )$ for pooled working models and $= O ( n^{-3/2} )$ for stratified working models, while $\E [ \tmu_a^0 - \mu_a ] = n^{-1} b_1 ( \hmu_a ) + O ( n^{-3/2} )$.  Analogous to $\hmu_a^2$, we have that
\begin{align}
\E [ \tmu_a^2 - \mu_a ]
& = ( 1 - \pi_a ) \cdot \E [ m ( X_{i|a}^\top \tbbeta_i^2 ) - m_{i|a} | A_i \neq a ] \nonumber \\ 
& = ( 1 - \pi_a ) \left\{ \E [ m ( X_{i|a}^\top \tbbeta_i^2 ) - \twm_{i|a} | A_i \neq a ]  + \E [ \twm_{i|a} - m_{i|a} | A_i \neq a ] \right\} \label{eq:proof-gob-fc-bias-1}  
\end{align}
For the first term in the RHS of \eqref{eq:proof-gob-fc-bias-1}, analogous to \eqref{eq:proof-gob-mle-bias-2}, we have
\begin{align}
& \E [ m ( X_{1|a}^\top \tbbeta_1^2 ) - \twm_{1|a} | A_1 \neq a ] \nonumber \\
& = \frac{1}{n} \, \E \left[ \twm_{1|a}' \cdot X_{1|a}^\top \SUM{i=1}{n} \th_{ii} \tbpsi_i^{\bbeta} \middle\vert A_1 \neq a \right] - \frac{1}{n} \, \E \left[ \twm_{1|a}' \cdot X_{1|a}^\top \tbpsi^{\bbeta}_1 \middle\vert A_1 \neq a \right] - \nonumber \\
& \qquad \frac{1}{n} \, \E \left[ \twm_{1|a}' \cdot X_{1|a}^\top \cdot \frac{1}{2} \SUM{i=1}{n} \th_{ii} \cdot \frac{\twm_i''}{\twm_i'} \cdot \tB^{-1} X_i \middle\vert A_1 \neq a  \right] + O ( n^{-2} ) \nonumber \\
& = \underbrace{ - \frac{ b_1 ( \hmu_a ) }{ ( 1 - \pi_a ) n } }_{ \text{ analogous to \eqref{eq:proof-gob-mle-bias-2} } } - \frac{1}{n} \, \E \left[ \{ m_{1|a}' + O_{\sp} ( n^{-1/2} ) \} \cdot X_{1|a}^\top \underbrace{ \{ B^{-1} H_3 + O_{\sp} ( n^{-1/2} ) \} }_{ \eqref{eq:proof-gob-mle-suppl-3} } \middle\vert A_1 \neq a \right] \nonumber \\
& \qquad + O \left( n^{-3/2} \right) \nonumber \\
& = - \frac{ b_1 ( \hmu_a ) }{ ( 1 - \pi_a ) n } - \frac{1}{n} \, \E [ m_{1|a}' \cdot X_{1|a}^\top | A_1 \neq a ] B^{-1} H_3 + O ( n^{-3/2} ) \nonumber \\ 
& = \ - \frac{ b_1 ( \hmu_a ) }{ ( 1 - \pi_a ) n } - \frac{1}{n} \, \E [ m_{1|a}' \cdot X_{1|a}^\top ] B^{-1} H_3 + O ( n^{-3/2} ). \label{eq:proof-gob-fc-bias-2}
\end{align}
For the second term in the RHS of \eqref{eq:proof-gob-fc-bias-1}, we have that 
\begin{align}
    \E [ \twm_{i|a} - m_{i|a} | A_i \neq a ]
    & = \frac{ b_1 ( \tmu_a ) }{ ( 1 - \pi_a ) n }  - \frac{ \pi_a b_1^{(2)} ( \tmu_a ) }{ ( 1 - \pi_a ) n } + O ( n^{-3/2} ) \nonumber \\ 
    & = \frac{ b_1 ( \hmu_a ) }{ ( 1 - \pi_a ) n } + \frac{1}{n} b_1^{(2)} ( \tmu_a ) + O ( n^{-3/2} ).  \label{eq:proof-gob-fc-bias-3}  
\end{align}
The first equality holds because that, with Lemma~\ref{lem:predict-fc} we have that 
\begin{align*}
& \E [ \twm_{i|a} - m_{i|a} ] \\
= & \ \frac{1}{n} \E \left( \frac{1}{2} \SUM{ i \colon A_i = a }{} \th_{ii} \cdot \frac{\twm_i''}{\twm_i'} + \SUM{ i: A_i = a }{} Y_i + \SUM{ i: A_i \neq a }{} \twm_{i|a} \right) - \E [ m_{i|a} ] \\
= & \ ( 1 - \pi_a ) \cdot \E [ \twm_{i|a} - m_{i|a} | A_i \neq a ] + \frac{1}{n} \, \E \left[ \frac{1}{2} \SUM{i=1}{n} I ( A_i = a ) \th_{ii} \cdot \frac{\twm_i''}{\twm_i'} \right] \nonumber \\
= & \ ( 1 - \pi_a ) \cdot \E [ \twm_{i|a} - m_{i|a} | A_i \neq a ] + \frac{1}{2n} \, \E \left[ \E [ I ( A_i = a ) m_i'' \cdot X_i^\top B^{-1} X_i ] + O_{\sp} ( n^{-1/2} ) \right] \\ 
= & \ ( 1 - \pi_a ) \cdot \E [ \twm_{i|a} - m_{i|a} | A_i \neq a ] + \frac{\pi_a}{2n} \, \E [ m_{i|a}'' \cdot X_{i|a}^\top B^{-1} X_{i|a} ] + O ( n^{-3/2} ).
\end{align*}
Above all, with \eqref{eq:proof-gob-fc-bias-1}--\eqref{eq:proof-gob-fc-bias-3} we have that 
\begin{align*}
\E [ \tmu_a^2 - \mu_a ]
& = \frac{ 1 - \pi_a }{n} \underbrace{ \left( \frac{1}{2} \E [ m_{i|a}'' \cdot X_{i|a}^\top B^{-1} X_{i|a} ] - \E [ m_{i|a}' X_{i|a}^\top ] B^{-1} H_3 \right) }_{ = 0 \ \eqref{eq:proof-bias-fc-suppl} } + O ( n^{-3/2} ) \\
& = O ( n^{-3/2} ),
\end{align*}
which completes the proof for pooled working models.  For $\E [ \tmu_a^1 - \mu_a ]$ and $\E [ \tmu_a^0 - \mu_a ]$, the proof is trivial following the same manner as above (and analogous to $\hmu_a^1$).

\subsection{Proof of \texorpdfstring{Theorem \ref{thm:linear}}{Theorem 4}} \label{append:proof-linear}

\subsubsection{The MLE part}

First of all, from $\hbbeta^1$ in \eqref{eq:gob-mle-bc} we have that 
\begin{align*}
\hbbeta^1 - \hbbeta 
& = \frac{1}{n} \SUM{i=1}{n} \hh_{ii} \hbpsi_i^{\bbeta} = \frac{1}{n} \SUM{i=1}{n} \frac{1}{n} \hwm_i' \cdot X_i^\top \hB^{-1} X_i\cdot \underbrace{ \left\{ \bpsi_i^{\bbeta} + O_{\sp} ( n^{-1/2} ) \right\} }_{\eqref{eq:proof-gob-mle-suppl-1}} \\ 
& = \frac{1}{n} \SUM{i=1}{n} \frac{1}{n} \{ m_i' + O_{\sp} ( n^{-1/2} ) \} \cdot X_i^\top  \underbrace{ \{ B^{-1} + O_\sp ( n^{-1/2} ) \} }_{\eqref{eq:proof-beta-bread-est-inv}} X_i \cdot \bpsi_i^{\bbeta} + o_{\sp} ( n^{-1} )\\
& = \frac{1}{n} \SUM{i=1}{n} \barh_{ii} \bpsi_i^{\bbeta} + o_{\sp} ( n^{-1} ). 
\end{align*}
Combining the above result with the fact that $\hbbeta - \bbeta_0 = n^{-1} \SUM{i}{} \bpsi_i^{\bbeta} + o_{\sp} ( n^{-1/2} )$, we have that 
\begin{equation}
\hbbeta^1 - \bbeta_0 = \frac{1}{n} \SUM{i=1}{n} ( 1 + \barh_{ii} ) \cdot \bpsi_i^{\bbeta} + o_{\sp} ( n^{-1/2} ). \label{eq:proof-linear-mle}
\end{equation}
Let $\hwm_{i|a}^1 \coloneqq m ( X_{i|a}^\top \hbbeta^1 )$.  We obtain the linearization of $\hmu_a^1$ as follows,
\begin{align*}
\hmu_a^1 - \mu_a
& = \frac{1}{n} \SUM{i=1}{n} \hwm_{i|a}^1 + \frac{1}{n} \SUM{ i \colon A_i = a }{} ( Y_i - \hwm_{i|a}^1 ) - \mu_a \\
& = \frac{1}{n} \SUM{i=1}{n} \left\{ m_{i|a} + m_{i|a}^1 \cdot X_{i|a}^\top ( \hbbeta^1 - \bbeta_0 ) + o_{\sp} ( n^{-1/2} ) \right\} + \frac{1}{n} \underbrace{ \SUM{ i \colon A_i = a }{} ( \hwm_{i|a} - \hwm_{i|a}^1 ) }_{ \text{\eqref{eq:predict-mle}} } - \mu_a  \\
& = \E [ m_{i|a}' \cdot X_{i|a}^\top ] ( \hbbeta^1 - \bbeta_0 ) + \underbrace{ \left\{ \frac{1}{n} \SUM{i=1}{n} m_{i|a}' \cdot X_{i|a}^\top - \E [ m_{i|a}' \cdot X_{i|a}^\top ] \right\} }_{ O_{\sp} ( n^{-1/2} ) } \underbrace{ ( \hbbeta^1 - \bbeta_0 ) }_{ O_{\sp} ( n^{-1/2} ) } + \\
& \qquad\qquad \frac{1}{n} \SUM{i=1}{n} ( m_{i|a} - \mu_a ) + \frac{1}{n} \SUM{ i \colon A_i = a }{} \underbrace{ \left\{ \hwm_{i|a}' \cdot X_{i|a}^\top ( \hbbeta^1 - \hbbeta ) + o_{\sp} (n^{-1}) \right\} }_{ O_{\sp} ( n^{-1} ) + o_{\sp} (n^{-1}) = O_{\sp} (n^{-1})} + o_{\sp} (n^{-1/2}) \\
& = \E [ m_{i|a}' \cdot X_{i|a}^\top ] \underbrace{ \SUM{i=1}{n} \frac{1}{n} ( 1 + \barh_{ii} ) \cdot \bpsi_i^{\bbeta} }_{ \text{\eqref{eq:proof-linear-mle}} } + \frac{1}{n} \SUM{i=1}{n} ( m_{i|a} - \mu_a ) + o_{\sp} ( n^{-1/2} ) \\
& = \ \frac{1}{n} \SUM{i=1}{n} \left\{ \frac{I(A_i = a)}{\pi_a} ( Y_i - m_i ) \cdot ( 1 + \barh_{ii} ) + m_{i|a} - \mu_a \right\} + o_{\sp} ( n^{-1/2} ),
\end{align*}
which completes the proof for $\hmu_a^1$ (the last equation holds due to \eqref{eq:zhang}).

\subsubsection{The FC part}

Analogously, we have that $n^{-1} \SUM{i}{} \th_{ii} \tbpsi_i^{\bbeta} = n^{-1} \SUM{i}{} \barh_{ii} \bpsi_i^{\bbeta} + o_{\sp} ( n^{-1} )$, and from \eqref{eq:gob-fc-bc} we have that
\begin{align*}
\tbbeta^1 - \tbbeta 
& = \frac{1}{n} \SUM{i=1}{n} \th_{ii} \tbpsi_i^{\bbeta} - \frac{1}{2n} \SUM{i=1}{n} \th_{ii} \cdot \frac{\twm_i''}{\twm_i'} \cdot \tB^{-1} X_i \\ 
& = \frac{1}{n} \SUM{i=1}{n} \barh_{ii} \bpsi_i^{\bbeta} - \frac{1}{2n} \SUM{i=1}{n} \th_{ii} \cdot \frac{\twm_i''}{\twm_i'} \cdot B^{-1} X_i + o_{\sp} ( n^{-1} ). 
\end{align*}
since $\tB^{-1} - B^{-1} = O_\sp ( n^{-1/2} )$ (analogous to \eqref{eq:proof-beta-bread-est-inv}). 

In $\tbbeta - \bbeta_0 = n^{-1} \SUM{i=1}{n} \bpsi_i^{\bbeta} + o_{\sp} ( n^{-1/2} )$, $o_{\sp} (n^{-1/2} )$ absorbs the first-order term due to the augmentation in \eqref{eq:score-fc}, which is $\Delta^{(n)} (\tbbeta)$.  Keeping this term in the stochastic expansion, we have that 
\[
\tbbeta - \bbeta_0 = \frac{1}{n} \SUM{i=1}{n} \bpsi_i^{\bbeta} + \frac{1}{2n} \SUM{i=1}{n}  \th_{ii} \cdot \frac{\twm_i''}{\twm_i'}\cdot B^{-1} X_i + o_{\sp} ( n^{-1/2} ).
\]

Combining the above two results, we further have that 
\begin{equation}
\tbbeta^1 - \bbeta_0 = \frac{1}{n} \SUM{i=1}{n} ( 1 + \barh_{ii} ) \cdot \bpsi_i^{\bbeta} + o_{\sp} ( n^{-1/2} ). \label{eq:proof-linear-fc}
\end{equation}
Then, we can obtain the linearization of $\tmu_a^1$ following the same steps for $\hmu_a^1$, except using \eqref{eq:proof-linear-fc} instead of \eqref{eq:proof-linear-mle}, which completes the proof.

\section{Stratified Working Models} \label{append:wm-strat}

\subsection{Maximum likelihood estimators (MLE)} \label{append:wm-strat-mle}

Let $m ( Z_{i|a}^\top \bbeta[a] )$ be the working model for $\mu_a$, where $Z_{i|a}$ starts with an intercept following with those variables defined by $W_i$.  The nuisance parameter, $\bbeta[a]$, is estimated by MLE using only those $D_i$s with $A_i = a$.  \citet[][Supporting Information Section A]{zhang2025robust} have shown that there exists a single working model, written as $m ( X_i^\top \bbeta )$, where  
\[
\bbeta = ( \bbeta[1]^\top \quad \cdots \quad \bbeta[k]^\top )^\top \text{ and } 
X_i = ( I ( A_i = 1 ) Z_{i|1}^\top \quad \cdots \quad I ( A_i = k ) Z_{i|k}^\top )^\top,
\]
such that $X_{i|a}^\top \bbeta \equiv Z_{i|a}^\top \bbeta[a]$ and thus $m ( X_i^\top \bbeta ) \equiv m ( Z_{i|a}^\top \bbeta[a] )$.  Subsequently, all the results for pooled working models are applicable for those with stratified working models.  The corresponding formulae for IFs and variance estimation are given as follows.   

Let $\hbbeta[a]$ be the solution of the score equation $\SUM{i \colon A_i = a}{} U_{i|a} ( \bbeta[a] ) = 0$, where $U_{i|a} ( \bbeta[a] ) \coloneqq \{ Y_i - m ( Z_{i|a}^\top \bbeta[a] ) \} Z_{i|a}$.  The score equation of $\hbbeta$ can be constructed using those for all $\hbbeta[a]$s,   
\begin{align*}
\begin{pmatrix}
\SUM{i \colon A_i = 1}{} U_{i|1} ( \bbeta[1] ) \\
\vdots \\
\SUM{i \colon A_i = k}{} U_{i|k} ( \bbeta[k] )
\end{pmatrix}
& = \SUM{i=1}{n} 
\begin{pmatrix}
I ( A_i = 1 ) \{ Y_i - m ( Z_{i|1}^\top \bbeta[1] ) \} Z_{i|1} \\
\vdots \\
I ( A_i = k ) \{ Y_i - m ( Z_{i|k}^\top \bbeta[k] ) \} Z_{i|k}
\end{pmatrix} \\
& = \SUM{i=1}{n} \{ Y_i - m ( X_i^\top \bbeta ) \} X_i = \ \SUM{i=1}{n} U_i ( \bbeta ).
\end{align*}
Then, applying either \eqref{eq:if-mu-mest} or \eqref{eq:if-mu-aipw} from Section~\ref{sec:pre-gc}, we can obtain an empirical version of the IF for $\hmu_a$ and the corresponding variance estimator. 

Specifically, the empirical version of the \textit{theoretical} IF \eqref{eq:if-mu-aipw} is simply written as
\[
\frac{ I ( A_i = a ) }{\hpi_a} ( Y_i - \hwm_{i|a} ) + \hwm_{i|a} - \hmu_a.
\]
For the one of the \textit{empirical} IF \eqref{eq:if-mu-mest}, we first write the empirical version of the IF for $\hbbeta$, which is $\hbpsi_{\bbeta} = \hB^{-1} \hU_i$, in terms of those quantities associated with $\hU_{i|a}$.  Since $\hbbeta$ is a stacked vector of $\hbbeta[a]$ for all $a$, we consider the subvector in $\hbpsi_{\bbeta}$ associated with $\hbbeta[a]$, written as
\begin{align}
& I ( A_i = a ) \left\{ n^{-1} \SUM{i=1}{n} I ( A_i = a ) \hwm_{i|a} Z_{i|a} Z_{i|a}^\top \right\}^{-1} Z_{i|a} \{ Y_i - \hwm_{i|a} \} \nonumber \\
= & \frac{ I ( A_i = a ) }{ \hpi_a } \underbrace{ \hB_a^{-1} Z_{i|a} \{ Y_i - \hwm_{i|a} \} }_{ \eqcolon \; \hbpsi_i^{ \bbeta[a] } }, \label{eq:proof-hwm-if-beta}
\end{align}
where $\hB_a = n_a^{-1} \SUM{ i \colon A_i = a }{} \hwm_{i|a} Z_{i|a} Z_{i|a}^\top$ and $\hbpsi_i^{ \bbeta[a] }$ presents the empirical versions of the bread matrix and the IF for $\hbbeta[a]$, respectively.  Then, the estimate of the \textit{empirical} IF of $\hmu_a$ is 
\begin{align*}
& \left\{ \frac{1}{n} \SUM{j=1}{n} \hwm_{j|a}' 
\begin{pmatrix}
\bzero_{a-1} \\ 
Z_{j|a} \\
\bzero_{k-a}
\end{pmatrix}^\top \right\} \hbpsi_i^{\bbeta} + \hwm_{i|a} - \hmu_a \\
= & \frac{ I ( A_i = a ) }{ \hpi_a } \left\{ \frac{1}{n} \SUM{j=1}{n} \hwm_{j|a}' Z_{j|a}^\top \right\} \hbpsi_i^{ \bbeta[a] } + \hwm_{i|a} - \hmu_a,
\end{align*}
of which all components can be directly obtained from the output of fitting a GLM using off-the-shelf software packages.

\subsection{Firth-corrected (FC) estimators} \label{append:wm-strat-fc}

Let $\tbbeta[a]$ be the solution of the modified score equation associated with the working model for $A_i = a$, written as 
\begin{align*}
\SUM{ i \colon A_i = a }{} U_{i|a} \left( \bbeta[a] \right) + \frac{1}{2} \SUM{ i \colon A_i = a }{} h_{ii|a} ( \bbeta[a] ) \cdot \frac{ m'' ( Z_{i|a}^\top \bbeta(a) ) }{ m' ( Z_{i|a}^\top \bbeta(a) ) } \cdot Z_{i|a} = \bzero_p
\end{align*}
where 
\[
h_{ii|a} ( \bbeta[a] ) \coloneqq \ m' ( Z_{i|a}^\top \bbeta[a] ) \cdot Z_{i|a}^\top \left\{ \SUM{ j \colon A_j = a }{} m' ( Z_{j|a}^\top \bbeta[a] ) \cdot Z_{j|a} Z_{j|a}^\top \right\}^{-1} Z_{i|a}.
\]
Now, we show that the stacked modified score equation for $a=1,\dots,k$ is equivalent to the stacked modified score equation of $\tbbeta$ defined in \eqref{eq:score-fc} in the main manuscript.  For leverage scores, we have that  $h_{ii} ( \bbeta ) \equiv h_{ii|a} ( \bbeta[a] )$ when $A_i = a$ since 
\begin{align}
h_{ii} ( \bbeta ) & = m' ( X_i^\top \bbeta ) \cdot Z_{i|a}^\top \left\{ \SUM{j=1}{n} I ( A_j = a ) m' ( X_j^\top \bbeta ) \cdot Z_{j|a} Z_{j|a}^\top \right\}^{-1} Z_{i|a} \nonumber \\
& = m' ( Z_{i|a}^\top \bbeta[a] ) \cdot Z_{i|a}^\top \left\{ \SUM{j \colon A_j = a}{} m' ( Z_{j|a}^\top \bbeta[a] ) \cdot Z_{j|a} Z_{j|a}^\top \right\}^{-1} Z_{i|a}. \label{eq:proof-hwm-leverage}
\end{align}
Then, we have that 
\begin{align*}
&
\begin{pmatrix}
\SUM{i \colon A_i = 1}{} h_{ii|1} ( \bbeta[1] ) \cdot m'' ( Z_{i|1}^\top \bbeta[1] ) / m' ( Z_{i|1}^\top \bbeta[1] ) \cdot Z_{i|1} / 2 \\
\vdots \\
\SUM{i \colon A_i = k}{} h_{ii|k} ( \bbeta[k] ) \cdot m'' ( Z_{i|k}^\top \bbeta[k] ) / m' ( Z_{i|k}^\top \bbeta[k] ) \cdot Z_{i|k} / 2 
\end{pmatrix} \\
= & \ \frac{1}{2} \SUM{i=1}{n} h_{ii} ( \bbeta ) \cdot \frac{ m'' ( X_i^\top \bbeta ) }{ m' ( X_i^\top \bbeta ) } \cdot 
\begin{pmatrix}
I ( A_i = 1 ) Z_{i|1}  \\
\vdots \\
I ( A_i = k ) Z_{i|k} 
\end{pmatrix} 
\end{align*}
which defines the augmentation term in the modified score equation.  Together with the results in Appendix~\ref{append:wm-strat-mle}, it suggests that the corresponding formulae of IFs and variance estimation for g-computation estimators of stratified working models with MLE are also applicable for those with FC. 

\subsection{Bias reduction} \label{append:wm-strat-br}

Our proposals in Sections~\ref{sec:method-mu}~and~\ref{sec:method-var} are developed to estimate $\bmu$ with a pooled working model (including treatment arm indicators).  They are also applicable to those with stratified working models.  In the following, we present our proposed estimators directly constructed using outputs from estimated stratified working models. 

In the following, we only consider those  $D_i$s with $A_i = a$, since only those data are used to estimate $\mu_a$.  Let $\hh_{ ii \vert a }$ and $\hbpsi_i^{ \bbeta[a] }$ (resp. $\th_{ ii \vert a }$ and  $\tbpsi_i^{ \bbeta[a] }$) be the corresponding leverage score and estimated IF, respectively, for $\hbbeta[a]$ (resp. $\tbbeta[a]$).  Under stratified working models, $\hmu_a^1$, $\tmu_a^0$, and $\tmu_a^1$ are constructed by placing $Z_{i|a}^\top \hbbeta^1[a]$, $Z_{i|a}^\top \tbbeta^0[a]$, and $Z_{i|a}^\top \tbbeta^1[a]$ in $m(\cdot)$ for \eqref{eq:gob-mle}, respectively: 
\begin{align*}
\hbbeta^1[a] & = \hbbeta[a] + \frac{1}{n_a} \SUM{ i \colon A_i = a }{} \hh_{ ii \vert a } \hbpsi_i^{ \bbeta[a] }, \\
\tbbeta^0[a] & = \tbbeta[a] - \frac{1}{2n_a} \tB_a^{-1} \SUM{ i \colon A_i = a }{} \th_{ ii \vert a } \cdot \frac{ \twm_{ i \vert a }'' }{ \twm_{ i \vert a }' } \cdot Z_{ i \vert a }, \\
\tbbeta^1[a] & = \tbbeta^0[a] + \frac{1}{n_a} \SUM{ i \colon A_i = a }{} \th_{ ii \vert a } \tbpsi_i^{ \bbeta[a] }, 
\end{align*}
where $\tB_a$ is the estimated bread matrix for $\tbbeta[a]$.  The above estimators are obtained directly from $\hbbeta^1$ in \eqref{eq:gob-mle-bc}, and $\tbbeta^0$, $\tbbeta^1$ in \eqref{eq:gob-fc-bc} using \eqref{eq:proof-hwm-if-beta} and \eqref{eq:proof-hwm-leverage}.  For the small-sample bias adjustment, the estimated linearized variable stays the same except for replacing $\hh_{ii}$ and $\th_{ii}$ by $\hh_{ ii \vert a }$ and $\th_{ ii \vert a }$, respectively. 

\section{Additional Information on Simulations} \label{append:sim}

\subsection{Data generating processes} \label{append:sim-dgp}

We design two simulation experiments, both with 1:1 randomization and binary outcomes, to evaluate the finite-sample performance of the proposed approach. The first one (\textbf{Experiment I} presented in the main manuscript) simulates a hypothetical trial of $( \mu_1, \mu_2 ) = ( 25 \%, 60 \% )$, with $n$ varying from $60$ to $180$. The other one (\textbf{Experiment II} presented in Appendix~\ref{append:sim-2}) simulates a trial of $( \mu_1, \mu_2 ) = ( 10 \%, 18.8 \% )$ with $n = 500$. In both experiments, the simulation results are based on 10,000 repeated runs.  

For each experiment, the outcomes are drawn from a Bernoulli distribution, with 
\[
\pr ( Y_i=1 \vert A_i=a, W_i ) = \text{expit} \left( \beta_a^A + \SUM{j=1}{q} \beta_j^W W_{ij}^\ast \right),
\]
where $\text{expit}(x)=1/(1+e^{-x})$ and $q=10$ for Experiment I and $35$ for Experiment II.  The randomization scheme, to generate $A_i\in\{1,2\}$, is designed to approximate simple randomization, while the sizes of the two arms are kept exactly the same, reflecting a completely randomized experiment.  Each of those baseline covariates is independently drawn from a standard normal distribution.  The values of the treatment and covariate effects are carefully calibrated to simulate the two hypothetical trials, 
\begin{itemize}
\item \textbf{Experiment I} ($q=10$): $( \beta_1^A, \beta_2^A ) = ( -1.5836, 0.5923 )$, $\beta_j^W=\sqrt{0.8\times\log(5)^2/4}$ for $j=1,\ldots,4$ and $\sqrt{0.2\times\log(5)^2/6}$ for $j=5,\ldots,10$, such that $\SUM{j=1}{q} \{ \beta_j^W \}^2 = \log(5)^2$  
\item \textbf{Experiment II} ($q=35$): $( \beta_1^A, \beta_2^A ) = ( -4.7173, -3.2523 )$, $\beta_j^W=\sqrt{\log(25)^2/35}$ for all $j$s, such that $\SUM{j=1}{q} \{ \beta_j^W \}^2 = \log(25)^2$. 
\end{itemize}
The choice ensures that the cumulative effect of all $W_{ij}^\ast$ on the outcome $Y_i$ is moderate (odds ratio of 5) and strong (odds ratio of 25), respectively for two hypothetical trials, while the effect of a single covariate is weak. 

In each simulated trial, a logistic regression is fitted to adjust for a set of covariates, transformed from $W_{ij}$s as follows:
\begin{itemize}
\item \textbf{Experiment I}: $W_{ij}=W_{ij}^\ast+5$ for $j=1,\ldots,4$ and $\abs{W_{ij}^\ast}+5$ for $j=5,\ldots,10$;  
\item \textbf{Experiment II}: $W_{ij}=W_{ij}^\ast+5$ for $j=1,\ldots,30$ and $\abs{W_{ij}^\ast}+5$ for $j=31,\ldots,35$. 
\end{itemize}
The transformation ensures that the working models are always misspecified. 

\subsection{Working model fit} \label{append:sim-est}

All MLEs are obtained using \textbf{glm} function in R Statistical Software and FC estimates are obtained using \textbf{brglm2} package \citep{kosmidis2023brglm2}.  We set the maximal number of iterations to be 500 for both MLE and FC, so their fitting algorithm should converge numerically in all simulations even when stationary points do not exist.  For example, MLE may not exist when the data is near complete separation for binary responses in our simulation experiments, and then, in theory, the fitting algorithm should diverge.  However, in practice, it would generally stops at unreasonably large absolute values with numerical convergence criteria being met. 

\clearpage

\subsection{Simulation Experiment I: additional simulation results} \label{append:sim-res}

\begin{figure}[ht!]
\centering
\includegraphics[width=\linewidth]{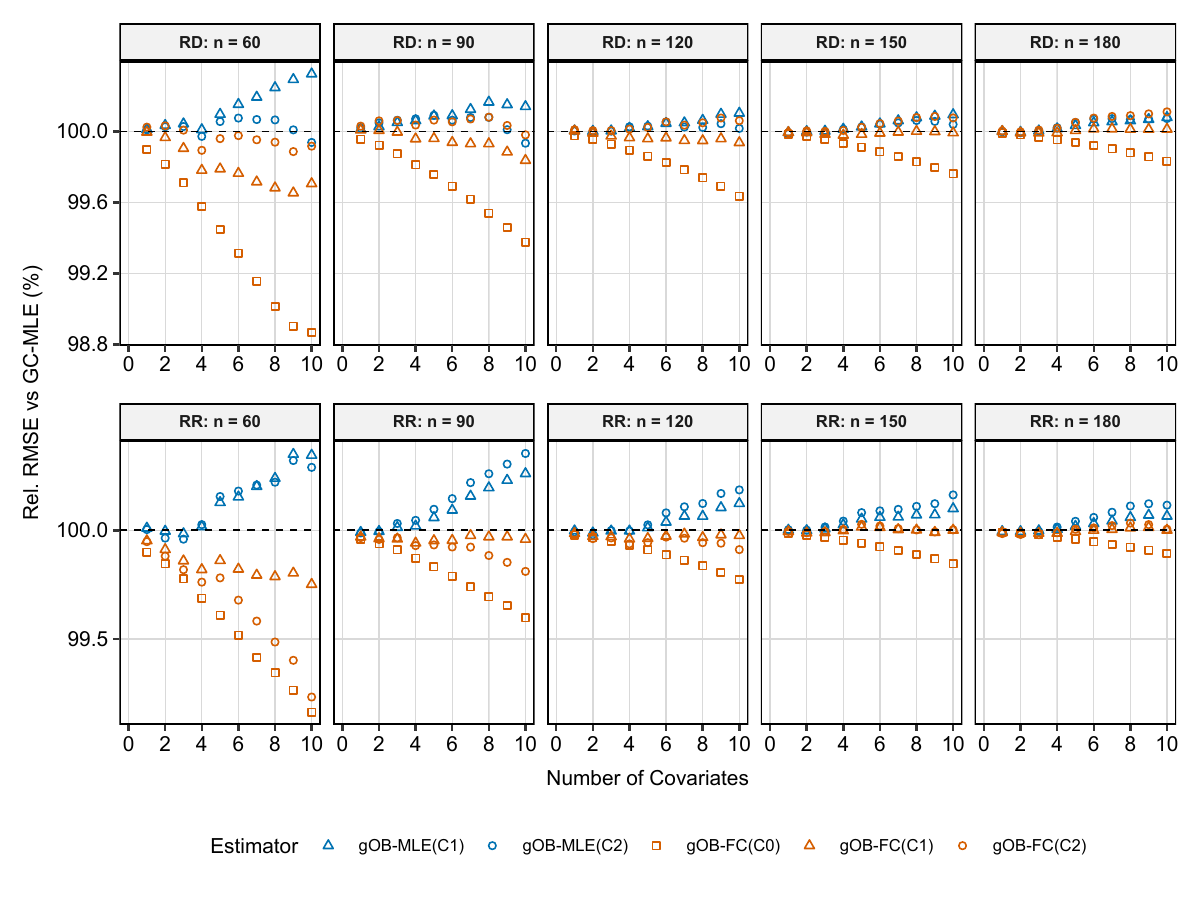}
\caption{
The RMSE---relative to \textbf{GC-MLE} ($\hmu_a$)---of \textbf{gOB-MLE(C1)} ($\hmu_a^1$), \textbf{gOB-MLE(C2)} ($\hmu_a^2$), \textbf{gOB-FC(C0)} ($\tmu_a^0$), \textbf{gOB-FC(C1)} ($\tmu_a^1$) and \textbf{gOB-FC(C2)} ($\tmu_a^2$). RD: $\mu_2-\mu_1$; RR: $\mu_1/\mu_2$; RMSE: Root Mean Square Error; $n$: sample size; Rel.: Relative. 
}
\label{fig:mse}
\end{figure}

\begin{figure}[ht!]
\centering
\includegraphics[width=\linewidth]{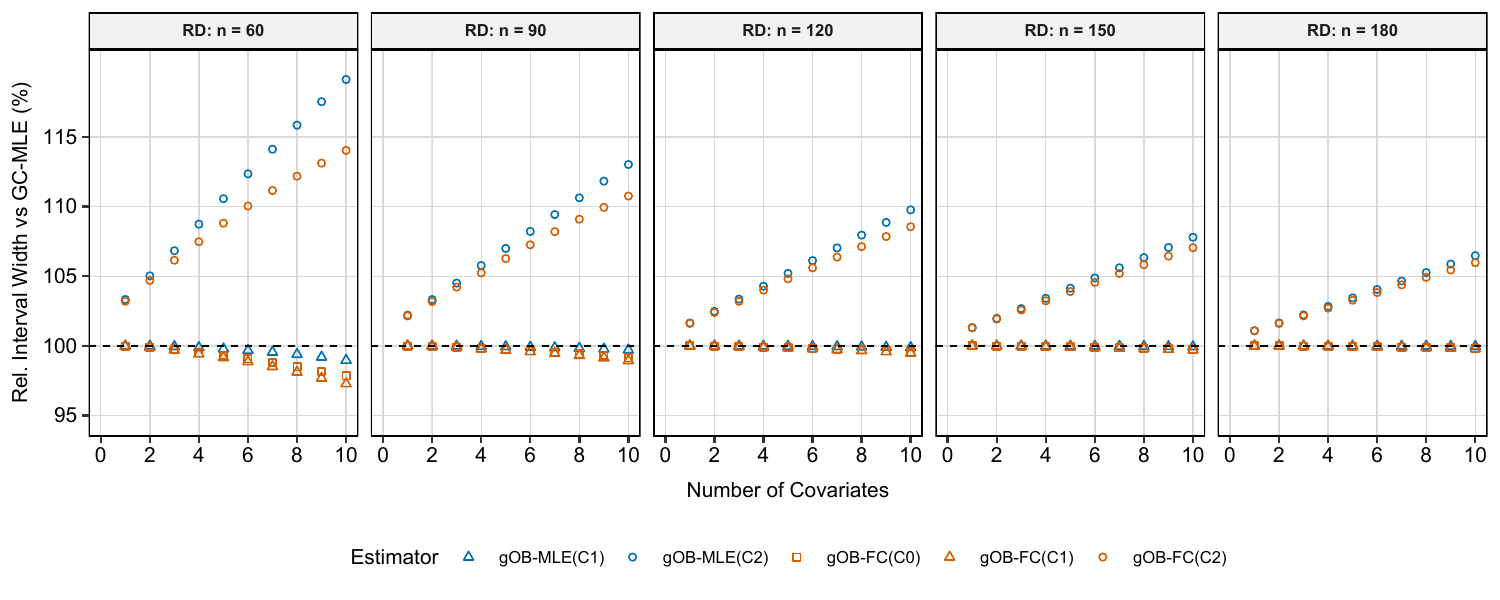}
\caption{
The width of the 95\% CI---relative to \textbf{GC-MLE} ($\hmu_a$) with the proposed small-sample bias adjustment---of \textbf{gOB-MLE(C1)} ($\hmu_a^1$), \textbf{gOB-MLE(C2)} ($\hmu_a^2$), \textbf{gOB-FC(C0)} ($\tmu_a^0$), \textbf{gOB-FC(C1)} ($\tmu_a^1$) and \textbf{gOB-FC(C2)} ($\tmu_a^2$). RD: $\mu_2-\mu_1$; $n$: sample size; Rel.: Relative. 
}
\label{fig:width}
\end{figure}

\clearpage

\FloatBarrier

\subsection{Simulation Experiment II} \label{append:sim-2}

\begin{figure}[ht!]
\centering
\includegraphics[width=\linewidth]{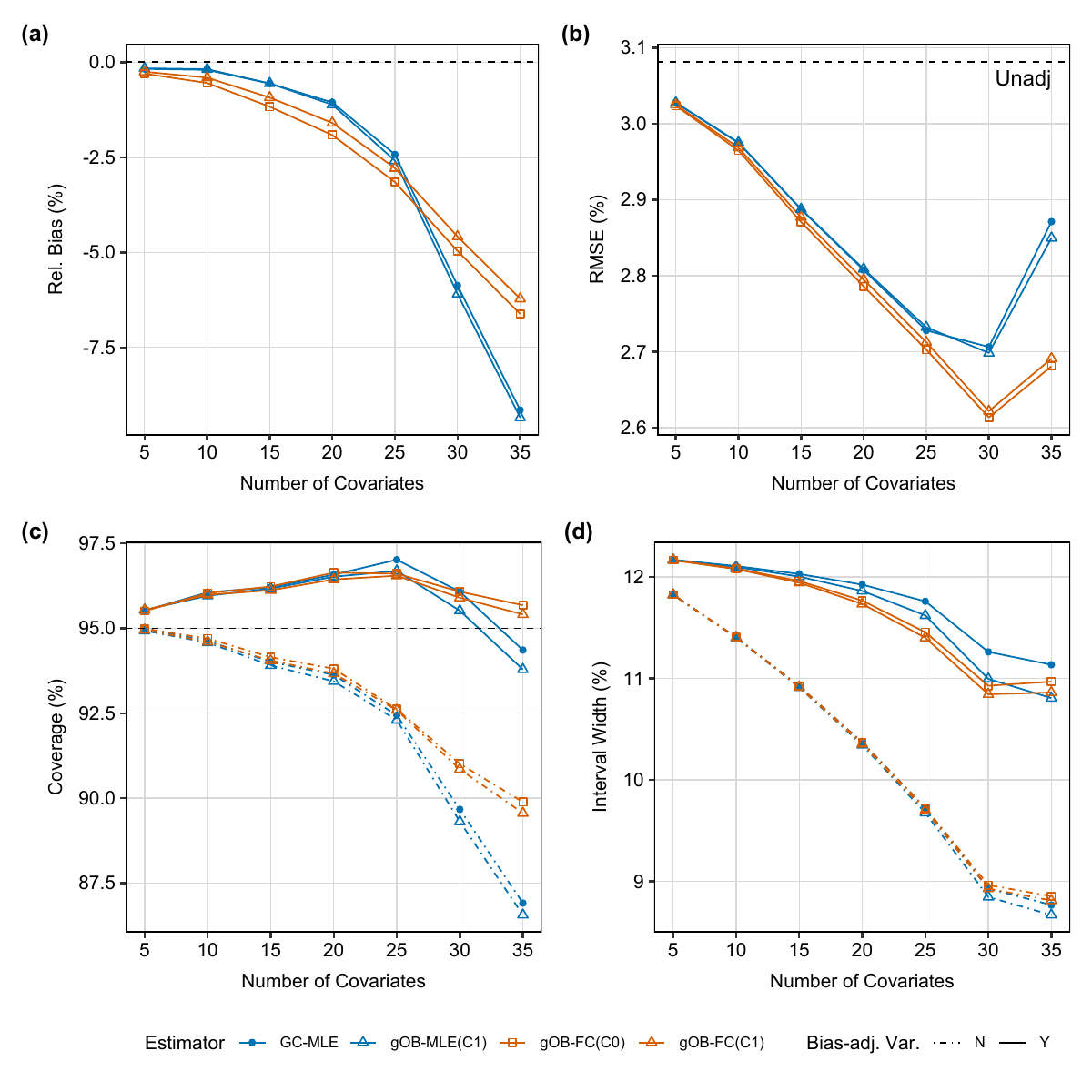}
\caption{
The bias/RMSE of treatment effect estimation (a \& b) and the coverage/width of the 95\% CI (c \& d) for \textbf{GC-MLE} ($\hmu_a$), \textbf{GC-MLE(C1)} ($\hmu_1$), \textbf{gOB-FC(C0)} ($\tmu_a^0$) and \textbf{gOB-FC(C1)} ($\tmu_a^1$) to estimate $\mu_2-\mu_1$. All estimators use two separate (i.e., stratified) working models. Unadj: unadjusted estimator; Bias-adj. Var.: bias-adjusted variance estimator.}
\label{fig:bc_strat}
\end{figure}

We conduct a simulation experiment to evaluate the finite-sample performance of the proposed estimators with stratified working models (Appendix~\ref{append:wm-strat-br}) adjusting for baseline covariates of moderately high dimensions.  We consider a hypothetical trial (1:1 ratio) of $n=500$, $\mu_1 = 10\%$, $\mu_2 = 18.8\%$, and the estimand is $\mu_2-\mu_1$. The details of the data generating process are provided in Appendix~\ref{append:sim-dgp}, and that for nuisance parameter estimation are provided in Appendix~\ref{append:sim-est}. A stratified working model (logistic regression) is applied, that is, a separate logistic regression is fitted to each arm and each model adjusts for up to $35$ baseline covariates. We compare \textbf{GC-MLE(C1)} ($\hmu_a^1$), \textbf{gOB-FC(C0)} ($\tmu_a^0$) and \textbf{gOB-FC(C1)} ($\tmu_a^1$), proposed in \eqref{eq:gob-mle-bc} and \eqref{eq:gob-fc-bc}, with \textbf{GC-MLE} ($\hmu_a$). Note: under stratified working models, \textbf{gOB-MLE(C1)} and \textbf{gOB-FC(C1)} are already free of $O(n^{-1})$ bias (Theorem~\ref{thm:gob-mle}~\&~\ref{thm:gob-fc}).

Figure~\ref{fig:bc_strat} presents simulation results for the four estimators---relative bias, RMSE, interval coverage, and interval width. Their performance is largely indistinguishable when the number of covariates is fewer than 25. With further increased covariates, the two FC-based estimators outperform the two MLE-based estimators, yielding lower bias and RMSE. The under-performance of the MLE-based estimators likely reflects data separation: when adjusting for many covariates, some $\abs{\hbbeta_r}$ values become unreasonably large, deviating substantially from their true values. Moreover, the proposed small-sample adjustment substantially improves interval coverage. In particular, for the two FC-based debiased estimators, coverage remains at or above the nominal level even with 35 adjusted covariates. These estimators deliver the most precise intervals while preserving nominal-or-better coverage.

\FloatBarrier

\begin{table}[t!]
\centering
\begin{tabular}{l|ccccc|c}
\hline
    Estimator          &     RD (\%) & SE (\%) & $\mathrm{RE}_i$ (\%) & 95\% CI (\%)  & Width (\%) & Adj. Var. \\[0.5ex] \hline\hline
         \textbf{Unadj} &    14.04    &   4.206 &                   NA & 5.79, 22.28  & 16.49 & \multirow{2}{*}{No} \\
    \multirow{2}{*}{\textbf{GC-MLE}}  & \multirow{2}{*}{12.06}  
                                      &   3.841 &                16.57 & 4.53, 19.59  & 15.06 & \\\cline{7-7}
                         &            &   4.042 &                 7.63 & 4.14, 19.98  & 15.84 & \multirow{4}{*}{Yes} \\ 
    \textbf{gOB-MLE(C1)} &   11.97    &   4.041 &                 7.68 & 4.05, 19.89  & 15.84 & \\
    \textbf{gOB-FC(C0)}  &   12.05    &   4.040 &                 7.73 & 4.13, 19.97  & 15.84 & \\
    \textbf{gOB-FC(C1)}  &   12.00    &   4.039 &                 7.76 & 4.08, 19.91  & 15.83 & \\
    \hline
\end{tabular}
\caption{Summary of the full population analysis ($n=516$) for the CTN-03 study. RD: risk difference; SE: standard error; $\mathrm{RE}_i$: relative efficiency improvement to Unadj (one minus the ratio between the two variance estimates); CI: confidence interval; Width: interval width; Adj. Var.: bias-adjusted variance estimator}.
\label{tab:ctn03-full}
\end{table}

\section{Application: Analysis II} \label{sec:app-1}

This analysis (Table~\ref{tab:ctn03-full}) evaluates the benefit of our bias‑adjusted variance estimator for \textbf{GC-MLE} (Section~\ref{sec:method-var}) in settings with many adjusted covariates. In such settings, the standard IF‑based variance estimator tends to underestimate variability, resulting in sub-nominal coverage and unreliable inference, as demonstrated in \textbf{Simulation Experiment II} (Figure~\ref{fig:bc_strat}c, Appendix~\ref{append:sim-2}). For \textbf{GC-MLE} under stratified working models with total sample size 500, adjusting for more than 10 variables yields CIs with coverage below the nominal level when the standard IF‑based estimator is used. By contrast, the proposed biased-adjusted variance estimator maintains near‑nominal coverage while adjusting for more than 30 variables. 

This analysis includes 516 participants (261 control, 255 treatment). We fit stratified working models, that is, separate logistic regressions by arm, adjusting for age (continuous), sex (2 levels), race (5 levels), the stratification factor (3 levels), opioid urine toxicology (2 levels), Adjective Rating Scale for Withdrawal (ARSW) Score (continuous), and Clinical Opiate Withdrawal Scale (COWS) Score (continuous). Including an intercept, each model comprises 12 regression coefficients. 

Table~\ref{tab:ctn03-full} reports the unadjusted and adjusted analyses. We present estimated standard errors for \textbf{GC-MLE} both without and with the small-sample bias adjustment. We also report results for \textbf{gOB-MLE(C1)}, \textbf{gOB-FC(C0)}, and \textbf{gOB-FC(C1)} as sensitivity analyses.

\textbf{GC-MLE} with the standard IF-based variance estimator \eqref{eq:if-mu-aipw} shows a 16.57\% gain in relative efficiency and yields narrower 95\% CIs. However, the bias-adjusted variance estimator indicates that only 7.63\% of this gain reflects genuine efficiency improvement, implying that more than half of the apparent gain stems from underestimation of variability. The point estimates from \textbf{gOB-MLE(C1)}, \textbf{gOB-FC(C0)}, and \textbf{gOB-FC(C1)} are nearly identical to those from \textbf{GC-MLE}. With the proposed small-sample bias adjustment, the associated standard errors and CIs are likewise similar.

Moreover, the standard IF-based variance estimator yields a 95\% CI that excludes values below 4.53\%, suggesting the RD is unlikely to be smaller than this threshold. This is misleading: the bias-adjusted variance estimator produces a much wider 95\% CI with a lower-bound near 4.10\% (about 9.5\% lower) highlighting the risk of overconfident inference.

\section{Auxiliary Technical Results}

\subsection{Equivalence of empirical and theoretical IFs} \label{append:proof-zhang}

Let $X_i^\top = ( J_i^\top, Z_i^\top )$ and $X_{i|a}^\top = ( J(a)^\top, Z_{i|a}^{\top} )$, where $J_i = ( I ( A_i = 1 ), I ( A_i = 2 ), \ldots,  I ( A_i = k ))^\top$ and $J(a)$ is a $k$-dimensional vector with only the $a$-th row being 1 and the other being 0.  This formulation of $X_i$ is equivalent to the usual one, with an intercept and the treatment assignment.  The former formulation is adopted for mathematical convenience. 

Since $J_i^\top J ( a ) \equiv I ( A_i = a )$, $J(a)$ is not a random vector and 
\[
B = \E [ m_i' \cdot X_i X_i^\top ] = 
\begin{pmatrix}
\E [ m_i' \cdot J_i J_i^\top ] & \E [ m_i' \cdot J_i Z_i^\top ] \\ 
\E [ m_i' \cdot Z_i J_i^\top ] & \E [ m_i' \cdot Z_i Z_i^\top ]
\end{pmatrix}.
\]
we have that,
\begin{align*}
B \cdot \begin{pmatrix} J(a) / \pi_a \\ \bzero_{p-k} \end{pmatrix}
& = \begin{pmatrix} 
    \E [ m_i' \cdot J_i I ( A_i = a ) ] / \pi_a \\ 
    \E [ m_i' \cdot Z_i I ( A_i = a ) ] / \pi_a \end{pmatrix} 
= \begin{pmatrix}  
    \E [ m_{i|a}' \cdot J(a) | A_i = a ] \\ 
    \E [ m_{i|a}' \cdot Z_{i|a} | A_i = a ] \end{pmatrix} \\
& = \ \E [ m_{i|a}' \cdot X_{i|a} | A_i = a ] = \E [ m'_{i|a} \cdot X_{i|a} ]. 
\end{align*}
The last equality holds due to simple randomization.  Therefore, we obtain \eqref{eq:zhang}:
\[
\E [ m'_{i|a} \cdot X_{i|a}^\top ] B^{-1} X_i
= \begin{pmatrix} J (a)^\top/\pi_a & \bzero_{p-k}^\top \end{pmatrix} \cdot \begin{pmatrix} J_i \\ Z_i \end{pmatrix} 
=  \frac{I(A_i=a)}{\pi_a}.
\]

\subsection{First-order bias formulae for nuisance parameter estimators} \label{append:proof-bias-beta}

\subsubsection{MLEs under misspecification}
\label{app:MLE expansion}

In this section, we prove \eqref{bias-MLE} in the main manuscript (Section~\ref{sec:pre-bias}).  Let both $U_i ( \bbeta )$ and $\bbeta$ be of $p$ dimensions, and write $U_i ( \bbeta ) \coloneqq ( U_{i1} ( \bbeta ), \ldots, U_{ip} ( \bbeta ) )^\top$.  Since $\hbbeta$ is the MLE of $\bbeta$, $\hU^{(n)} \coloneqq \SUM{i=1}{n} \hU_i \equiv 0$.  We further write $U^{(n)} \coloneqq U^{(n)} ( \bbeta_0 )$, $U^{(n)} = ( U_1^{(n)}, \ldots, U_p^{(n)} )^\top$, $\DERIV{ U_r^{(n)} }{} \coloneqq \DERIV{ U_r^{(n)} }{\bbeta}$, and $\DERIV{ U^{(n)} }{} \coloneqq \DERIV{ U^{(n)} }{ \bbeta } = ( \DERIV{ U_1^{(n)} }{}, \ldots, \DERIV{ U_p^{(n)} }{} )$.  Following the standard first-order approximation of $M$-estimators \citep{stefanski2002calculus}, we have 
\begin{equation}
\label{first-order MLE}
\hbbeta - \bbeta_0 = \underbrace{ B^{-1} U^{(n)} / n }_{ O_{\sp} ( n^{-1/2} ) } + O_{\sp} ( n^{-1} ).
\end{equation}
Assuming that $U_i ( \bbeta )$ is the score equation of a correctly specified parametric model, \citet{cox1968general} derived the first-order bias of $\hbbeta$ by a standard second-order Taylor expansion of the score equation $\hat{U}^{(n)}_r = 0$: for $r = 1, \ldots, p$,
\begin{equation}
U^{(n)}_r + \DERIV{ U^{(n)}_r }{}^\top ( \hbbeta - \bbeta_0 ) + \frac{1}{2} ( \hbbeta - \bbeta_0 )^\top \DDERIV{U^{(n)}_r }{} ( \hbbeta - \bbeta_0 ) + O_{\sp} ( n^{-1} ) = 0, \label{eq:proof-score-2order}
\end{equation}
where we let $\DDERIV{U^{(n)}_r}{} \coloneqq \nabla^{2}_{\bbeta \bbeta} U_{r}^{(n)}$ to simplify the notation.  After taking the expectation on both sides of \eqref{eq:proof-score-2order}, the first term in the LHS of \eqref{eq:proof-score-2order} has mean zero.  In the following, we compute the means of the next two terms without assuming that the working GLM is correct.   

We first consider the mean of the second term in the LHS of \eqref{eq:proof-score-2order}.  Let $\hbeta_s$ (resp. $\beta_{0s}$) be the $s$th element of $\hbbeta$ (resp. $\bbeta_0$), $\DERIV{U^{(n)}_r }{s} \coloneqq \DERIV{U^{(n)}_r }{\beta_s}$, and $B_{r\colon}^\top$ be the $r$th row of $B$.  We have that 
\begin{align}
\E \left[ \DERIV{ U^{(n)}_r }{}^\top ( \hbbeta - \bbeta_0 ) \right] 
& = \SUM{s}{} \E \left[ \DERIV{U^{(n)}_r }{s} ( \hbeta_s - \beta_{0s} ) \right] \nonumber \\
& = \SUM{s}{} \E \left[ \DERIV{ U^{(n)}_r }{s} \right] \E \left[ \hbeta_s - \beta_{0s} \right] + \SUM{s}{} \Cov \left( \DERIV{ U^{(n)}_r }{s}, \hbeta_s - \beta_{0s} \right) \nonumber \\
& = - n B_{r\colon}^\top \E \left[ \hbbeta - \bbeta_0 \right] + \SUM{s}{} \Cov \left( \DERIV{ U^{(n)}_r }{s}, \hbeta_s - \beta_{0s} \right). \label{eq:proof-score-2order-1-1}
\end{align}
Let $B^{s\colon\top}$ be the $s$th row for $B^{-1}$ and $B^{st}$ be the $(s,t)$ cell of $B^{-1}$.  Applying \eqref{first-order MLE}, we can replace $\hbeta_s - \beta_{s0}$ by $B^{s\colon\top} U^{(n)} / n + R_s$, where $R_s = O_{\sp} ( n^{-1} ) $ is a random variable, in the second term of the RHS of \eqref{eq:proof-score-2order-1-1}
\begin{align}
& \ \SUM{s}{} \Cov \left( \DERIV{ U^{(n)}_r }{s}, \hbeta_s - \beta_{0s} \right) \nonumber \\
= & \ \frac{1}{n} \SUM{s}{} \Cov \left( \DERIV{ U^{(n)}_r }{s}, U^{(n)} \right) B^{s\colon} + \SUM{s}{} \Cov \left( \DERIV{ U^{(n)}_r }{s}, R_s \right) \nonumber \\
= & \ \SUM{s}{} \underbrace{ \Cov \left( \DERIV{U_{ir}}{s}, U_i \right) }_{ \text{due to simple randomization} } B^{s\colon}  + \SUM{s}{} \left\{ \E \left[ \DERIV{ U^{(n)}_r }{s} \cdot R_s \right] - \E \left[ \DERIV{ U^{(n)}_r }{s} \right] \E [ R_s ] \right\} \nonumber \\
= & \ \SUM{s}{} \E \left[ \DERIV{U_{ir}}{s} \cdot U_i^\top \right] B^{s\colon} + \SUM{s}{} \E \big[ \underbrace{ \left( \DERIV{ U^{(n)}_r }{s} - \E \left[ \DERIV{ U^{(n)}_r }{s} \right] \right) }_{ O_{\sp} ( n^{1/2} ) } R_s \big] \nonumber \\
= & \ \SUM{s,t}{} \E \left[ \DERIV{U_{ir}}{s} \cdot U_{it} \right] B^{st} + O ( n^{-1/2} ). \label{eq:proof-score-2order-1-2}
\end{align}
With \eqref{eq:proof-score-2order-1-1} and \eqref{eq:proof-score-2order-1-2}, we have that 
\begin{equation}
\E \left[ \DERIV{ U^{(n)}_r }{}^\top ( \hbbeta - \bbeta_0 ) \right] = - n B_{r\colon}^\top \E [ \hbbeta - \bbeta_0 ] + \SUM{s,t}{} \E \left[ \DERIV{U_{ir}}{s} \cdot U_{it} \right] B^{st} + O ( n^{-1/2} ). \label{eq:proof-score-2order-1}
\end{equation}

We next consider the mean of the third term in the LHS of \eqref{eq:proof-score-2order}.  Following a similar strategy, we obtain that 
\begin{align}
& \E \left[ \left( \hbbeta - \bbeta_0 \right)^\top \DDERIV{ U^{(n)}_r }{} \left( \hbbeta - \bbeta_0 \right) \right] 
= \SUM{s,t}{} \E \left[ \left( \hbeta_s - \beta_{s0} \right) \left( \hbeta_t - \beta_{t0} \right) \DDERIV{ U^{(n)}_r }{st} \right] \nonumber \\
= & \SUM{s,t}{} \E \left[ \left( \hbeta_s - \beta_{s0} \right) \left( \hbeta_t - \beta_{t0} \right) \right] \E \left[ \DDERIV{ U^{(n)}_r  }{st} \right] + \SUM{s,t}{} \Cov \left( \left\{ \hbeta_s - \beta_{s0} \right\} \left\{ \hbeta_t - \beta_{t0} \right\}, \DDERIV{ U^{(n)}_r }{st} \right),  \label{eq:proof-score-2order-2-1}
\end{align}
where we let $\DDERIV{ U^{(n)}_r }{st} \coloneqq \DDERIV{ U^{(n)}_r }{\beta_s\beta_t}$ to simplify the notation.  Applying \eqref{first-order MLE}, we can replace $\hbeta_s - \beta_{s0}$ by $B^{s\colon\top} U^{(n)} / n + R_s$ (similarly for $\hbeta_t$). For the first term in the RHS in \eqref{eq:proof-score-2order-2-1}, we have that 
\begin{align}
& \SUM{s,t}{} \E \left[ \left( \hbeta_s - \beta_{s0} \right) \left( \hbeta_t - \beta_{t0} \right) \right] \E \left[ \DDERIV{ U^{(n)}_r  }{st} \right] \nonumber \\
= & \SUM{s,t}{} \Big\{ n^{-2} B^{s\colon\top} \E \left[ U^{(n)} U^{(n)\top} \right] B^{t\colon} + n^{-1} \underbrace{ \E \left[ R_s \cdot U^{(n)\top} \right] B^{t\colon} }_{ O ( n^{-1/2} ) } + n^{-1} \underbrace{ B^{s\colon\top} \E \left[ U^{(n)} \cdot R_t \right] }_{ O ( n^{-1/2} ) } \nonumber \\
& \qquad  + \underbrace{ \E \left[ R_s R_t \right] }_{ O ( n^{-2} ) } \Big\} \underbrace{ \E \left[ \DDERIV{ U^{(n)}_r }{st} \right] }_{ O ( n ) } \nonumber \\
= & \SUM{s,t}{} \underbrace{ B^{s\colon\top} \E \left[ U_i U_i^\top \right] B^{t\colon} \cdot \E \left[ \DDERIV{U_{ir}}{st} \right] }_{ \text{due to simple randomization} } + O ( n^{-1/2} ) \nonumber \\
= & \SUM{s,t}{} B^{s\colon\top} M B^{t\colon} \cdot \E \left[ \DDERIV{U_{ir}}{st} \right] + O ( n^{-1/2} ).\label{eq:proof-score-2order-2-2}
\end{align}
For the second term in the RHS in \eqref{eq:proof-score-2order-2-1}, we have that 
\begin{align}
& \SUM{s,t}{} \Cov \left( \left\{ \hbeta_s - \beta_{s0} \right\} \left\{ \hbeta_t - \beta_{t0} \right\}, \DDERIV{ U^{(n)}_r }{st} \right) \nonumber \\
= & \SUM{s,t}{} \Cov \left( n^{-2 } B^{s\colon\top} U^{(n)} U^{(n)\top} B^{t\colon} + n^{-1} R_s \cdot U^{(n)\top} B^{t\colon} + n^{-1} B^{s\colon\top} U^{(n)} \cdot R_t + R_s R_t, \DDERIV{ U^{(n)}_r }{st} \right) \nonumber \\
= & \SUM{s,t}{} n^{-1}  \underbrace{ \Cov \left( B^{s\colon\top} U_i U_i^\top B^{t\colon}, \DDERIV{ U_{ir} }{st} \right) }_{ \text{due to simple randomization} } + \nonumber \\
& \qquad n^{-1} \SUM{s,t}{} \E \Big( \underbrace{ R_s \cdot U^{(n)\top} B^{t\colon} }_{ O_{\sp} ( n^{-1/2} ) } \underbrace{ \left\{ \DDERIV{ U^{(n)}_r }{st} - \E \left[ \DDERIV{ U^{(n)}_r }{st} \right] \right\} }_{ O_{\sp} ( n^{1/2} ) } \Big) + \nonumber \\
& \qquad n^{-1} \SUM{s,t}{} \E \Big( \underbrace{ B^{s\colon\top} U^{(n)} \cdot R_t }_{ O_{\sp} ( n^{-1/2} ) } \underbrace{ \left\{ \DDERIV{ U^{(n)}_r }{st} - \E \left[ \DDERIV{ U^{(n)}_r }{st} \right] \right\} }_{ O_{\sp} ( n^{1/2} ) } \Big) + \nonumber \\
& \qquad \SUM{s,t}{} \E \Big( \underbrace{ R_s R_t }_{ O_{\sp} ( n^{-2} ) } \underbrace{ \left\{ \DDERIV{ U^{(n)}_r }{st} - \E \left[ \DDERIV{ U^{(n)}_r }{st} \right] \right\} }_{ O_{\sp} ( n^{1/2} ) } \Big) = \ O ( n^{-1} ).  \label{eq:proof-score-2order-2-3}
\end{align}
With \eqref{eq:proof-score-2order-2-1}, \eqref{eq:proof-score-2order-2-2} and \eqref{eq:proof-score-2order-2-3}, we have that 
\begin{equation}
\E \left[ \left( \hbbeta - \bbeta_0 \right)^\top \DDERIV{ U^{(n)}_r }{} \left( \hbbeta - \bbeta_0 \right) \right] = \SUM{s,t}{} B^{s\colon\top} M B^{t\colon} \cdot \E \left[ \DDERIV{U_{ir}}{st} \right] + O ( n^{-1/2} ). \label{eq:proof-score-2order-2}
\end{equation}

Finally, combining \eqref{eq:proof-score-2order}, \eqref{eq:proof-score-2order-1}, and \eqref{eq:proof-score-2order-2} for $r = 1, \ldots, p$, we have that  
\begin{multline*}
- n B \E \left[ \hbbeta - \bbeta_0 \right] + \SUM{s,t}{}  B^{st} \E \left[ U_{it} \cdot \DERIV{U_i}{s} \right] + \\ \frac{1}{2} \SUM{s,t}{} B^{s\colon\top} M B^{t\colon} \cdot \E \left[ \DDERIV{U_i}{st} \right] + O ( n^{-1/2} ) = \bzero_p.
\end{multline*}
Then, we have that $ \E [ \hbbeta - \bbeta_0 ] = n^{-1} \bb_1 ( \hbbeta ) + O ( n^{-3/2} )$, where
\begin{equation}
\bb_1 ( \hbbeta ) = B^{-1} \SUM{s,t}{} \left\{ B^{st} \E \left[ U_{it} \cdot \DERIV{U_i}{s} \right] + \frac{1}{2} B^{s\colon\top} M B^{t\colon} \cdot \E \left[ \DDERIV{U_i}{st} \right] \right\} \label{eq:proof-mle-bias-vector}.
\end{equation}
This is exactly the same as the result provided in \citet[][Equation (5)]{kosmidis2024empirical} without the augmentation term.  When the model is correctly specified, $B^{-1} = M$ and thus $B^{s\colon}{}^\top M B^{t\colon} = B^{st}$.  Then, the RHS of \eqref{eq:proof-mle-bias-vector} reduces to 
\[
B^{-1} \SUM{s,t}{} B^{st} \E \left[ \frac{1}{2} \DDERIV{U_i}{st} + U_{it} \cdot \DERIV{U_i}{s} \right],
\]
which is exactly the same as the result provided in \citet[][Equation (20)]{cox1968general}.  Further, when this true model is a canonical GLM, $\E \left( U_{it} \cdot \DERIV{U_i}{s} \right) = \bzero_p$ \citep{cordeiro1991bias,firth1993bias}, the RHS of \eqref{eq:proof-mle-bias-vector} reduces to 
\[
\frac{1}{2} B^{-1} \SUM{s,t}{} B^{st} \E \left( \DDERIV{U_i}{st} \right).
\]

Finally, we show that $2 B \bb_1 ( \hbbeta ) \equiv \DERIV{ \tr ( B^{-1} M ) }{} \equiv \DERIV{ \tr ( B^{-1} M ) }{\bbeta}$ to complete the proof.  For $r = 1, \cdots, p$, we take the partial derivative of $\tr ( B^{-1} M )$ with respect to $\beta_r$ and write $\DERIV{ \tr ( B^{-1} M ) }{\beta_r}$ as $\DERIV{ \tr ( B^{-1} M ) }{r}$ for short,
\begin{align*}
& \DERIV{ \tr ( B^{-1} M ) }{r} = \ \tr \left( \DERIV{ \{ B^{-1} M \} }{r} \right) = \ \tr \left( \DERIV{ B^{-1} }{r} \cdot M \right) + \tr \left( B^{-1} \DERIV{M}{r} \right) \\
= & \ \tr \left( - B^{-1} \DERIV{ B }{r} \cdot B^{-1} M \right) + \tr \left( B^{-1} \DERIV{ \E \left[ U_i U_i^\top \right] }{r} \right) \\ 
= & \ \tr \left( B^{-1} \DERIV{ \{ \DERIV{ \E \left[ U_i \right] \} }{} }{r} B^{-1} M \right) + \tr \left( B^{-1} \E \left[ \DERIV{ \left\{ U_i U_i^\top \right\} }{r} \right] \right) \\
= & \ \tr \left( \E \left[ \DERIV{ \{ \DERIV{ U_i }{} \} }{r} \right] B^{-1} M B^{-1} \right) + \tr \left( B^{-1} \E \left[ \DERIV{ U_i }{r} \cdot U_i^\top \right] \right) + \tr \left( B^{-1} \E \left[ U_i \DERIV{ U_i }{r}^\top \right] \right) \\
= & \ \tr \left( B^{-1} M B^{-1} \E \left[ \DDERIV{ U_{ir} }{} \right] \right) + \E \left[ U_i^\top B^{-1} \DERIV{ U_{ir} }{} \right] + \E \left[ \DERIV{ U_{ir} }{}^\top \cdot B^{-1} U_i \right] \\
= & \ \SUM{s,t}{} B^{s\colon\top} M B^{t\colon} \E \left[ \DDERIV{U_{ir}}{st} \right] + \SUM{s,t}{} B^{st} \E \left[ U_{it} \DERIV{U_{ir}}{s} \right] + \SUM{s,t}{} B^{st} \E \left[ \DERIV{U_{ir}}{t} \cdot U_{is} \right]  \\ 
= & \ \SUM{s,t}{} B^{s\colon\top} M B^{t\colon} \E \left[ \DDERIV{U_{ir}}{st} \right] + 2 \SUM{s,t}{} B^{st} \E \left[ U_{it} \DERIV{U_{ir}}{s} \right].
\end{align*}
Comparing the last term in the above display with \eqref{eq:proof-mle-bias-vector} completes the proof for $\bb_1 ( \hbbeta )$.  Besides, the last identity in the above equation suggests that 
\begin{equation}
\DERIV{ \tr ( B^{-1} M ) }{r} = \ \tr \left( B^{-1} M B^{-1} \E \left[ \DDERIV{ U_{ir} }{} \right] \right) + 2 \tr \left( B^{-1} \E \left[ U_i \DERIV{ U_i }{r}^\top \right] \right). \label{proof-mle-bias-vector-ext} 
\end{equation}

\subsubsection{FC estimators under misspecification}
\label{app:FC expansion}

In this section, we prove \eqref{bias-Firth} in the main manuscript (Section~\ref{sec:pre-bias}).  \citet{firth1993bias} proposed the modified score equation to remove the first-order bias of $\hbbeta$.  It is written as $ U^{(n)} ( \bbeta ) + \Delta^{(n)} ( \bbeta ) = \bzero_p$, where the $r$th row of $\Delta^{(n)} ( \bbeta )$ (the augmented term) reads as 
\begin{align*}
\Delta_r^{(n)} ( \bbeta ) 
& \coloneq -\frac{1}{2} \SUM{s,t}{} \eB^{st} ( \bbeta ) \SUM{i=1}{n} \DDERIV{ U_{ir} ( \bbeta ) }{st} / n \\ 
& = \frac{1}{2} \tr \left( \eB^{-1} ( \bbeta ) \{ \DERIV{\eB}{r} ( \bbeta ) \} \right) \\ 
& = \frac{1}{2} \DERIV{ \log \det ( \eB ( \bbeta ) ) }{r}.  
\end{align*}
Here $\eB ( \bbeta ) \coloneq - n^{-1} \SUM{i=1}{n} \DERIV{ U_i }{} ( \bbeta )$ \citep[][Section 3.1]{firth1993bias}.  Let $\Delta^{(n)} \coloneq \Delta^{(n)} ( \bbeta_0 )$ and $\eB \coloneq \eB ( \bbeta_0 )$.  From \eqref{eq:proof-beta-bread-est-inv}, we have that $\eB^{st} = B^{st} + O_{\sp} ( n^{-1/2} )$ and thus
\begin{align*}
\E \left[ \Delta^{(n)} \right] 
& = -\frac{1}{2} \E \left[ \SUM{s,t}{} \left\{ B^{st} + O_{\sp} ( n^{-1/2} ) \right\} \SUM{i=1}{n} \DDERIV{ U_i }{st} / n \right] \\
& = -\frac{1}{2} \SUM{s,t}{} \left\{ B^{st} + O ( n^{-1/2} ) \right\} \E \left[ \DDERIV{ U_i }{st} \right] \\
& = - \frac{1}{2} \SUM{s,t}{} B^{st} \E \left[ \DDERIV{U_{i}}{st} \right] + O ( n^{-1/2} ) \\
& = \frac{1}{2} \DERIV{ \log \det ( B ) }{} + O ( n^{-1/2} ),
\end{align*}
since 
\begin{align}
\DERIV{ \log \det ( B ) }{r} & = \tr \left( B^{-1} \{ \DERIV{B}{r} \} \right) = - \tr \left( B^{-1} \{ \DERIV{\E \left[ \DERIV{U_i}{} \right] }{r} \} \right) \nonumber \\ 
& = - \tr \left(  B^{-1} \E \left[ \DERIV{ \{ \DERIV{U_i}{} }{r} \} \right] \right) = - \tr \left( B^{-1} \E \left[ \DDERIV{ U_{ir} }{} \right] \right) \label{eq:proof-fc-bias-vector} \\
& = - \SUM{s,t}{} B^{st} \E \left[ \DDERIV{U_{ir}}{st} \right]. \nonumber
\end{align}
Finally, following \citet[][Equation (5)]{kosmidis2024empirical},  we have that 
\begin{align*}
\E [ \tbbeta - \bbeta_0 ] & = \frac{1}{n} B^{-1} \E \left[ \Delta^{(n)} \right] + \frac{1}{n} \bb_1 ( \hbbeta ) + O_{\sp} ( n^{-3/2} ) \\ 
& = \ \frac{1}{n} \left\{ \frac{1}{2}  B^{-1} \DERIV{ \log \det ( B ) }{} + \bb_1 ( \hbbeta ) \right\} + O_{\sp} ( n^{-3/2} ),
\end{align*}
which completes the derivation for $\bb_1 ( \tbbeta )$.

\subsection{Higher-order stochastic expansion} \label{append:proof-beta-hoif}

We obtain the second-order stochastic stochastic expansion of $\hbbeta$ by replacing the notations of Equation (3.3)--(3.5) in \citet{rilstone2024on} as follows: $q_i = U_i$, $\bar{q}^{(1)} = \E [ \DERIV{U_i}{} ] = - B$, $\tilde{q}_i^{(1)} = \DERIV{U_i}{} - \E [ \DERIV{U_i}{} ]$, and $\bar{q}_1^{(2)} \coloneqq \E [ \mathrm{d} \, \DERIV{U_1}{} / \mathrm{d} \, \bbeta ]$, where $\mathrm{d} \, \DERIV{U_1}{} / \mathrm{d} \, \bbeta$ is Kronecker matrix differentiation \citep{macrae1974matrix}, denoting a $p \times p^2$ matrix of the second order derivatives of $U_i$ with respect to $\bbeta$. Obviously, we have that, $d_i = - ( \bar{q}^{(1)} )^{-1} q_i = B^{-1} U_i = \bpsi^{\bbeta}_i$.  Replacing $i_1,i_2$ in \citet{rilstone2024on} with $i,j$, we further have that  
\begin{align*}
\bpsi^{\bbeta, 2}_{ij} \equiv \ d_{ij} 
& = - ( \bar{q}^{(1)} )^{-1} \left\{ \tilde{q}_i^{(1)} d_j + \frac{1}{2}  \bar{q}_1^{(2)} ( d_i \otimes d_j ) \right\} \\
& = B^{-1} \left\{ \left( \DERIV{U_i}{} - \E [ \DERIV{U_i}{} ] \right) \bpsi^{\bbeta}_j + \frac{1}{2} \E [ \mathrm{d} \, \DERIV{U_1}{} / \mathrm{d} \, \bbeta ] ( \bpsi^{\bbeta}_i \otimes \bpsi^{\bbeta}_j ) \right\}.
\end{align*}
After translating the notation from \citet{rilstone2024on} to our notation, we have 
\begin{equation} 
\hbbeta -\bbeta_0 = \frac{1}{n} \SUM{i=1}{n} \bpsi^{\bbeta}_i + \frac{1}{n^2} \SUM{i,j=1}{n} \bpsi^{\bbeta, 2}_{ij} + O_{\sp} ( n^{-3/2} ). \label{eq:proof-beta-mle-expand-2rd}
\end{equation}
Obviously, $\SUM{i}{} \bpsi^{\bbeta}_i = O_{\sp} ( n^{1/2} )$ and $\SUM{ij}{} \bpsi^{\bbeta,2}_{ij} = O_{\sp} ( n )$. Besides, $\E [ \bpsi^{\bbeta,2}_{ij} ] = \bzero_p$ for $i \neq j$ and $\E [ \bpsi^{\bbeta, 2}_{ii} ] = \bb_1 ( \hbbeta )$, since 
\[
\frac{1}{n} \bb_1 ( \hbbeta ) = \E \left[ \frac{1}{n^2} \SUM{i,j=1}{n} \bpsi^{\bbeta,2}_{ij} \right] = \frac{1}{n} \E \left[ \bpsi^{\bbeta,2}_{ii} \right].
\]

Next, we consider the second-order stochastic expansion of $\tbbeta$.  Following \citet{kosmidis2024empirical} (online supplementary material, expression (S1) in Section S3), we have
\begin{align} 
\tbbeta -\bbeta_0 & = \frac{1}{n} \SUM{i=1}{n} \bpsi^{\bbeta}_i + \frac{1}{n^2} \SUM{i,j=1}{n} \bpsi^{\bbeta, 2}_{ij} + \frac{1}{n} B^{-1} \Delta^{(n)} + O_{\sp} ( n^{-3/2} ) \nonumber \\
& = \frac{1}{n} \SUM{i=1}{n} \bpsi^{\bbeta}_i + \frac{1}{n^2} \SUM{i,j=1}{n} \bpsi^{\bbeta, 2}_{ij} + \frac{1}{n} B^{-1} H_3 + O_{\sp} ( n^{-3/2} ), \label{eq:proof-beta-fc-expand-2rd}
\end{align}
since 
\begin{align*}
\Delta^{(n)} & = \frac{1}{2} \SUM{i=1}{n} m_i'' X_i^\top \left\{ \SUM{j=1}{n} m_j' X_j X_j^\top \right\}^{-1} X_i \cdot X_i \\ 
& = \frac{1}{n} \SUM{i=1}{n} \frac{1}{2} m_i'' X_i^\top B^{-1} X_i \cdot X_i + O_{\sp} ( n^{-1/2} ) \\
& = H_3 + \left( \frac{1}{n} \SUM{i=1}{n} \frac{1}{2} m_i'' X_i^\top B^{-1} X_i \cdot X_i - H_3 \right) + O_{\sp} ( n^{-1/2} ) \\
& = H_3 + O_{\sp} ( n^{-1/2} ),
\end{align*}
where we recall that $H_3$ is defined in Proposition~\ref{prop:beta-glm} and the last line follows from the standard central limit theorem. The second equality in the above holds, because $n^{-1} \SUM{j}{} m_j' X_j X_j^\top$ can be written as $B +  O_\sp ( n^{-1/2} )$ and 
\begin{align}
\left\{ \frac{1}{n} \SUM{j}{} m_j' X_j X_j^\top \right\}^{-1} 
& = \left\{ B + \frac{1}{n^{1/2}} \cdot \underbrace{ O_\sp ( 1 ) }_{ p \times p } \right\}^{-1} = B^{-1} - \frac{1}{n^{1/2}} \cdot B^{-1} \cdot \underbrace{ O_\sp ( 1 ) }_{ p \times p } \cdot B^{-1} \nonumber \\
& = B^{-1} + O_\sp ( n^{-1/2} ). \label{eq:append-proof-aux-inv}
\end{align}

\subsection{Auxiliary results used in the proofs} \label{eq:append-proof-aux-others}

Finally, we collect a set of useful auxiliary results in this section. Related notation can be found in Appendix~\ref{app:MLE expansion} and Appendix~\ref{app:FC expansion}.

Denote the $r$th row of $\bpsi^{\bbeta}_i$ as $\psi^{\bbeta}_{ir}$. For $i,j \neq 1$, we have the following set of technical results that are useful in the above derivations.

\begin{lemma}
\label{lem:aux}
The following hold for $i \neq j \neq 1$:
\begin{equation}
\label{eq:proof-bias-gc-suppl-1}
\begin{split}
& \E \left[ m_{1|a}' \cdot X_{1|a}^\top \bpsi_i^{\bbeta} \right] = 0, \quad \E \left[ m_{1|a}' \cdot X_{1|a}^\top \bpsi_{i,1}^{\bbeta,2} \right] = 0, \\
& \E \left[ m_{1|a}' \cdot X_{1|a}^\top \bpsi_{1,j}^{\bbeta,2} \right] = 0, \quad \E \left[ m_{1|a}' \cdot X_{1|a}^\top \bpsi_{i,j}^{\bbeta,2} \right] = \E \left[ m_{1|a}' \cdot X_{1|a}^{\top} \right] \E [ \bpsi_{i,j}^{\bbeta,2} ],
\end{split}
\end{equation}
and
\begin{equation}
\label{eq:proof-bias-gc-suppl-2}
\begin{split}
\E \left[ m_{1|a}'' \cdot X_{1|a}^\top \bpsi^{\bbeta}_i \bpsi^{\bbeta\top}_i X_{1|a} \right] & = \E \left[ m_{1|a}'' \cdot X_{1|a}^\top B^{-1} M B^{-1}  X_{1|a} \right], \\
\E \left[ m_{1|a}'' \cdot X_{1|a}^\top \bpsi^{\bbeta}_i \bpsi^{\bbeta\top}_1 X_{1|a} \right] & = 0, \\
\E \left[ m_{1|a}'' \cdot X_{1|a}^\top \bpsi^{\bbeta}_1 \bpsi^{\bbeta\top}_i X_{1|a} \right] & = 0, \\
\E \left[ m_{1|a}'' \cdot X_{1|a}^\top \bpsi^{\bbeta}_i \bpsi^{\bbeta\top}_j X_{1|a} \right] & = 0.
\end{split}
\end{equation}
\end{lemma}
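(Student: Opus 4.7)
The plan is to exploit three ingredients: (i) the i.i.d.\ assumption, which makes $D_1$ independent of $D_i$ and $D_j$ whenever $i,j\ne 1$; (ii) the population moment condition $\E[U_i(\bbeta_0)]=0$, which gives $\E[\bpsi_i^{\bbeta}]=B^{-1}\E[U_i]=0$; and (iii) the explicit form of $\bpsi_{ij}^{\bbeta,2}$ derived in Appendix~\ref{append:proof-beta-hoif}, namely
\[
\bpsi_{ij}^{\bbeta,2} = B^{-1}\left\{\bigl(\DERIV{U_i}{}-\E[\DERIV{U_i}{}]\bigr)\bpsi_j^{\bbeta} + \tfrac{1}{2}\,\E[\mathrm{d}\DERIV{U_1}{}/\mathrm{d}\bbeta]\,(\bpsi_i^{\bbeta}\otimes\bpsi_j^{\bbeta})\right\}.
\]
Note that $\bpsi_i^{\bbeta}$ is a function of $D_i$ only, and $\bpsi_{ij}^{\bbeta,2}$ is a function of $D_i$ and $D_j$ only, because both $B$ and the expectation factors are deterministic.

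For \eqref{eq:proof-bias-gc-suppl-1}: For the first identity, $D_1$ is independent of $D_i$, so $\E[m_{1|a}'\cdot X_{1|a}^\top\bpsi_i^{\bbeta}]=\E[m_{1|a}'\cdot X_{1|a}^\top]\E[\bpsi_i^{\bbeta}]=0$ using (ii). For the second and third identities, I would condition on $D_1$. Consider $\E[\bpsi_{i,1}^{\bbeta,2}\mid D_1]$: the first piece becomes $B^{-1}\E[\DERIV{U_i}{}-\E[\DERIV{U_i}{}]\mid D_1]\bpsi_1^{\bbeta}=0$ since $D_i\perp D_1$, and the second piece is proportional to $\E[\bpsi_i^{\bbeta}\mid D_1]\otimes\bpsi_1^{\bbeta}=0$. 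An analogous argument applies to $\bpsi_{1,j}^{\bbeta,2}$, using $\E[\bpsi_j^{\bbeta}\mid D_1]=0$ in both pieces (with Kronecker-product factorization in the second piece). The fourth identity is simply independence of $(m_{1|a}',X_{1|a})$ from $\bpsi_{i,j}^{\bbeta,2}$ when $i,j\neq 1$.

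For \eqref{eq:proof-bias-gc-suppl-2}: The second and third identities follow by conditioning on $D_1$: for $i\ne 1$, the $D_i$-dependent factor $\bpsi_i^{\bbeta}$ has conditional mean $0$, so the whole expression vanishes. The fourth identity, with distinct $i,j\ne 1$, uses the full independence of $D_1,D_i,D_j$ and then $\E[\bpsi_i^{\bbeta}\bpsi_j^{\bbeta\top}]=\E[\bpsi_i^{\bbeta}]\E[\bpsi_j^{\bbeta\top}]=0$. For the first identity, independence of $D_1$ and $D_i$ lets me pull out $\E[\bpsi_i^{\bbeta}\bpsi_i^{\bbeta\top}]$, and this equals $B^{-1}\E[U_iU_i^\top]B^{-1}=B^{-1}MB^{-1}$ by the definitions in Section~\ref{sec:pre-gc}.

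There is no substantive obstacle here; this is a bookkeeping lemma whose content is that all cross-sample moments reduce via independence. The only mildly delicate point is the Kronecker-product step in identity 3 of \eqref{eq:proof-bias-gc-suppl-1}, which I would handle by noting that for any $D_1$-measurable matrix $C$ of matching dimension, $\E[C(\bpsi_1^{\bbeta}\otimes\bpsi_j^{\bbeta})\mid D_1]=C(\bpsi_1^{\bbeta}\otimes\E[\bpsi_j^{\bbeta}\mid D_1])=0$, which is a direct consequence of the mixed-product property $\E[\bpsi_1^{\bbeta}\otimes\bpsi_j^{\bbeta}\mid D_1]=\bpsi_1^{\bbeta}\otimes\E[\bpsi_j^{\bbeta}]$ under independence.
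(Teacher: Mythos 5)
Your proposal is correct and follows essentially the same route as the paper: the explicit second-order term from Appendix~\ref{append:proof-beta-hoif}, independence across $D_1, D_i, D_j$, the zero-mean property $\E[\bpsi_i^{\bbeta}]=0$, and $\E[\bpsi_i^{\bbeta\,\otimes 2}]=B^{-1}MB^{-1}$. The paper carries out the factorizations via trace identities and coordinatewise sums over $(s,t)$ where you condition on $D_1$ and invoke the Kronecker mixed-product property, but this is only a presentational difference.
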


\begin{proof}
\begin{align*}
\E \left[ m_{1|a}' \cdot X_{1|a}^\top \bpsi_i^{\bbeta} \right] & = \E \left[ m_{1|a}' \cdot X_{1|a}^\top \right] \E [ \bpsi_i^{\bbeta} ] = 0, \\ 
\E \left[ m_{1|a}' \cdot X_{1|a}^\top \bpsi_{i,1}^{\bbeta,2} \right] 
& = \E \left[ m_{1|a}' \cdot X_{1|a}^\top B^{-1} \left( \DERIV{U_i}{} - \E [ \DERIV{U_i}{} ] \right) \bpsi^{\bbeta}_1 \right] \\
& \quad + \frac{1}{2} \E \left[ m_{1|a}' \cdot X_{1|a}^\top B^{-1} \E [ \mathrm{d} \, \DERIV{U_1}{} / \mathrm{d} \, \bbeta ] ( \bpsi^{\bbeta}_i \otimes \bpsi^{\bbeta}_1 ) \right] \\ 
& = \E \left[ m_{1|a}' \tr \left( \bpsi^{\bbeta}_1 X_{1|a}^\top B^{-1} \left\{ \DERIV{U_i}{} - \E [ \DERIV{U_i}{} ] \right\} \right) \right] \\ 
& \quad + \frac{1}{2} \E \left[ \tr \left( B^{-1} \E [ \mathrm{d} \, \DERIV{U_1}{} / \mathrm{d} \, \bbeta ] \{ \bpsi^{\bbeta}_i \otimes \bpsi^{\bbeta}_1 \} X_{1|a}^\top \cdot m_{1|a}' \right) \right] \\ 
& = \tr \left( \E \left[ m_{1|a}' \cdot \bpsi^{\bbeta}_1 X_{1|a}^\top B^{-1} \right] \E \left[ \DERIV{U_i}{} - \E [ \DERIV{U_{i}}{} ] \right] \right) \\ 
& \quad + \frac{1}{2} \tr \left( B^{-1} \E [ \mathrm{d} \, \DERIV{U_1}{} / \mathrm{d} \, \bbeta ] \E \left[ \bpsi^{\bbeta}_i \otimes \{ \bpsi^{\bbeta}_1 X_{1|a}^\top \cdot m_{1|a}' \} \right] \right) \\ 
& = \frac{1}{2} \tr \left( B^{-1} \E [ \mathrm{d} \, \DERIV{U_1}{} / \mathrm{d} \, \bbeta ] \cdot \E [ \bpsi^{\bbeta}_i ] \otimes \E \left[ \bpsi^{\bbeta}_1 X_{1|a}^\top \cdot m_{1|a}' \right] \right) = 0, \\ 
\E \left[ m_{1|a}' \cdot X_{1|a}^\top \bpsi_{1,j}^{\bbeta,2} \right] 
& = \E \left[ m_{1|a}' \cdot X_{1|a}^\top B^{-1} \left( \DERIV{U_1}{} - \E [ \DERIV{U_1}{} ] \right) \bpsi^{\bbeta}_j \right] \\ 
& \quad + \frac{1}{2} \E \left[ m_{1|a}' \cdot X_{1|a}^\top B^{-1} \E [ \mathrm{d} \, \DERIV{U_1}{} / \mathrm{d} \, \bbeta ] ( \bpsi^{\bbeta}_1 \otimes \bpsi^{\bbeta}_j ) \right] \\
& = \E \left[ m_{1|a}' \cdot X_{1|a}^\top B^{-1} \left( \DERIV{U_1}{} - \E [ \DERIV{U_1}{} ] \right) \right] \E [ \bpsi^{\bbeta}_j ] \\ 
& \quad + \frac{1}{2} \SUM{s,t=1}{p} \E \left[ m_{1|a}' \cdot X_{1|a}^\top B^{-1} \E [ \DDERIV{U_1}{st} ] \cdot \psi^{\bbeta}_{1s} \psi^{\bbeta}_{jt} \right] \\
& = \frac{1}{2} \SUM{s,t=1}{p} \E \left[ m_{1|a}' \cdot X_{1|a}^\top B^{-1} \E [ \DDERIV{U_1}{st} ] \cdot \psi^{\bbeta}_{1s} \right] \E [ \psi^{\bbeta}_{jt} ] = 0, \\ 
\E \left[ m_{1|a}' \cdot X_{1|a}^\top \bpsi_{i,j}^{\bbeta,2} \right] 
& = \E \left[ m_{1|a}' \cdot X_{1|a}^\top \right] \E [ \bpsi_{ij}^{\bbeta,2} ]. 
\end{align*}

Next: 
\begin{align*}
\E \left[ m_{1|a}'' \cdot X_{1|a}^\top \bpsi^{\bbeta}_i \bpsi^{\bbeta\top}_i X_{1|a} \right] 
& = \E \left[ \tr \left( \bpsi^{\bbeta}_i \bpsi^{\bbeta\top}_i X_{1|a}  X_{1|a}^\top \cdot  m_{1|a}'' \right) \right] \\
& = \tr \left( \E \left[ \bpsi^{\bbeta}_i \bpsi^{\bbeta\top}_i X_{1|a}  X_{1|a}^\top \cdot m_{1|a}'' \right] \right) \\
& = \tr \left( \E \left[ \bpsi^{\bbeta}_i \bpsi^{\bbeta\top}_i \right] \E \left[ X_{1|a}  X_{1|a}^\top \cdot m_{1|a}'' \right] \right) \\ 
& = \tr \left( B^{-1} M B^{-1} \E \left[ X_{1|a} X_{1|a}^\top \cdot m_{1|a}'' \right] \right) \\ 
& = \E \left[ \tr \left( B^{-1} M B^{-1} \cdot X_{1|a} X_{1|a}^\top \cdot m_{1|a}'' \right) \right] \\   
& = \E \left[ m_{1|a}'' \cdot X_{1|a}^\top B^{-1} M B^{-1} X_{1|a} \right], \\
\E \left[ m_{1|a}'' \cdot X_{1|a}^\top \bpsi^{\bbeta}_i \bpsi^{\bbeta\top}_1 X_{1|a} \right] 
& = \E \left[ \tr \left( \bpsi^{\bbeta}_i \bpsi^{\bbeta\top}_1 X_{1|a} X_{1|a}^\top \cdot m_{1|a}'' \right) \right] \\ 
& = \tr \left( \E \left[ \bpsi^{\bbeta}_i \bpsi^{\bbeta\top}_1 X_{1|a} X_{1|a}^\top \cdot  m_{1|a}'' \right] \right) \\ 
& = \tr \left( \E [ \bpsi^{\bbeta}_i ] \E \left[ \bpsi^{\bbeta\top}_1 X_{1|a} X_{1|a}^\top \cdot m_{1|a}'' \right] \right) = 0, \\
\E \left[ m_{1|a}'' \cdot X_{1|a}^\top \bpsi^{\bbeta}_1 \bpsi^{\bbeta\top}_i X_{1|a} \right] 
& = \E \left[ m_{1|a}'' \cdot X_{1|a}^\top \bpsi^{\bbeta}_i \bpsi^{\bbeta\top}_1 X_{1|a} \right]^\top = 0, \\
\E \left[ m_{1|a}'' \cdot X_{1|a}^\top \bpsi^{\bbeta}_i \bpsi^{\bbeta\top}_j X_{1|a} \right] & = \E \left[ \tr \left( \bpsi^{\bbeta}_i \bpsi^{\bbeta\top}_j X_{1|a} X_{1|a}^\top \cdot m_{1|a}'' \right) \right] \\
& = \tr \left( \E \left[ \bpsi^{\bbeta}_i \bpsi^{\bbeta\top}_j X_{1|a} X_{1|a}^\top \cdot m_{1|a}'' \right] \right) \\ 
& = \tr \left( \E [ \bpsi^{\bbeta}_i ] \E [ \bpsi^{\bbeta\top}_j ] \E \left[ X_{1|a} X_{1|a}^\top \cdot m_{1|a}'' \right] \right) = 0.
\end{align*}
\end{proof}

\begin{lemma}
\label{lem:marginal-conditional}
The following hold for $i, j \neq 1$:
\begin{align}
\E \left[ m_{1|a}' \cdot X_{1|a}^\top \right] B^{-1} H_1 & = \E \left[ m_i' \cdot X_i^\top \bpsi_i^{\bbeta}  \middle\vert A_i = a \right], \label{eq:proof-bias-gc-suppl-3} \\
\E \left[ m_{1|a}' \cdot X_{1|a}^\top \right] B^{-1} H_2 & = \frac{1}{2} \ \E \left[ m_{i|a}'' \cdot X_{i|a}^\top B^{-1} M B^{-1} X_{i|a} \right], \label{eq:proof-bias-gc-suppl-4} \\
\E \left[ m_{1|a}' \cdot X_{1|a}^\top \right] B^{-1} H_3 & = \frac{1}{2} \ \E \left[ m_{i|a}'' \cdot X_{i|a}^\top B^{-1} X_{i|a} \right]. \label{eq:proof-bias-fc-suppl}
\end{align}
\end{lemma}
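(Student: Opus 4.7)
\medskip

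\noindent\textbf{Proof proposal.} All three identities have the same flavor: the left-hand side is a bilinear form in which $\E[m_{1|a}'\cdot X_{1|a}^\top]B^{-1}$ is deterministic and the right factor $H_\ast$ is itself an expectation over an i.i.d.\ copy. The plan is to pull the deterministic row vector through the outer expectation (rename the dummy index from $1$ to $i$, which is allowed since the $D_i$'s are i.i.d.), and then collapse $\E[m_{i|a}'\cdot X_{i|a}^\top]B^{-1}X_i$ using the key identity \eqref{eq:zhang}. This turns each statement into a conditional expectation given $A_i=a$, which I would then rewrite in the claimed form.

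Concretely, for \eqref{eq:proof-bias-gc-suppl-3} I first write
\[
\E[m_{1|a}'\cdot X_{1|a}^\top]B^{-1}H_1
=\E\!\left[m_i'(Y_i-m_i)\cdot X_i^\top B^{-1}X_i \cdot \E[m_{1|a}'\cdot X_{1|a}^\top]B^{-1}X_i\right],
\]
apply \eqref{eq:zhang} to replace the inner factor by $I(A_i=a)/\pi_a$, reduce the indicator/$\pi_a$ to a conditional expectation, and finally observe that $(Y_i-m_i)X_i^\top B^{-1}X_i \equiv X_i^\top\bpsi_i^{\bbeta}$, which delivers the right-hand side. For \eqref{eq:proof-bias-gc-suppl-4} and \eqref{eq:proof-bias-fc-suppl} I would carry out the same maneuver on $H_2$ and $H_3$ respectively: after pulling $\E[m_{1|a}'\cdot X_{1|a}^\top]B^{-1}$ inside and applying \eqref{eq:zhang}, the scalar $X_i^\top B^{-1}MB^{-1}X_i$ (resp.\ $X_i^\top B^{-1}X_i$) survives and the $(1/2)m_i''$ prefactor yields the required $(1/2)\E[\,\cdot\,|A_i=a]$. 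A last step replaces the conditional expectation given $A_i=a$ by the unconditional expectation over the potential-outcome variables indexed by $a$: under simple randomization, the conditional distribution of $(W_i,Y_i)$ given $A_i=a$ coincides with the unconditional distribution of $(W_i,Y_i(a))$, so $m_i$, $m_i''$, and $X_i$ conditioned on $A_i=a$ become $m_{i|a}$, $m_{i|a}''$, and $X_{i|a}$.

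I do not anticipate any real obstacle beyond bookkeeping: the main conceptual point is simply the ability to ``commute'' a deterministic linear functional with an i.i.d.\ expectation, followed by a single invocation of \eqref{eq:zhang}. The only mildly delicate step is the reindexing from the dummy subject $1$ (used implicitly inside $\E[m_{1|a}'\cdot X_{1|a}^\top]$) to the dummy subject $i$ inside $H_\ast$; this is legitimate because both expectations are taken over a single i.i.d.\ copy, so relabeling does not alter the value. I would therefore not belabor this in the write-up beyond a brief remark. The final cosmetic step for \eqref{eq:proof-bias-gc-suppl-3} is recognizing $m_i'X_i^\top\bpsi_i^{\bbeta}$ inside the conditional expectation, which follows directly from the definition $\bpsi_i^{\bbeta}=B^{-1}(Y_i-m_i)X_i$.
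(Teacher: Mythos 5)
Your proposal is correct and follows essentially the same route as the paper's own proof: pull the deterministic row vector $\E[m_{1|a}'\cdot X_{1|a}^\top]B^{-1}$ inside the expectation defining each $H_\ast$, collapse the resulting scalar factor via \eqref{eq:zhang} to $I(A_i=a)/\pi_a$, and read off the conditional expectation (recognizing $\bpsi_i^{\bbeta}$ for the first identity and passing to $X_{i|a}$ under simple randomization for the others). No gaps.
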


\begin{proof}
Using \eqref{eq:zhang}, we have the following results,
\begin{align*}
\E \left[ m_{1|a}' \cdot X_{1|a}^\top \right] B^{-1} H_1 
& = \E \left[ m_{1|a}' \cdot X_{1|a}^\top \right] B^{-1} \E \left[ m_i' ( Y_i - m_i ) \cdot X_i^\top B^{-1} X_i \cdot X_i \right] \\ 
& = \E \left[ m_i'( Y_i - m_i )  \cdot X_i^\top B^{-1} X_i \cdot \underbrace{ \E \left[ m_{1|a}' X_{1|a}^\top \right] B^{-1} X_i }_{ \eqref{eq:zhang} } \right] \\
& = \E \left[ m_i' ( Y_i - m_i ) \cdot X_i^\top B^{-1} X_i \cdot \frac{ I ( A_i = a ) }{ \pi_a } \right] \\ 
& = \E \left[ m_i' \cdot X_i^\top \underbrace{ B^{-1} X_i ( Y_i - m_i ) }_{= \, \bpsi_i^{\bbeta} }  \middle\vert A_i = a \right], \\ 
\E \left[ m_{1|a}' \cdot X_{1|a}^\top \right] B^{-1} H_2 
& = \frac{1}{2} \ \E \left[ m_{1|a}' X_{1|a}^\top \right] B^{-1} \E \left[ m_i'' \cdot X_i^\top B^{-1} M B^{-1} X_i \cdot X_i \right] \\
& = \frac{1}{2} \ \E \left[ m_i'' \cdot X_i^\top B^{-1} M B^{-1} X_i \cdot \underbrace{ \E \left[ m_{1|a}' X_{1|a}^\top \right] B^{-1} X_i }_{ \eqref{eq:zhang} } \right] \\
& = \frac{1}{2} \ \E \left[ m_i'' \cdot X_i^\top B^{-1} M B^{-1} X_i \cdot \frac{ I ( A_i = a ) }{ \pi_a } \right] \\
& = \frac{1}{2} \ \E \left[ m_i'' \cdot X_i^\top B^{-1} M B^{-1} X_i \middle\vert A_i = a \right]\\ 
& = \frac{1}{2} \ \E \left[ m_{i|a}'' \cdot X_{i|a}^\top B^{-1} M B^{-1} X_{i|a} \right], \\
\E \left[ m_{1|a}' \cdot X_{1|a}^\top \right] B^{-1} H_3 
& = \frac{1}{2} \ \E \left[ m_{1|a}' X_{1|a}^\top \right] B^{-1} \E \left[ m_i'' \cdot X_i^\top B^{-1} X_i \cdot X_i \right] \\
& = \frac{1}{2} \ \E \left[ m_i'' \cdot X_i^\top B^{-1} X_i \cdot \underbrace{ \E \left[ m_{1|a}' X_{1|a}^\top \right] B^{-1} X_i }_{ \eqref{eq:zhang} } \right] \\
& = \frac{1}{2} \ \E \left[ m_i'' \cdot X_i^\top B^{-1} X_i \cdot \frac{ I ( A_i = a ) }{\pi_a} \right] \\ 
& = \frac{1}{2} \ \E \left[ m_i'' \cdot X_i^\top B^{-1} X_i \middle\vert A_i = a \right] \\
    & = \frac{1}{2} \ \E \left[ m_{i|a}'' X_{i|a}^\top B^{-1} X_{i|a} \right].
\end{align*}
\end{proof}

\begin{lemma}
\label{lem:analysis}
Let $\hat{B} \coloneqq n^{-1} \SUM{i=1}{n} \hwm_i' \cdot X_i X_i^\top$ and $\tilde{B} \coloneqq n^{-1} \SUM{i=1}{n} \tilde{m}_i' \cdot X_i X_i^\top$. We have that
\begin{equation}
\label{eq:proof-beta-bread-est-inv}
\begin{split}
\norm{ \hat{B}^{-1} - B^{-1} }_{op} = O_{\sp} ( n^{-1/2} ), \\
\norm{ \tB^{-1} - B^{-1} }_{op} = O_\sp ( n^{-1/2} ).
\end{split}
\end{equation}
Furthermore, the following hold:
\begin{align}
& \hat{\bpsi}_{i}^{\bbeta} - B^{-1} X_{i} (Y_{i} - m_{i}) = O_{\sp} ( n^{-1 /2} ), \label{eq:proof-gob-mle-suppl-1} \\
& \SUM{i=1}{n} \hat{h}_{ii} \hat{\bpsi}_{i}^{\bbeta} - B^{-1} H_{1} = O_{\sp} ( n^{-1/2} ), \label{eq:proof-gob-mle-suppl-2} \\
& \frac{1}{2} \SUM{i=1}{n} \tilde{h}_{ii} \cdot \frac{\tilde{m}_{i}''}{\tilde{m}_{i}'} \cdot \tB^{-1} X_i - B^{-1} H_3 = O_{\sp} (n^{-1/2}). \label{eq:proof-gob-mle-suppl-3}
\end{align}
\end{lemma}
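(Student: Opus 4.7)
The plan is to proceed item by item, exploiting the fact that $\hbbeta - \bbeta_0 = O_\sp(n^{-1/2})$ (and likewise for $\tbbeta$, see Appendix~\ref{append:proof-beta-hoif}) together with the moment assumptions on $Y_i, X_i$ and the smoothness of $m$. The unifying identity I will use throughout is the resolvent formula $A^{-1} - C^{-1} = A^{-1}(C - A)C^{-1}$, applied with $A = \hB$ (or $\tB$) and $C = B$.

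For \eqref{eq:proof-beta-bread-est-inv}, first write $\hB - B = (\hB - \eB_0) + (\eB_0 - B)$ with $\eB_0 \coloneqq n^{-1}\SUM{i=1}{n} m_i' X_i X_i^\top$. The second summand is $O_\sp(n^{-1/2})$ by the standard CLT under Assumption~\ref{as:cov}. For the first summand, a first-order Taylor expansion of $m'(X_i^\top \bbeta)$ about $\bbeta_0$ yields $\hwm_i' - m_i' = m''(X_i^\top \bar{\bbeta}_i) \cdot X_i^\top(\hbbeta - \bbeta_0)$ for some $\bar\bbeta_i$ on the segment between $\hbbeta$ and $\bbeta_0$, so that $\hB - \eB_0$ is bounded in operator norm by $\|\hbbeta-\bbeta_0\| \cdot n^{-1}\SUM{i}{} |m''(X_i^\top\bar\bbeta_i)| \cdot \|X_i\|^3 = O_\sp(n^{-1/2})$, using Assumption~\ref{as:link} (continuity of $m''$ on a neighborhood of $\bbeta_0$) and the fourth-moment bound on $X_i$. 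Then $\hB^{-1} - B^{-1} = -\hB^{-1}(\hB - B)B^{-1}$, combined with $\hB^{-1} = B^{-1} + o_\sp(1)$, delivers the claim; $\tB$ is handled identically.

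For \eqref{eq:proof-gob-mle-suppl-1}, decompose
\[
\hbpsi_i^\bbeta - B^{-1} X_i(Y_i - m_i) = (\hB^{-1} - B^{-1}) X_i(Y_i - m_i) + \hB^{-1} X_i(m_i - \hwm_i),
\]
and apply \eqref{eq:proof-beta-bread-est-inv} together with the Taylor bound $m_i - \hwm_i = -m_i' X_i^\top(\hbbeta-\bbeta_0) + O_\sp(n^{-1})$ to conclude both terms are $O_\sp(n^{-1/2})$. For \eqref{eq:proof-gob-mle-suppl-2}, substitute $\hh_{ii} = n^{-1}\hwm_i' X_i^\top \hB^{-1} X_i$ and $\hbpsi_i^\bbeta = \hB^{-1} X_i(Y_i - \hwm_i)$ to write
\[
\SUM{i=1}{n}\hh_{ii}\hbpsi_i^\bbeta = \hB^{-1} \cdot \frac{1}{n}\SUM{i=1}{n} \hwm_i'(Y_i - \hwm_i)\, X_i^\top \hB^{-1} X_i \cdot X_i.
\]
Replacing $\hB^{-1}, \hwm_i', \hwm_i$ by their population analogues with errors of size $O_\sp(n^{-1/2})$ (using again the moment and smoothness assumptions), and invoking the CLT for the sample mean of $m_i'(Y_i - m_i) X_i^\top B^{-1} X_i \cdot X_i$ with population mean $H_1$, yields \eqref{eq:proof-gob-mle-suppl-2}. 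The statement \eqref{eq:proof-gob-mle-suppl-3} follows from an entirely parallel argument: $\th_{ii}\cdot \twm_i''/\twm_i' \cdot \tB^{-1} X_i = n^{-1} \twm_i'' X_i^\top \tB^{-1} X_i \cdot \tB^{-1} X_i$, after the same substitutions, converges via the CLT to $B^{-1} \E[m_i'' X_i^\top B^{-1} X_i \cdot X_i] = 2 H_3$, with residual $O_\sp(n^{-1/2})$.

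The main technical obstacle is handling the nonlinearity $m'$ and $m''$ uniformly: Assumption~\ref{as:link} only gives continuity, not boundedness, of these derivatives, and the Taylor remainders involve $|m''(X_i^\top \bar\bbeta_i)|$ at the random point $\bar\bbeta_i$. I will manage this by restricting attention to the event $\{\|\hbbeta - \bbeta_0\| \le \varepsilon\}$, which has probability tending to one; on this event $\bar\bbeta_i$ lies in a fixed compact neighborhood, so $|m''|$ and $|m'''|$ along the line segment are bounded by deterministic constants times a polynomial in $\|X_i\|$, and Assumption~\ref{as:cov} then controls the relevant moments uniformly in $n$. With this localization, the CLT and Markov's inequality deliver the $O_\sp(n^{-1/2})$ rates cleanly throughout.
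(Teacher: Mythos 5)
Your proof is correct and follows essentially the same route as the paper's: the resolvent identity together with a CLT for the bread matrix gives \eqref{eq:proof-beta-bread-est-inv}, the same two-term decomposition handles $\hbpsi_i^{\bbeta} - B^{-1}X_i(Y_i-m_i)$, and the remaining two displays are obtained by the same plug-in-and-CLT substitutions; your localization on the event $\{\lVert\hbbeta-\bbeta_0\rVert\le\varepsilon\}$ simply makes explicit a uniformity step that the paper leaves implicit. One cosmetic slip: in your last paragraph the limit of $n^{-1}\sum_i \twm_i''\, X_i^\top\tB^{-1}X_i\cdot\tB^{-1}X_i$ is $2B^{-1}H_3$ rather than $2H_3$, after which the prefactor $1/2$ yields $B^{-1}H_3$ as required.
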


\begin{proof}
Since $\hwm_i = m_i + O_{\sp} ( n^{-1/2} )$ and $p$ is fixed, similar to \eqref{eq:append-proof-aux-inv}, we have that 
\begin{align*}
\norm{ \hat{B}^{-1} - B^{-1} }_{op} = \left\Vert \left( \frac{1}{n} \SUM{i=1}{n} \hwm_i' \cdot X_i X_i^\top \right)^{-1} - \left( \E [m_i' \cdot X_i X_i^\top] \right)^{-1} \right\Vert_{op} = O_{\sp} ( n^{-1/2} ).
\end{align*}

Next, we have 
\begin{align*}
& \ \hat{\bpsi}_i^{\bbeta} - B^{-1} X_{i} (Y_i - m_i) \\
= & \ \hat{B}^{-1} X_i (Y_i - \hat{m}_i) - B^{-1} X_{i} (Y_i - m_i) \\
= & \ \hat{B}^{-1} X_i (m_i - \hat{m}_i) + (\hat{B}^{-1} - B^{-1}) X_i (Y_i - m_i) \\
= & \ O_{\sp} (n^{-1/2}),
\end{align*}
where the last line follows from triangle inequality and \eqref{eq:proof-beta-bread-est-inv}.

Then by the same argument,
\begin{align*}
& \ \SUM{i=1}{n} \hat{h}_{ii} \hat{\bpsi}^{\bbeta} - B^{-1} H_1 \\
= & \ \frac{1}{n} \SUM{i=1}{n} \hat{m}_i' \cdot X_i^\top \hat{B}^{-1} X_i \cdot \hat{B}^{-1} X_i (Y_i - \hat{m}_i) - \E \left[ m' \cdot X^\top B^{-1} X_i \cdot B^{-1} X_i ( Y_i - m_i ) \right] \\
= & \ O_{\sp} (n^{-1/2}).
\end{align*}

Finally, analogous to \eqref{eq:proof-beta-bread-est-inv}, we have $\Vert \tB^{-1} - B^{-1} \Vert_{op} = O_\sp (n^{-1/2})$ and thus
\begin{align*}
& \ \frac{1}{2} \SUM{i=1}{n} \tilde{h}_{ii} \cdot \frac{\twm_i''}{\twm_i'} \cdot \tB^{-1} X_i - B^{-1} H_3 \\
= & \ \frac{1}{2} \left\{ \frac{1}{n} \SUM{i=1}{n} \twm_i'' \cdot X_i^\top \tB^{-1} X_i \cdot \tB^{-1} X_i - \E \left[ m_i'' \cdot X_i^{\top} B^{-1} X_i \cdot B^{-1} X_i \right] \right\} \\
= & \ O_{\sp} ( n^{-1/2} ).
\end{align*}
\end{proof}

\clearpage

\section{R Code Demo}  \label{append:code}

\subsection{Functions for variance estimation and statistical inference}

\begin{lstlisting}

# for gcomp_func.R file
# functions for variance estimation and statistical inference 
# (differences/ratios) of g-computation and debiased gOB estimators

vcov_mu_sandwich <- function(mu, predict, y, x, beta, inv.bread = NULL) {
  
  # Variance estimation for the vector of treatment-specific means (empirical IF)
  # mu: est per arm (k arms)
  # predict: predicted values of each subject for each arm
  # y: observed outcomes
  # x: design matrix
  # beta: est of nuisance paramters
  # inv.bread: inverse of bread matrix
  
  mu_deriv <- sapply(0:1, function(a) {   # derivates used in the variance estimation
    xa <- x; 
    xa[, 2] <- a;  
    colMeans(predict[, a+1] * (1 - predict[, a+1]) * xa)})
  y_fitted <- c(plogis(x %*% beta))
  
  if (is.null(inv.bread)) {
    bread <- t(x) %*% diag(y_fitted * (1 - y_fitted)) %*% x / length(y)
    inv.bread <- solve(bread) # solve() / qr.solve() : bread matrix inversion - naive or QR
  }
  
  if_beta <- (x * (y - y_fitted)) %*% inv.bread   # influence function for beta

  # influence function for mu
  if_mu <- if_beta %*% mu_deriv + predict - 
    matrix(mu, nrow = length(y), ncol = 2, byrow = TRUE)  
  vcov_mu <- var(if_mu) / length(y)              # variance 
  
  return(list(vcov = vcov_mu, ifunc = if_mu, var.wm = var(if_beta) / length(y), 
              ifunc.wm = if_beta, y.hat = y_fitted, 
              deriv.mu = mu_deriv, bread.inv = inv.bread))
}

vcov_mu_eif <- function(mu, predict, y, arm, adjust = 0) { 
  
  # Variance estimation for the vector of treatment-specific means (theoretical IF)
  # mu: est per arm (k arms)
  # predict: predicted values of each subject for each arm
  # y: observed outcomes
  # arm: treatment groups (starting from 0)
  # adjust = hatvalues for small-sample bias adjustment (0 for no adjustment)
  
  n <- length(y)
  nA <- table(arm)
  eif <- sapply(0:1, function(a) 
    (1 + adjust) * ifelse(arm == a, 1, 0) / (nA[a+1]/length(y)) * 
      (y - predict[, a+1]) + predict[, a+1] - mu[a+1]
  )
  
  vcov_mu <- var(eif) / length(y)
  
  return(list(vcov = vcov_mu, ifunc = eif))
}

test_diff <- function(mu, v, n, null = 0, level = 0.95, upper = TRUE) {
  
  # Statistical inference for the diference of two means
  # mu: gcomp for control and tested arms
  # v: variance of estimators for treatment-specific means 
  # n: sample sizes
  # null: null value
  # level: confidence level
  # upper: direction of alternative value
  
  delta <- ifelse(upper, 1, -1) * (mu[2] - mu[1])
  v_delta <- v[1, 1] - 2 * v[1, 2] + v[2, 2]              
  
  # Wald test
  
  z_wald <- (delta - null) / sqrt(v_delta)
  p_wald <- 1 - pnorm(z_wald)
  ci_wald <- delta + sqrt(v_delta) * qnorm(c((1-level)/2, (1+level)/2)) 
  
  # score test
  
  z_score <- delta / sqrt(v_delta + (delta - null)^2 / n)
  p_score <- 1 - pnorm(z_score)
  ci_score <- delta + sqrt(v_delta * qchisq(level, df = 1) / 
                             (1 - qchisq(level, df = 1) / n)) * c(-1, 1)
  
  return(list( 
    delta = delta, var = v_delta, 
    ward = list(pval = p_wald, ci = ci_wald, z = z_wald),
    score = list(pval = p_score, ci = ci_score, z = z_score),
    null = null, level = level))
}

test_ratio <- function(mu, v, n, null = 1, level = 0.95, upper = TRUE) {
  
  # Statistical inference for the ratio of two means
  # mu: gcomp for control and tested arms
  # v: variance of estimators for treatment-specific means 
  # n: sample sizes
  # null: null value
  # level: confidence level
  # upper: direction of alternative value
  
  # Wald test
  
  log_delta <- log(mu[2]/mu[1]) - log(null)
  v_log_delta <- v[1, 1] / mu[1]^2 - 2 * v[1, 2] / (mu[1] * mu[2]) + v[2, 2] / mu[2]^2
  
  z_wald <- ifelse(upper, 1, -1) * log_delta / sqrt(v_log_delta)
  p_wald <- 1 - pnorm(z_wald)
  ci_wald <- exp(log_delta + sqrt(v_log_delta) * qnorm(c((1-level)/2, (1+level)/2)))
  
  # score test
  
  z_score <- ifelse(upper, 1, -1) * (mu[2] - mu[1] * null) / 
    sqrt(v[1, 1] * null^2 - 2 * v[1, 2] * null + v[2, 2] + 
           (mu[2] - mu[1] * null)^2 / n)
  p_score <- 1 - pnorm(z_score)  # 1-sided
  
  a <- (1 - qchisq(level, df = 1) * (v[1, 2] / (mu[1] * mu[2]) + 1 / n)) / 
    (1 - qchisq(level, df = 1) * (v[1, 1] / mu[1]^2 + 1 / n))         # for ci
  b <- (1 - qchisq(level, df = 1) * (v[2, 2] / mu[2]^2 + 1 / n)) / 
    (1 - qchisq(level, df = 1) * (v[1, 1] / mu[1]^2 + 1 / n))         # for ci
  ci_score <- mu[2] / mu[1] * (a + sqrt(a^2 - b) * c(-1, 1)) 
  
  return(list( 
    delta = mu[2]/mu[1], varLog = v_log_delta, 
    ward = list(pval = p_wald, ci = ci_wald, z = z_wald),
    score = list(pval = p_score, ci = ci_score, z = z_score),
    null = null, level = level))
}

\end{lstlisting}

\subsection{R code demo for debiased gOB estimators with MLE}

\begin{lstlisting}

# R code demo for g-computation and debiased gOB estimators with MLE
# pooled working models

library(dplyr)
source("./gcomp_func.R")

# simulation for a hypothetical trial of N = 60 with pi1 = 25% and pi2 = 60%

n <- 60
p_max <- 10
betaW <- c(rep(sqrt(0.8 * log(5)^2/4), 4), rep(sqrt(0.2 * log(5)^2/6), 6))
betaA <- c(-1.5836, 0.5923)

set.seed(12345)

df_sim <- data.frame(Y = rep(0, n), A = rep(0, n))

for (bsvar in paste0("W", 1:p_max)) {
  df_sim[, bsvar] <- rnorm(n, 0)
}

df_sim$A <- sample(rep(0:1, n/2), size = n, replace = FALSE)

df_sim$Y <- mapply(function(a, cum) {
  rbinom(n = 1, size = 1, prob = plogis(betaA[a+1]+cum))
}, df_sim$A, tcrossprod(as.matrix(df_sim[, -c(1, 2)]), matrix(betaW, ncol = p_max)))

df_sim[, 3:6] <- df_sim[, 3:6] + 5          # covariates are not centered at zeros
df_sim[, 7:12] <- abs(df_sim[, 7:12]) + 5   # ensure all working models are wrong

# unadjsted analysis 

nA <- table(df_sim$A)
p_unadj <- tapply(df_sim$Y, df_sim$A, mean)
unadj<- p_unadj[2] - p_unadj[1]  
se_unadj <- sqrt(p_unadj[2] * (1 - p_unadj[2]) / nA[2] + 
                   p_unadj[1] * (1 - p_unadj[1]) / nA[1])  # se

# g-computation (adjust for W1-W4)

wm <- glm(Y ~ ., data = df_sim[, 1:(2+4)], family = binomial, x = TRUE,
          control = list(epsilon = 1e-06, maxit = 200))

predict_gc <- sapply(0:1, function(a) {  
  dmat_a <- wm$x; 
  dmat_a[, 2] <- a; 
  plogis(dmat_a %*% coef(wm))})

mu_gc <- colMeans(predict_gc) 
vhat <- hatvalues(wm)
vcov_gc <- vcov_mu_sandwich(mu_gc, predict_gc, wm$y, wm$x, coef(wm), n * vcov(wm)) # unadj var
bcv_mu <- vcov_mu_eif(mu_gc, predict_gc, wm$y, wm$x[, 2], adjust = vhat)  # adj var
# this is for the risk difference; use test_ratio() for the risk ratio
d_gc_bcv <- test_diff(mu_gc, bcv_mu$vcov, n) 

# gOB-MLE(C1)

beta_c1 <- coef(wm) + colMeans(vhat * vcov_gc$ifunc.wm)

predict_bc1 <- sapply(0:1, function(a) { 
  dmat_a <- wm$x; 
  dmat_a[, 2] <- a; 
  plogis(dmat_a %*% beta_c1)})

mu_bc1gob <- colMeans(cbind(
  ifelse(df_sim$A == 0, wm$y, predict_bc1[, 1]), 
  ifelse(df_sim$A == 1, wm$y, predict_bc1[, 2])))
bcv_bc1gob <- vcov_mu_eif(mu_bc1gob, predict_bc1, wm$y, wm$x[, 2], adjust = vhat)
# this is for the risk difference; use test_ratio() for the risk ratio
d_bc1gob_bcv <- test_diff(mu_bc1gob, bcv_bc1gob$vcov, n)

# gOB-MLE(C2)

beta_c2 <- sapply(1:n, function(ii) {
  coef(wm) - vcov_gc$ifunc.wm[ii, ] / n + colMeans(vhat * vcov_gc$ifunc.wm)
})

predict_bc2 <- sapply(0:1, function(a) { 
  dmat_a <- wm$x;
  dmat_a[, 2] <- a;
  plogis(diag(dmat_a %*% beta_c2))})

mu_bc2gob <- colMeans(cbind(
  ifelse(df_sim$A == 0, wm$y, predict_bc2[, 1]), 
  ifelse(df_sim$A == 1, wm$y, predict_bc2[, 2])))
bcv_bc2gob <- vcov_mu_eif(mu_bc2gob, predict_bc2, wm$y, wm$x[, 2], adjust = vhat)
# this is for the risk difference; use test_ratio() for the risk ratio
d_bc2gob_bcv <- test_diff(mu_bc2gob, bcv_bc2gob$vcov, n)

# output 

c(unadj, d_gc_bcv$delta, d_bc1gob_bcv$delta, d_bc2gob_bcv$delta)
c(se_unadj^2, d_gc_bcv$var, d_bc1gob_bcv$var, d_bc2gob_bcv$var)

\end{lstlisting}

\subsection{R code demo for debiased gOB estimators with FC}

\begin{lstlisting}

# R code demo for debiased gOB estimators with FC
# pooled working models

library(dplyr)
library(brglm2)
source("./gcomp_func.R")

# simulation for a hypothetical trial of N = 60 with pi1 = 25% and pi2 = 60%

n <- 60
p_max <- 10
betaW <- c(rep(sqrt(0.8 * log(5)^2/4), 4), rep(sqrt(0.2 * log(5)^2/6), 6))
betaA <- c(-1.5836, 0.5923)

set.seed(12345)

df_sim <- data.frame(Y = rep(0, n), A = rep(0, n))

for (bsvar in paste0("W", 1:p_max)) {
  df_sim[, bsvar] <- rnorm(n, 0)
}

df_sim$A <- sample(rep(0:1, n/2), size = n, replace = FALSE)

df_sim$Y <- mapply(function(a, cum) {
  rbinom(n = 1, size = 1, prob = plogis(betaA[a+1]+cum))
}, df_sim$A, tcrossprod(as.matrix(df_sim[, -c(1, 2)]), matrix(betaW, ncol = p_max)))

df_sim[, 3:6] <- df_sim[, 3:6] + 5          # covariates are not centered at zeros
df_sim[, 7:12] <- abs(df_sim[, 7:12]) + 5   # ensure all working models are wrong

# unadjsted analysis 

nA <- table(df_sim$A)
p_unadj <- tapply(df_sim$Y, df_sim$A, mean)
unadj<- p_unadj[2] - p_unadj[1]  
se_unadj <- sqrt(p_unadj[2] * (1 - p_unadj[2]) / nA[2] + 
                   p_unadj[1] * (1 - p_unadj[1]) / nA[1])  # se

# gOB-FC(C0) (adjust for W1-W4)

wm <- glm(Y ~ ., data = df_sim[, 1:(2+4)], family = binomial, x = TRUE,
          method = "brglmFit", type = "MPL_Jeffreys", 
          control = list(epsilon = 1e-06, maxit = 1200))

predict_gc <- sapply(0:1, function(a) {  
  dmat_a <- wm$x; 
  dmat_a[, 2] <- a; 
  plogis(dmat_a %*% coef(wm))})
mu_gc <- colMeans(predict_gc)
vcov_gc <- vcov_mu_sandwich(mu_gc, predict_gc, wm$y, wm$x, coef(wm), n * vcov(wm)) # unadj var
vhat <- hatvalues(wm)

beta_c <- coef(wm) +
  colMeans(vhat * (- (0.5 - vcov_gc$y.hat) * wm$x %*% (n * vcov(wm))))

predict_bc <- sapply(0:1, function(a) {   # individual prediction for each arm: n * 2
  dmat_a <- wm$x;
  dmat_a[, 2] <- a;
  plogis(dmat_a %*% beta_c)})

mu_bcgob <- colMeans(cbind(
  ifelse(df_sim$A == 0, wm$y, predict_bc[, 1]), 
  ifelse(df_sim$A == 1, wm$y, predict_bc[, 2])))

bcv_bcgob <- vcov_mu_eif(mu_bcgob, predict_bc, wm$y, wm$x[, 2], adjust = vhat)
# this is for the risk difference; use test_ratio() for the risk ratio
d_bcgob_bcv <- test_diff(mu_bcgob, bcv_bcgob$vcov, n)

# gOB-FC(C1)

beta_c1 <- coef(wm) + 
  colMeans(vhat * (vcov_gc$ifunc.wm - (0.5 - vcov_gc$y.hat) * wm$x %*% (n * vcov(wm)))) 

predict_bc1 <- sapply(0:1, function(a) { 
  dmat_a <- wm$x; 
  dmat_a[, 2] <- a; 
  plogis(dmat_a %*% beta_c1)})

mu_bc1gob <- colMeans(cbind(
  ifelse(df_sim$A == 0, wm$y, predict_bc1[, 1]), 
  ifelse(df_sim$A == 1, wm$y, predict_bc1[, 2])))
bcv_bc1gob <- vcov_mu_eif(mu_bc1gob, predict_bc1, wm$y, wm$x[, 2], adjust = vhat)
# this is for the risk difference; use test_ratio() for the risk ratio
d_bc1gob_bcv <- test_diff(mu_bc1gob, bcv_bc1gob$vcov, n)

# gOB-FC(C2)

beta_c2 <- sapply(1:n, function(ii) {
  coef(wm) - vcov_gc$ifunc.wm[ii, ] / n + 
    colMeans(vhat * (vcov_gc$ifunc.wm - (0.5 - vcov_gc$y.hat) * wm$x %*% (n * vcov(wm)))) 
})

predict_bc2 <- sapply(0:1, function(a) { 
  dmat_a <- wm$x;
  dmat_a[, 2] <- a;
  plogis(diag(dmat_a %*% beta_c2))})

mu_bc2gob <- colMeans(cbind(
  ifelse(df_sim$A == 0, wm$y, predict_bc2[, 1]), 
  ifelse(df_sim$A == 1, wm$y, predict_bc2[, 2])))
bcv_bc2gob <- vcov_mu_eif(mu_bc2gob, predict_bc2, wm$y, wm$x[, 2], adjust = vhat)
# this is for the risk difference; use test_ratio() for the risk ratio
d_bc2gob_bcv <- test_diff(mu_bc2gob, bcv_bc2gob$vcov, n)

# output 

c(unadj, d_bcgob_bcv$delta, d_bc1gob_bcv$delta, d_bc2gob_bcv$delta)
c(se_unadj^2, d_bcgob_bcv$var, d_bc1gob_bcv$var, d_bc2gob_bcv$var)

\end{lstlisting}

\subsection{R code demo for estimators with stratified working models}

\begin{lstlisting}

# R code demo for stratified working models
# small-sample bias adjustment for variance estimation

library(dplyr)
library(brglm2)
source("./gcomp_func.R")

# simulation for a hypothetical trial of N = 200 with pi1 = 10% and pi2 = 25%

n <- 200
p_max <- 35 # adjust for 5 - 35
betaW <- rep(sqrt(log(25)^2/p_max), p_max) 
betaA <- c(-4.7173, -2.4760)

set.seed(12345)

df_sim <- data.frame(A = rep(0, n), Y = rep(0, n))

for (bsvar in paste0("W", 1:p_max)) {
  df_sim[, bsvar] <- rnorm(n)
}

df_sim$A <- sample(rep(0:1, n/2), size = n, replace = FALSE)

df_sim$Y <- mapply(function(a, cum) {
  rbinom(n = 1, size = 1, prob = plogis(betaA[a+1]+cum))
}, df_sim$A, tcrossprod(as.matrix(df_sim[, -c(1, 2)]), matrix(betaW, ncol = p_max)))

df_sim[, 3:32] <- df_sim[, 3:32] + 5   
df_sim[, 33:37] <- abs(df_sim[, 33:37]) + 5# ensure all working models are wrong

# unadjsted analysis 

nA <- table(df_sim$A)
p_unadj <- tapply(df_sim$Y, df_sim$A, mean)
unadj<- p_unadj[2] - p_unadj[1]  
se_unadj <- sqrt(p_unadj[2] * (1 - p_unadj[2]) / nA[2] + 
                   p_unadj[1] * (1 - p_unadj[1]) / nA[1])  # se

# GC-MLE 

ls_wm_ml <- lapply(0:1, function(a)  # adjust for W1-W10
  glm(Y ~ ., data = df_sim[df_sim$A == a, 2:(2+10)], family = binomial, x = TRUE, 
      control = list(epsilon = 1e-06, maxit = 200)))
dmat <- as.matrix(cbind(1, df_sim[, 3:(10+2)]))  # design matrix

predict_gc <- sapply(1:2, function(a) 
  plogis(dmat %*% coef(ls_wm_ml[[a]])))

mu_gc <- colMeans(predict_gc)   

vhat_ml <- rep(0, n)
vhat_ml[df_sim$A == 0] <- hatvalues(ls_wm_ml[[1]])
vhat_ml[df_sim$A == 1] <-  hatvalues(ls_wm_ml[[2]])

bcv_mu <- vcov_mu_eif(mu_gc, predict_gc, df_sim$Y, df_sim$A, adjust = vhat_ml)
# this is for the risk difference; use test_ratio() for the risk ratio
d_gc_bcv <- test_diff(mu_gc, bcv_mu$vcov, n)  # 

# gOB-MLE(C1)

if_beta_ml <- lapply(1:2, function(a) {    # influence function for beta
  wm <- ls_wm_ml[[a]];
  (wm$x * (wm$y - fitted(wm))) %*% (nobs(wm) * vcov(wm))})

beta_c_ml <- lapply(1:2, function(a) 
  coef(ls_wm_ml[[a]]) + colMeans(vhat_ml[df_sim$A == (a-1)] * if_beta_ml[[a]]))

predict_bc_ml <- sapply(1:2, function(a)    # individual prediction for each arm: n * 2
  plogis(dmat %*% beta_c_ml[[a]]))

mu_bcgob_ml <- colMeans(cbind(
  ifelse(df_sim$A == 0, df_sim$Y, predict_bc_ml[, 1]), 
  ifelse(df_sim$A == 1, df_sim$Y, predict_bc_ml[, 2])))

bcv_bcgob_ml <- vcov_mu_eif(mu_bcgob_ml, predict_bc_ml, df_sim$Y, df_sim$A, adjust = vhat_ml)
# this is for the risk difference; use test_ratio() for the risk ratio
d_bcgob_bcv_ml <- test_diff(mu_bcgob_ml, bcv_bcgob_ml$vcov, n)

# gOB-FC(C0)

ls_wm_fc <- lapply(0:1, function(a)   # adjust for W1-W10
  glm(Y ~ ., data = df_sim[df_sim$A == a, 2:(2+10)], family = binomial, x = TRUE,
      method = "brglmFit", type = "MPL_Jeffreys", 
      control = list(epsilon = 1e-06, maxit = 5000)))

vhat_fc <- rep(0, n)
vhat_fc[df_sim$A == 0] <- hatvalues(ls_wm_fc[[1]])
vhat_fc[df_sim$A == 1] <-  hatvalues(ls_wm_fc[[2]])

beta_c0_fc <- lapply(1:2, function(a) 
  coef(ls_wm_fc[[a]]) + colMeans(
    vhat_fc[df_sim$A == (a-1)] * 
      (- (0.5 - fitted(ls_wm_fc[[a]])) * 
         ls_wm_fc[[a]]$x %*% (nobs(ls_wm_fc[[a]]) * vcov(ls_wm_fc[[a]])))))

predict_bc0_fc <- sapply(1:2, function(a)    # individual prediction for each arm: n * 2
  plogis(dmat %*% beta_c0_fc[[a]]))

mu_bc0gob_fc <- colMeans(cbind(
  ifelse(df_sim$A == 0, df_sim$Y, predict_bc0_fc[, 1]), 
  ifelse(df_sim$A == 1, df_sim$Y, predict_bc0_fc[, 2])))  

bcv_bc0gob_fc <- vcov_mu_eif(mu_bc0gob_fc, predict_bc0_fc, df_sim$Y, df_sim$A, adjust = vhat_fc)
# this is for the risk difference; use test_ratio() for the risk ratio
d_bc0gob_bcv_fc <- test_diff(mu_bc0gob_fc, bcv_bc0gob_fc$vcov, n)

# gOB-FC(C1)

if_beta_fc <- lapply(1:2, function(a) {    # influence function for beta
  wm <- ls_wm_fc[[a]];
  (wm$x * (wm$y - fitted(wm))) %*% (nobs(wm) * vcov(wm))})

beta_c_fc <- lapply(1:2, function(a) 
  coef(ls_wm_fc[[a]]) + colMeans(
    vhat_fc[df_sim$A == (a-1)] * 
      (if_beta_fc[[a]] - (0.5 - fitted(ls_wm_fc[[a]])) * 
         ls_wm_fc[[a]]$x %*% (nobs(ls_wm_fc[[a]]) * vcov(ls_wm_fc[[a]])))))

predict_bc_fc <- sapply(1:2, function(a)    # individual prediction for each arm: n * 2
  plogis(dmat %*% beta_c_fc[[a]]))

mu_bcgob_fc <- colMeans(cbind(
  ifelse(df_sim$A == 0, df_sim$Y, predict_bc_fc[, 1]), 
  ifelse(df_sim$A == 1, df_sim$Y, predict_bc_fc[, 2])))
bcv_bcgob_fc <- vcov_mu_eif(mu_bcgob_fc, predict_bc_fc, df_sim$Y, df_sim$A, adjust = vhat_fc)
# this is for the risk difference; use test_ratio() for the risk ratio
d_bcgob_bcv_fc <- test_diff(mu_bcgob_fc, bcv_bcgob_fc$vcov, n)

# output 
 
c(unadj, d_gc_bcv$delta, d_bcgob_bcv_ml$delta, d_bc0gob_bcv_fc$delta, d_bcgob_bcv_fc$delta)
c(se_unadj^2, d_gc_bcv$var, d_bcgob_bcv_ml$var, d_bc0gob_bcv_fc$var, d_bcgob_bcv_fc$var)

\end{lstlisting}

\end{document}